\newcommand{\prob}[1]{\textsc{#1}}
\title{Downward self-reducibility in the total function polynomial hierarchy}
\author{
    Karthik Gajulapalli\thanks{Georgetown University. Email: \email{kg816@georgetown.edu}. Supported by NSF grant CCF-2338730.}
    \and
    Surendra Ghentiyala\thanks{Cornell University. Email: \email{sg974@cornell.edu}. This work is supported in part by the NSF under Grants Nos.~CCF-2122230 and CCF-2312296, a Packard Foundation Fellowship, and a generous gift from Google.} 
    \and
    Zeyong Li\thanks{National University of Singapore. Email: \email{zeyong@u.nus.edu}. Supported by NRF grant NRF-NRFI09-0005.}
    \and
    Sidhant Saraogi\thanks{Georgetown University. Email: \email{ss456@georgetown.edu}.}
}
\date{\today}
\begin{document}
\pagenumbering{roman}
\maketitle

\begin{abstract}
    A problem $\mathcal{P}$ is considered $\emph{downward self-reducible}$, if there exists an efficient algorithm for $\mathcal{P}$ that is allowed to make queries to only strictly smaller instances of $\mathcal{P}$. Downward self-reducibility has been well studied in the case of decision problems, and it is well known that any downward self-reducible problem must lie in $\PSPACE$. Harsha, Mitropolsky and Rosen [ITCS, 2023] initiated the study of downward self reductions in the case of search problems. They showed the following interesting collapse: if a problem is in \cc{TFNP} and also downward self-reducible, then it must be in \cc{PLS}. Moreover, if the problem admits a unique solution then it must be in \cc{UEOPL}.

    We demonstrate that this represents just the tip of a much more general phenomenon, which holds for even harder search problems that lie higher up in the total function polynomial hierarchy ($\cc{TF\Sigma_i^P}$). In fact, even if we allow our downward self-reduction to be much more powerful, such a collapse will still occur. 
    We show that any problem in $\cc{TF\Sigma_i^P}$ which admits a randomized downward self-reduction  with access to a $\cc{\Sigma_{i-1}^P}$ oracle must be in $\cc{PLS}^{\cc{\Sigma_{i-1}^P}}$. If the problem has \textit{essentially unique solutions} then it lies in  $\cc{UEOPL}^{\cc{\Sigma_{i-1}^P}}$. 

    As an application of our framework, we get new upper bounds for the problems \prob{Range Avoidance} and \prob{Linear Ordering Principle} and show that they are both in $\cc{UEOPL}^{\NP}$, a particularly small subclass of $\cc{TF\Sigma_2^P}$. As a corollary of the powerful Range Avoidance framework, we get that a host of explicit construction problems like constructing rigid matrices, Ramsey graphs, hard truth tables against fixed polynomial size circuits are all in $\cc{UEOPL}^{\NP}$. This appears to be an orthogonal containment to the results  by Chen, Hirahara and Ren [STOC 2024],  Li [STOC 2024], and Korten and Pitassi [FOCS 2024] that put the \prob{Linear Ordering Principle} and hence \prob{Range Avoidance} in $\cc{FS_2P}$.

    In the third level of the polynomial hierarchy, we show that \prob{King}, the only candidate problem not known to collapse to any smaller sub-class of $\cc{TF\Sigma_3^P}$, indeed collapses to $\cc{PLS}^{\cc{\Sigma_2^P}}$. Even more surprisingly, we give a $\ZPP^{\cc{\Sigma_2^P}}$ algorithm for \prob{King}. This refutes the idea proposed in Kleinberg, Korten, Mitropolosky, and Papadimitriou [ITCS 2021] that \prob{King} is in a class in and by itself.

    Along the way, we highlight the power of our framework by giving alternate proofs of  \cc{PLS} and \cc{UEOPL} membership for several important problems: the P-matrix linear complementarity problem, finding a winning strategy in a parity game, finding a Tarski fixed point, and the \prob{S-Arrival} problem. These proofs only rely on the fact that the natural recursive algorithms for these problems yield downward self-reductions, and therefore classifications into \cc{PLS}/ \cc{UEOPL}. Consequently, our proofs are shorter, simpler, and take a more algorithmic perspective.
    

    Finally, we highlight the limitations of downward self-reducibility as a framework for \cc{PLS} membership. We show that unless \cc{FP}=\cc{PLS}, there exists a \cc{PLS}-complete problem that is not d.s.r. This indicates that traditional d.s.r is not a complete framework for \cc{PLS}, answering a question left open by Harsha, Mitropolsky and Rosen [ITCS, 2023].
        
\end{abstract}

\thispagestyle{empty}
\newpage

\tableofcontents
\newpage
\pagenumbering{arabic}

\section{Introduction}

A problem is considered \emph{downward self-reducible}, if given solutions to smaller sub-problems, one can solve the original problem efficiently. More precisely, a search problem $\mathcal{P}$ is traditionally \emph{downward self-reducible} (d.s.r)\footnote{We often use d.s.r as an abbreviation for both downward self-reduction and downward self-reducible. It should be clear from context which one we mean.} if given an instance $x \in \mathcal{P}$, where $|x| = n$, there is a polynomial time reduction to a collection of problems $x_1, \cdots x_t \in \mathcal{P}$ such that for each $x_i$, we have that $|x_i| < n$, and there is a polynomial time procedure $\mathcal{A}$ that given solutions to all $x_i$ outputs a solution for $x$.

It turns out that many natural problems are downward self-reducible. In fact downward self-reducibility has found itself as one of the central tools in establishing search-to-decision reductions for a plethora of problems. Perhaps the most well-known example of such a problem is \prob{FSAT} (functional version of \prob{SAT}), also known as \prob{Search-SAT}.
\begin{maindefinition} The problem \prob{FSAT} takes as input a boolean formula $\varphi$ over $n$ variables $x_1, \dots, x_n$, and $t= \poly(n)$ clauses. If $\varphi$ is satisfiable output any satisfying assignment, otherwise, output ``UNSAT''
\end{maindefinition}

The downward self-reduction for \prob{FSAT} proceeds as follows. Let $\varphi_0$ denote the sub-formula of $\varphi$ where $x_1$ is set to $0$ and then all clauses are maximally simplified\footnote{For example, $(x_1 \land x_2) \lor (x_3)$ gets simplified to  $x_3$ since $(x_1 \land x_2)$ is false under the assumption that $x_1 = 0$.}. Similarly, let $\varphi_1$ denote $\varphi$ where $x_1$ is set to $1$ and then all clauses are maximally simplified. Call the \prob{FSAT} oracle on $\varphi_0$ and $\varphi_1$. If either the oracle call on $\varphi_0$ or $\varphi_1$ returns a satisfying assignment $x_2 = \alpha_2, \dots, x_n = \alpha_n$, output $(0, \alpha_2, \dots, \alpha_n)$ or $(1, \alpha_2, \dots, \alpha_n)$ respectively. Otherwise, output ``UNSAT''. Notice that this reduction is correct since $\varphi$ is satisfiable if and only if either $\varphi_0$ or $\varphi_1$ is satisfiable. Furthermore, $|\varphi_0| < t$ and $|\varphi_1| < t$, so the reduction is also downward.

Downward self-reductions similar to the one for \prob{FSAT} can be shown for a wide variety of different \NP-complete problems such as \prob{VertexCover} and \prob{Hamiltonian Path}. In fact, it is known that any problem that is downward self-reducible must be contained in $\PSPACE$. This follows from the existence of the  $\PSPACE$ algorithm that simply runs the recursive algorithm defined by the downward self-reduction. Moreover, this bound is known to be tight since there exist \PSPACE-complete problems which are downward self-reducible (e.g., the TQBF problem).

The recent work of \cite{harsha2022downward} asks about the complexity of a special type of downward self-reducible search problems: \cc{TFNP} problems. Informally, a search problem is in \cc{TFNP} (total function nondeterministic polynomial time) if every instance always has a solution (total function) and given a solution it is efficiently verifiable in polynomial time (nondeterministic polynomial time). Restricting \cc{TFNP} even further, one can define the class \cc{TFUP}, that include all problems in \cc{TFNP} that have a unique solution.
Perhaps the most useful example to have in mind is the problem \prob{Factor}: given an $n$ bit integer $x$, output $0^{n-1}$ if $x$ is prime, otherwise output an $n-1$ bit integer $y$ such that $y$ divides $x$. Notice that \prob{Factor} is a total function since every integer $x$ is either prime or has a non-trivial factor of size at most $x/2$. Furthermore, the solutions to \prob{Factor} are efficiently verifiable since checking if $x$ is prime or $y$ divides $x$ can be done in polynomial time. While \prob{Factor} is not in \cc{TFUP} since an integer may have many factors, it is possible to define a search problem that outputs the full factorization of an $n$ bit input $x$, which is in \cc{TFUP} by the unique factorization theorem.

\cite{harsha2022downward} show that any problem that is in \cc{TFNP} and also downward self-reducible must lie in \cc{PLS} (Polynomial Local Search), a well-studied \cc{TFNP} subclass. Moreover, if the problem is in \cc{TFUP} and downward self-reducible then it lies in \cc{UEOPL} (Unique End Of Potential Line), a particularly small \cc{TFNP} subclass.
These results should be somewhat surprising as downward self-reducibility of a problem $\mathcal{P}$ generally only implies that $\mathcal{P} \in \cc{PSPACE}$, but adding the minor additional constraint that $\mathcal{P}$ is in \cc{TFNP}/\cc{TFUP} (i.e. $\mathcal{P}$  is total and efficiently verifiable) collapses the complexity of $\mathcal{P}$ to relatively small \cc{TFNP} subclasses \cc{PLS}/\cc{UEOPL}. As a consequence \cite{harsha2022downward} gives a barrier for designing recursive algorithms for \prob{Factor}. A recursive algorithm for \prob{Factor} would factor an $n$ bit integer by factoring several integers of bit-length at most $n-1$, and thus imply a downward self-reduction for \prob{Factor}. This would place \prob{Factor} in $\cc{UEOPL}$, resolving long-standing open questions about the complexity of \prob{Factor} within \cc{TFNP}.

Based on the collapses in \cc{TFNP} for problems that admit downward self-reductions, it is natural to ask if this is representative of a more general phenomenon. We explore exactly this question in search problems which admit downward self-reductions and lie in \cc{TFPH} (Total Function Polynomial Hierarchy). In what has now become a seminal paper Kleinberg, Korten, Mitropolsky and Papadimitriou \cite{kleinberg2021total} introduce $\cc{TFPH} = \bigcup_{i=1}^{\infty} \cc{TF\Sigma_i^P}$, a version of the polynomial hierarchy defined on total search problems. The class $\cc{TF\Sigma_i^P}$ defines a total search problem, where given a solution it is verifiable by a $\cc{\Pi_{i-1}^P}$ machine ($\cc{TFU\Sigma_i^P}$ when the problem admits a unique solution). Note that when $i=1$, this corresponds to \cc{TFNP} (and resp. \cc{TFUP}). 

In the second level of the polynomial hierarhcy $\cc{TF\Sigma_2^P}$, \cite{kleinberg2021total} introduce the problem \prob{Range Avoidance} (\prob{Avoid}) whose study has led to breakthrough results in proving maximum circuit lower bounds, a major open problem in complexity theory \cite{chen2024symmetric, li2024symmetric}. More generally, \cite{korten2022hardest} showed that algorithms for \prob{Avoid} would have many interesting consequences for both derandomization and getting explicit constructions of many combinatorial objects. 
As one application, \cite{korten2022hardest}  showed an equivalence between the existence of $\cc{FP}^{\NP}$ algorithms for \prob{Avoid} and the existence of a language in $\E^{\NP}$ with circuit complexity $2^{\Omega(n)}$. To put this in perspective, the best lower bound currently known for $\E^{\NP}$ is that it cannot be computed by circuits of size $\sim 3.1n$ \cite{li20221}. However, the best algorithm for \prob{Avoid} lies in the class $\cc{FS_2P}$ \cite{chen2024symmetric,li2024symmetric, korten2024LOP} which leaves the following question about the complexity of \prob{Avoid} open.

\begin{mainopenproblem}\label{op:fp_np_avoid}
    Is there a $\cc{FP}^{\NP}$ algorithm for \prob{Avoid}?
\end{mainopenproblem}

Going even higher up to the third level of the polynomial hierarchy, \cite{kleinberg2021total} introduce two problems in $\cc{TF\Sigma_3^P}$: \prob{Shattering} (the search problem derived from the Sauer-Shelah lemma on the shattering dimension of sets), and \prob{King} (the principle that any tournament has a sort of almost-maximal participant). In the case of  \prob{Shattering}, they show that the problem collapses to the subclass $\cc{PPP}^{\cc{\Sigma_2^P}}$
\footnote{\cc{PPP} is another subclass of \cc{TFNP} corresponding to the application of the pigeonhole principle.}. However, in the case of \prob{King} they posit that unless $\#\P$ is in the polynomial hierarchy, it seems unlikely that \prob{King} belongs to $\cc{PLS}^{\cc{\Sigma_2^P}}$. This leads to the natural open question.

\begin{mainopenproblem}\label{op:king_pls_sig2}
    Is there a $\cc{PLS}^{\cc{\Sigma_2^P}}$ algorithm for \prob{King}?
\end{mainopenproblem}

\subsection{Our Results}
In this work, we show that \cc{PLS}/\cc{UEOPL} membership theorems of \cite{harsha2022downward} are actually just the tip of the iceberg of a much more general phenomenon. The downward self-reducibility framework for showing \cc{PLS} membership in fact works for all search problems with \emph{efficiently verifiable solutions} and \emph{promise preserving downward self-reductions}. Thus, we are able to fully recover the results of \cite{harsha2022downward} as the totality of any $\cc{TF\Sigma_i^P}$ problem ensures that all downward self-reductions are promise-preserving\footnote{A total problem satisfies the trivial promise, every input is in the relation.}, and all solutions are verifiable in \cc{P} for \cc{TFNP} problems. We are further able to generalize to randomized reductions, as well as problems which are not strictly downward self-reducible (\cref{def: mu-dsr}). Most interestingly, we show that the downward self-reducibility framework relativizes as we go up the polynomial hierarchy, i.e. a problem $\mathcal{P}$ that is in $\cc{TF\Sigma_i^P}$ and also downward self-reducible with access to a $\cc{\Sigma_{i-1}^P}$ oracle lies in $\cc{PLS^{\Sigma_{i-1}^P}}$.

We then use these ideas to establish surprising containments in \cc{TF\Sigma_i^P} subclasses for several total function problems: \prob{Avoid}, \prob{LinearOrderingPrinciple}, and \prob{King}. As a result we make progress on \cref{op:fp_np_avoid} and fully resolve \cref{op:king_pls_sig2}. Along the way, we highlight the power of our framework by giving simple alternate proofs of membership in \cc{PLS} and \cc{UEOPL} of some classic \cc{TFNP} problems.
 
\subsubsection{Downward Self Reductions in the Total Function Polynomial Hierarchy}

We first highlight our main technical contribution: a framework (\cref{sec:mu_dsr}) for showing collapses in $\cc{TF\Sigma_i^P}$ for problems that are downward self-reducible, under very powerful notions of downward self-reducibility. One can interpret our framework as a strong generalization of the main result from \cite{harsha2022downward}, which we recall below:

\begin{theorem*}[\cite{harsha2022downward}]\label{thm: harsha}
    Every (traditionally) downward self-reducible problem in \cc{TFNP} is in \cc{PLS}. Moreover, every downward self-reducible problem in \cc{TFUP} is in \cc{UEOPL}.
\end{theorem*}

While this result is already quite powerful, it is a somewhat restrictive framework. For example, it requires that the downward oracle calls are to instances of strictly smaller ``size''. The result holds only for problems in the first level of the polynomial hierarchy, and their framework requires the problems to be total. Moreover, the downward self-reduction must run in deterministic polynomial time. In contrast, we will allow for much more powerful downward self-reducibility and prove the following more general theorem (see \cref{thm:mu-dsr}).

\begin{maintheorem}
\label{thm:main_mu-dsr}
    Let $\mathcal{R}$ be a search problem in \cc{PromiseF\Sigma_i^P} which has a (randomized) $\mu$-d.s.r. If $\mu(x) \leq p(|x|)$ for a polynomial $p$, then there is a (randomized) reduction from $\mathcal{R}$ to $\PLS^{\cc{\Sigma_{i-1}^P}}$. Furthermore, if $\mathcal{R}$ has unique solutions, then there is a (randomized) reduction from $\mathcal{R}$ to $\cc{UEOPL}^{\cc{\Sigma_{i-1}^P}}$.
\end{maintheorem}

We now contrast \cref{thm:main_mu-dsr} to the original framework in \cite{harsha2022downward}, and describe our relaxations with some observations about our proof that let us generalize the downward self-reducible framework in the following ways. 

\begin{enumerate}
    \item Our theorem applies to \cc{PromiseF\Sigma_i^P} problems rather than just \cc{TFNP} problems. Informally, a $\cc{PromiseF\Sigma_i^P}$ is a promise search problem (one where only certain inputs are allowed) which has a $\cc{\Sigma_{i-1}^P}$ verifier (see \cref{def: promiseTFSigma_i}). We require that our downward self-reduction be promise-preserving\footnote{Our downward self-reduction only maps to instances that also satisfy the promise.}, and as a special case of \cref{thm:main_mu-dsr}, we get that a downward self-reducible  $\cc{TF\Sigma_i^P}$/$\cc{TFU\Sigma_i^P}$ problem is in $\PLS^{\cc{\Sigma_{i-1}^P}}$/$\cc{UEOPL}^{\cc{\Sigma_{i-1}^P}}$ (see \cref{cor: mu-d.s.r}). This generalization, allows us to provide a very simple proof that \prob{Promise-P-LCP} reduces to $\cc{UEOPL}$ (see \cref{cor:plcp_ueopl}).

    \item The traditional notion of downward self-reducibility encapsulates a set of recursive algorithms, where the recursive calls are made on strictly smaller instances. However, when designing recursive algorithms, the notion of ``smaller'' is often more flexible than just instance size. To capture this idea, we define $\mu$-d.s.r (see \cref{def: mu-dsr}). Informally, a problem $R$ is $\mu$-d.s.r if there exists a polynomially bounded function $\mu$ defined on instances of $R$ such that solving $R$ on an instance $I$ is polynomial-time reducible to solving $R$ on instance $I_1, \dots, I_{t}$ such that $\mu(I_i) < \mu(I)$. We believe that the notion of $\mu$-d.s.r is a more accurate abstraction of recursive algorithms since recursive algorithms have some measure on which the instance decreases at each recursive call, but not necessarily the size.

    One of the key takeaways from \cref{thm:main_mu-dsr} and this paper as a whole is that \emph{recursive algorithms and inductive proofs of totality are closely related to \cc{PLS}}, since these correspond to $\mu$-d.s.r, as we will see in \cref{sec: PLS_TFNP}. For a concrete example, consider the problem \prob{Tarski} for which we recover an alternate proof of $\cc{PLS}$ membership. In our reduction, it is crucial for us to have the notion of $\mu$-d.s.r over traditional d.s.r (see \cref{sec:tarski}).

    \item For problems in $\cc{PromiseF\Sigma_i^P}$, we will allow the downward self-reduction to make use of a $\cc{\Sigma_{i-1}^P}$ oracle, which makes the reduction very powerful. 

    \item We further generalize the fact that a $\mu$-d.s.r $\cc{TFU\Sigma_i^P}$ problem $\mathcal{R}$ with polynomially bounded $\mu$ reduces to $\cc{UEOPL}^{\cc{\Sigma_{i-1}^P}}$. We observe, that when $i \geq 2$, unique solutions are no longer required for this theorem to hold. It suffices that $\mathcal{R}$ has what are dubbed as essentially unique solution \cite{korten2024LOP} (see \cref{def: essentially_unique} and \cref{lem: essentially_unique_ueopl}). We use this property, crucially in showing that \prob{Linear Ordering Principle} and \prob{Avoid} are in $\cc{UEOPL}^{\NP}$ instead of just $\cc{PLS}^{\NP}$(see \cref{sec:lop}).

    \item Our results apply to randomized reductions.
    
\end{enumerate}

As a corollary of our framework we show that even though \prob{FSAT} has a downward self-reduction, \prob{Promise-FSAT} is very unlikely to have a promise-preserving downward self-reduction.

\begin{maincorollary}
    If \prob{Promise-FSAT} has a promise-preserving downward self-reduction then $\NP = \coNP$
\end{maincorollary}

This follows from just observing that such a reduction would reduce \prob{SAT} to \cc{PLS}, which would imply $\NP = \coNP$. A stronger collapse follows if we consider the problem \prob{PromiseF-UniqueSAT} (see \cref{cor: promise_fsat}).

\subsubsection{New Upper Bounds for \prob{Avoid} and \prob{Linear Ordering Principle}}

Equipped with \cref{thm:main_mu-dsr}, we make progress towards \cref{op:fp_np_avoid}. We consider two problems: \prob{Avoid} and \prob{Linear Ordering Principle} (\prob{LOP}), and show that they are both in $\cc{UEOPL}^{\NP}$. 

We begin by defining the range avoidance problem (\prob{Avoid}) introduced in \cite{kleinberg2021total}
\begin{maindefinition}[Range Avoidance (\prob{Avoid})]
    Given a polynomial sized circuit $C: \bit^n \rightarrow \bit^{n+1}$, find an element $y$ not in the range of $C$.
\end{maindefinition}

\prob{Avoid} is total via the dual pigeonhole principle, and a solution $y$ to \prob{Avoid} can be verified by the following $\coNP$ predicate: $\forall x:$  $C(x) \neq y$. This puts \prob{Avoid} in $\cc{TF\Sigma_2^P}$. Moreover, \prob{Avoid} has an abundance of solutions, since at least half the elements in the co-domain are not in the range of $C$. This abundance guarantees that \prob{Avoid} is in $\cc{TFZPP}^{\NP}$, via the trivial algorithm that guesses a random string and checks if it is a valid solution.

\prob{Avoid} has received a lot of attention in recent literature because of its connection to derandomization \cite{korten2022hardest, korten2022derandomization} and explicit constructions \cite{korten2022hardest, ren2022range, guruswami2022range, gajulapalli2023range, chen2023range, gajulapalli2024oblivious}. For example \cite{korten2022hardest} showed that if $\prob{Avoid} \in \mathfrak{C}$ for some complexity class $\mathfrak{C}$, then $\BPP \subseteq \mathfrak{C}$. Moreover, the construction of many combinatorial objects known to exist via the probabilistic method like hard truth tables, rigid matrices and Ramsey graphs, just to name a few, all reduce to $\prob{Avoid}$. However to get explicit constructions of these combinatorial objects, it is not sufficient to just have an algorithm to \prob{Avoid}, we also require that our algorithm isolates a single solution for a given instance (we call this problem single-value \prob{Avoid}). Consider for instance the task of constructing a language that cannot be computed by circuits of size $n^2$. If one tries to define a language based on the output of the hard truth table generated from an \prob{Avoid} algorithm that is not single-valued, then the language would not be well-defined since different hard truth tables will define different languages.

We now show the power of single-value \prob{Avoid} as a tool for capturing most explicit construction problems via the following folklore theorem:

\begin{theorem*}[\cite{korten2022hardest, guruswami2022range}]
    If Single-Value \prob{Avoid} $\in \mathfrak{C}$ for some complexity class $\mathfrak{C}$\footnote{We only consider classes $\mathfrak{C}$ where $\P \subseteq \mathfrak{C}$.}, then we can output the explicit construction of the following objects in $\mathfrak{C}$: Ramsey Graphs, Two Source Extractors, Rigid Matrices, Linear Codes, Hard Truth Tables for any Fixed Polynomial Sized Circuit, $\cc{K}^{\poly}$-random strings.
\end{theorem*}

We now state our main result of this section. 

\begin{maintheorem}\label{thm:main_avoid_ueopl}
\prob{Avoid} is in $\cc{UEOPL}^{\NP}$.    
\end{maintheorem}

In fact, our proof also guarantees uniqueness of the solution. Combined with the theorem on explicit constructions from single-value \prob{Avoid} we get the following corollary.


\begin{maincorollary}\label{cor:main_explicit_construction}
    There is an explicit construction of Ramsey Graphs, Two Source Extractors, Rigid Matrices, Linear Codes, Hard Truth Tables against any Fixed Polynomial Sized Circuit, $\cc{K}^{\poly}$-random strings in $\cc{UEOPL}^{\NP}$
\end{maincorollary}

We first give a proof overview of \cref{thm:main_avoid_ueopl} (see \cref{cor: LOP_in_UEOPL}), and then compare our results to other upper bounds on \prob{Avoid}.

It is not clear that \prob{Avoid} gives us any structure to directly construct a $\mu$-downward self-reduction. So to apply our framework, we consider an intermediate problem called \prob{Linear Ordering Principle} (\prob{LOP}) introduced in \cite{korten2024LOP}, where given as input a total order: $\prec\;$, one must find the smallest element in $\prec\;$. If it is a total order, \prob{LOP} has a unique solution, and it is not hard to see that this problem admits essentially unique solutions (\cref{def: essentially_unique}).
\begin{maindefinition}[Linear Ordering Principle (\prob{LOP}) \cite{korten2024LOP}]
    Given $\prec\;: \{0, 1\}^n \times \{0, 1\}^n \rightarrow \{0, 1\}$, where $\prec\;$ is described by a polynomial sized circuit, either:
    \begin{enumerate}
        \item Find a witness that $\prec\;$ does not define a total order i.e. $x, y, z \in \{0, 1\}^n$ such that one of the following holds: (i) $x \prec\; x$, (ii) $x \neq y$, $x \not \prec\; y$ and $y \not \prec\; x$ or (iii) $x \prec\; y \prec\; z$ and $x \not \prec\; z$. 
        \item Find the minimal element of the order defined by $\prec\;$. 
    \end{enumerate}
\end{maindefinition}

Moreover, given a solution $y$ to \prob{LOP} one can verify that $y$ is minimal by the following $\coNP$ predicate: $\forall x:$ $\prec\;(y, x) = 1$ which puts \prob{LOP} in $\cc{TF\Sigma_2^P}$.  We then show that \prob{LOP} is $\mu$-downward self-reducible (\cref{lop:dsr}), thus by \cref{thm:mu-dsr} we get the following theorem:

\begin{maintheorem} \label{thm:main_lop_ueopl_np}
    \prob{LOP} is in $\cc{UEOPL}^{\NP}$
\end{maintheorem}

Applying the polynomial time many-to-one reduction from \prob{Avoid} to \prob{LOP} from \cite{korten2024LOP} proves \cref{thm:main_avoid_ueopl}.

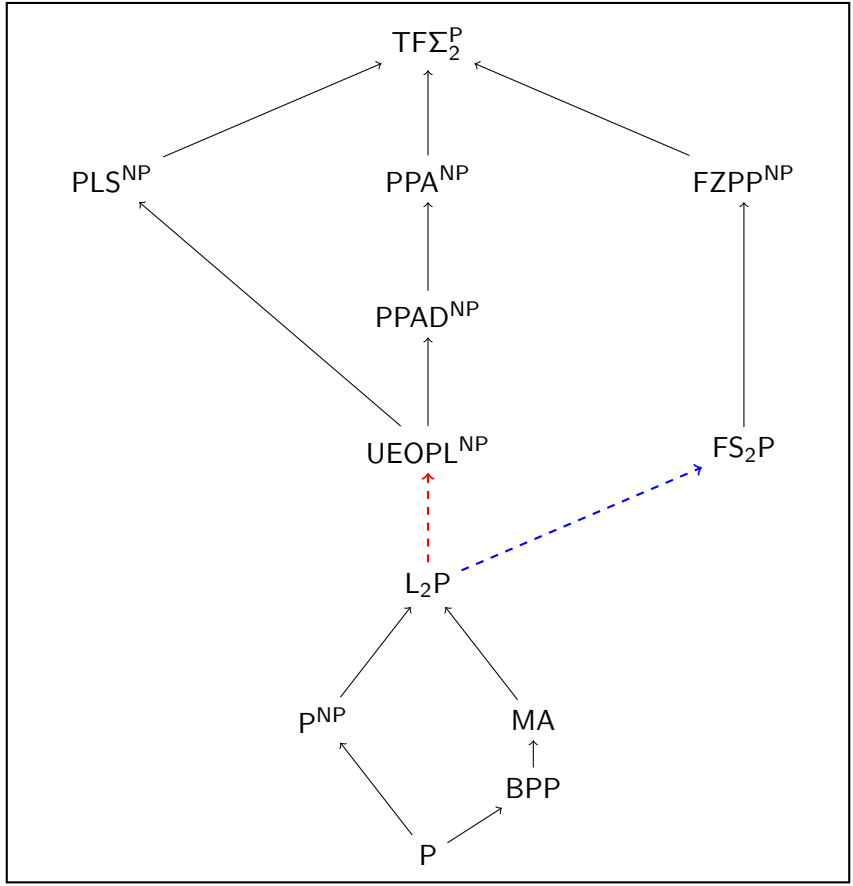
\begin{figure}
    \centering
    \begin{tikzpicture}[x=1.4cm,y=1.8cm]
        nodes\node at (0,-1)     (a)  {$\cc{P}$};
        notes\node at (1, -0.5)  (b)  {$\cc{BPP}$};
        notes\node at (-1, 0)   (c)  {$\cc{P}^{\cc{NP}}$};
        notes\node at (1, 0)    (d)  {$\cc{MA}$};
        notes\node at (0, 1)    (j)  {$\cc{L_2P}$};
        notes\node at (0, 2)    (e)  {$\cc{UEOPL}^{\cc{NP}}$};
        notes\node at (0, 3)    (f)  {$\cc{PPAD}^{\cc{NP}}$};
        notes\node at (3, 2)    (g)  {$\cc{FS_2P}$};
        notes\node at (-3, 4)   (h)  {$\cc{PLS}^{\cc{NP}}$};
        notes\node at (0, 4)   (i)  {$\cc{PPA}^{\cc{NP}}$};
        notes\node at (3, 4)    (k)  {$\cc{FZPP}^{\cc{NP}}$};
        notes\node at (0, 5)    (l)  {$\cc{TF\Sigma_2^P}$};
        
        \draw[->] (a) -- (c);
        \draw[->] (a) -- (b);
        \draw[->] (b) -- (d);
        \draw[dashed, thick, red, ->] (j) -- (e);
        \draw[dashed, thick, blue, ->] (j) -- (g);
        \draw[->] (e) -- (f);
        \draw[->] (f) -- (i);
        \draw[->] (d) -- (j);
        \draw[->] (e) -- (h);
        \draw[->] (g) -- (k);
        \draw[->] (h) -- (l);
        \draw[->] (i) -- (l);
        \draw[->] (c) -- (j);
        \draw[->] (k) -- (l);

        \draw [thick] (-4,-1.2) rectangle +(8,6.5);
    \end{tikzpicture}

    \caption{A Fine-Grained $\cc{TF\Sigma_2^P}$ Version of the Siper-Gacs-Lautermann Theorem. Solid arrows denote inclusions and dotted arrows denote decision to search reductions. The inclusion $\cc{FS_2P} \subseteq \cc{FZPP}^{\NP}$ is due to \cite{cai2007s2p}. $\cc{L_2P}$ which is a class of decision problems was shown to be equivalent to class of problems that are polynomially many-one reducible to \prob{LOP}\cite{korten2024LOP}. The blue dotted arrow indicates the containment of \prob{LOP} $\in$ $\cc{FS_2P}$ as shown in \cite{korten2024LOP}. The red dotted arrow indicates our new containment of \prob{LOP} $\in$ $\cc{UEOPL}^{\NP}$.} 
    \label{fig: inclusions}
\end{figure}


We now compare the new containment in \cref{thm:main_avoid_ueopl} and \cref{thm:main_lop_ueopl_np} to previous results. \cite{chen2024symmetric} and \cite{li2024symmetric} show that \prob{Avoid} is in $\cc{FS_2P}$. In \cite{korten2024LOP} they define $\cc{L_2P}$ as a new syntactic sub-class of $\cc{S_2P}$ and show that $\MA \subseteq \cc{L_2P}$. Moreover, they show that a language $L \in \cc{L_2P}$ is equivalent to $L$ being polynomial time many-one reducible to \prob{LOP}. Thus,  placing \prob{LOP} in a class of search problems also implies that $\cc{L_2P}$ reduces to this class. While \cite{korten2024LOP} places \prob{LOP} in the class $\cc{FS_2P}$, our result places both \prob{LOP} and \prob{Avoid} in the class $\cc{UEOPL}^{\NP}$. In fact, before \cref{thm:main_avoid_ueopl} it was open if \prob{Avoid} or \prob{LOP} was even in $\cc{PLS}^{\NP}$ or $\cc{PPA}^{\NP}$.  A diagram of our new containment is given in \cref{fig: inclusions}.

We now highlight why our new upper bound of \prob{LOP} and \prob{Avoid} is interesting. 
\begin{enumerate}
\item Since \prob{Avoid} is in $\cc{TF\Sigma_2^P}$, it is interesting to see what is the smallest standard TFNP sub-class with access to an $\NP$ oracle that can solve \prob{Avoid}. $\cc{UEOPL}$ is one of the smallest $\cc{TFNP}$ sub-classes, so getting explicit constructions in $\cc{UEOPL}^{\NP}$ would bring us a step closer to the goal of providing explicit constructions in $\cc{FP}^{\NP}$. Moreover, a sub-exponential simulation of $\cc{UEOPL}$ could yield new approaches to getting an $\cc{E}^{\NP}$ construction of rigid matrices and hard truth tables for circuits of size $cn$ for some fixed constant $c > 0$.\footnote{Even constructing circuits of size at least $4n$ would be interesting as none of the current techniques seem to be able to beat the $3.2n$ lower bound.}, both big open problems.

\item The original motivation when defining $\cc{S_2P}$ by \cite{russell1998symmetric} and \cite{canetti1996more} was to find the smallest class within the polynomial hierarchy that captures probabilistic classes like $\BPP$ and $\MA$. Such a class, inherently captures an abundance of solutions (most random strings are good). And while, \prob{Avoid} seems to share this property of abundance of solutions, the results of \cite{korten2024LOP} indicate that the algorithms for \prob{Avoid} \cite{chen2024symmetric, li2024symmetric} isolate a single solution and implicitly solve the more restrictive \prob{LOP} problem, which has a unique solution. In this regard, it seems that the class $\cc{UEOPL}^{\NP}$ which doesn't correspond to any notion of abundance, and has a unique solution captures the behavior of \prob{LOP} better than $\cc{FS_2P}$.

\item We do not understand how the complexity classes $\cc{UEOPL}^{\NP}$ and $\cc{FS_2P}$ relate to each other. Under very strong hardness assumptions for derandomization \cite{klivans1999graph} we have that $\cc{FS_2P} \subseteq \cc{FZPP}^{\NP} \subseteq  \cc{FP}^{\NP} \subseteq \cc{UEOPL}^{\cc{NP}}$. However, unconditionally they seem incomparable.

\end{enumerate}

\subsubsection {On the complexity of \prob{King}}
We now consider the third level of the polynomial hierarchy and look at search problems in $\cc{TF\Sigma_3^P}$. \cite{kleinberg2021total} illustrated two problems: \prob{Shattering} and \prob{King}, that they showed belonged to $\cc{TF\Sigma_3^P}$. In the case of \prob{Shattering}, they gave a non-trivial application of $\cc{\Sigma_2^P}$ oracles to construct a hashing circuit where a collision corresponded to a solution to \prob{Shattering}, thus putting \prob{Shattering} $\in \cc{PPP}^{\cc{\Sigma_2^P}}$. However, in the case of \prob{King}, no such collapse is known. As a result \prob{King} remains the only candidate for a problem in $\cc{TF\Sigma_3^P}$ that doesn't collapse to a smaller $\cc{TFNP}$ sub-class relative to a $\cc{\Sigma_2^P}$ oracle.

We say that a vertex $v$ in a digraph $G$ is a king if every vertex in $G$ can be reached from $v$ by a path of length at most 2. Moreover, we say that $G$ is a tournament if for every pair of distinct vertices $u$, $v$ exactly one of $(u,v)$ or $(v, u)$ is present in $G$. It was shown in \cite{landau1953dominance} that any digraph representing a tournament always has a king. We now define the total problem \prob{King} as introduced by \cite{kleinberg2021total}:

\begin{maindefinition}[\prob{King}]
\label{def: rp-lcp}
In the problem \prob{King}, given as input a circuit $C: [2^n] \times [2^n] \rightarrow \{ 0, 1\}$ representing a digraph, either: 
\begin{enumerate}
    \item 
    find a distinct $x_1, x_2 \in \bit^n$ such that $C(x_1, x_2) = C(x_2, x_1)$,
    \item 
    find an element $k \in [2^n]$ such that for every $a \in [2^n]\setminus \{ k\}$, either $C(k, a) = 1$, or there exists an element $b \in [2^n]\setminus \{ k, a\}$, such that $C(k, b) = 1$ and $C(b, a) = 1$.
\end{enumerate}
\end{maindefinition}

If one tried to make the proof of totality of \prob{King} \cite{landau1953dominance} constructive it seems crucial that one has to count the number of neighbors of a given vertex, thus reducing it to some kind of a generic counting problem. As a result, when discussing the complexity of \prob{King} \cite{kleinberg2021total} make the  following statement:
\begin{center}
    \enquote{Unless $\#\P$ is in the polynomial hierarchy, \prob{King} does not appear to belong in $\cc{PLS}^{\cc{\Sigma_2^P}}$}
\end{center}

Our first result in this section (see \cref{thm: king_pls}) seems to completely counter the expectations of \cite{kleinberg2021total}.

\begin{maintheorem}\label{thm:main_king_pls}
    \prob{King} is in $\cc{PLS}^{\cc{\Sigma_2^P}}$
\end{maintheorem}

To get this result, we are not able to use \cref{thm:main_mu-dsr} directly. However, we note that our proof directly borrows ideas from the $\mu$-downward self-reducible framework. In \cref{lem: king_choices} we identify a very weak downward self-reducible property relating to tournaments that removes one vertex at a time. While we cannot apply \cref{thm:main_mu-dsr} since our $\mu$ here would be exponential, carefully applying ideas from the proof of \cref{thm:main_mu-dsr} allows us to reduce \prob{King} to $\cc{PLS}^{\cc{\Sigma_2^P}}$ directly.


While \cref{thm:main_king_pls} already seems to be counter intuitive, we show perhaps an even more surprising result (see \cref{thm: king_tfzpp}).

\begin{maintheorem}\label{thm:main_zpp_king}
    \prob{King} is in  $\cc{TFZPP}^{\cc{\Sigma_2^P}}$
\end{maintheorem}

To get this result we use the fact that one can sample a $\coNP$ predicate uniformly in $\ZPP^{\cc{\Sigma_2^P}}$\cite{BGP00NPsampler}. We consider the set $W_{u}$ that contains all vertices in the graph that witness that $u$ is not a \prob{King}. We show (\cref{clm: strict_decrease}) that any vertex $v \in W_u$, must have a strictly smaller $W_v$, moreover for a random $v \in W_u$, $W_v$ will be a constant fraction smaller than $W_u$ with high probability (\cref{clm:const_frac}). This leads to a natural algorithm that keeps iteratively sampling through $W_i$ till it becomes empty at which point we find a king.

Thus given an oracle $\cc{\Sigma_2^P}$ we get the following corollary from \cref{thm:main_king_pls} and \cref{thm:main_zpp_king}:

\begin{maincorollary}
    King is in $\cc{PLS}^{\cc{\Sigma_2^P}} \cap \cc{TFZPP}^{\cc{\Sigma_2^P}}$
\end{maincorollary}

We note that this stands in striking contrast to the fact that \prob{King} has no faster randomized algorithm better than $2^n$ in the black-box query model. Therefore, our result shows that a \cc{\Sigma_2^P} oracle likely significantly speeds up algorithms for \prob{King}. In fact, even an \cc{NP} oracle allows us to achieve a significant speedup for \prob{King} (to $\poly(n) 2^n$ time), which requires $\Omega(2^{4n/3})$ in the black-box query model without a \cc{NP} oracle (\cref{thm: king_efficient}).

\subsubsection{Downward Self Reductions in \cc{TFNP}}
We apply \cref{thm:main_mu-dsr} to known \cc{TFNP} problems with the dual purposes of demonstrating how one should use the $\mu$-d.s.r framework and giving alternative proofs of well-known \cc{PLS}/\cc{UEOPL} membership results.

We begin with the P-matrix principal linear complementarity (\prob{P-LCP}) problem. The promise version of \prob{P-LCP} where the input is guaranteed to be a P-matrix\footnote{A P-matrix is a square matrix where every principal minor is positive.} seems to be of the most interest. \cite{fearnley2020unique} showed that this problem is in \cc{UEOPL} in two ways. The first proof reduces \prob{P-LCP} to a different problem \prob{Unique-Sink-Orientation}, which they also show is in $\cc{UEOPL}$. The second proof directly reduces \prob{P-LCP} to \cc{UEOPL} but requires the application of Lemke's algorithm.  However, our proof is significantly shorter and simpler (see \cref{cor:plcp_ueopl}). It only relies on the observation that \prob{P-LCP} is downward self-reducible. Interestingly, and quite unusually for proofs of membership in \cc{TFNP} subclasses, this proof is inherently quite different than the proof that a P-matrix always has a solution to the \prob{Promise-P-LCP} problem.

We then move on to the graph games, which are equivalent to parity games. These types of games were first shown to be in \cc{PLS} in \cite{beckmann2008complexity} via bounded arithmetic. We are able to give an elementary (not using bounded arithmetic) proof of \cc{PLS} membership (see \cref{cor:memdet}). 

We next move to the \prob{Tarksi} problem (see \cref{cor: tarski}). It was shown in \cite{etessami2020tarski} that \prob{Tarski} is in \cc{PLS}. We give an alternative proof through the $\mu$-d.s.r framework that \prob{Tarski} is in \cc{PLS}. Moreover, our proof is a translation of the classic divide and conquer algorithm for finding Tarski fixed-points, and in this way, more intuitive. The proof seems to rely crucially on the notion of $\mu$-d.s.r as opposed to traditional d.s.r. In this instance, the measure $\mu$ captures the size of the lattice that the Tarski instance is defined on. It would be interesting to giving a traditional d.s.r for \prob{Tarksi} or show that one does not exist.

Finally, we cover the \prob{S-Arrival} problem. This problem was first shown to be in \cc{UEOPL} in \cite{fearnley2020unique}. We show that \prob{S-Arrival} is downward self-reducible. This along with the fact that \prob{S-Arrival} is in \cc{TFUP} acts as an alternative proof that \prob{S-Arrival} is in \cc{UEOPL} (see \cref{cor:s_arrival}). While the original proof of \cc{UEOPL} membership is quite simple, our proof highlights the recursive nature of the problem.

\subsubsection{The limits of the d.s.r framework}
Finally, we tackle a problem left open in \cite{harsha2022downward}: is every \cc{PLS}-complete problem traditionally d.s.r? We give a negative answer to this question (unless $\cc{PLS}=\cc{FP}$).
\begin{maintheorem}\label{thm:main_not_all_dsr}
    All \cc{PLS}-complete problems are traditionally d.s.r if and only if $\cc{PLS}=\cc{FP}$.
\end{maintheorem}
\cref{thm:main_not_all_dsr} follows from a simple padding argument. Though not the focus of this work, we note in \cref{obs: np_not_dsr} that a very similar argument shows that all \cc{NP}-complete/\cc{PSPACE}-complete problems are traditionally d.s.r if and only if $\cc{P}=\cc{NP}$/$\cc{P}=\cc{PSPACE}$. Interestingly, we note that the above theorem does not provide a barrier to proving that every \cc{PLS}-complete problem is $\mu$-d.s.r.


Another question asked in \cite{harsha2022downward} is if one can establish a connection between random-self-reducibility and some syntactic \cc{TFNP} subclass (e.g. \cc{PPP}) in the same way that downward self-reducibility implies membership in \cc{PLS}. In \cref{subsec: rsr}, we give some intuition as to why this might be challenging.

\subsection{Related Work}
The notion of downward self-reducibility has a long history and we refer the interested reader to \cite{allender2010new} as a good starting point.

Downward self-reductions serve as one of the main tools in designing search to decision reductions for a plethora of problems. \cite{harsha2022downward} were the first to consider downward self-reducibility in the context of total problems and observe any interesting consequences. They also showed that the \cc{PLS}-complete problem \prob{Iter} is downward self-reducible, thereby showing that a downward self-reducible \cc{TFNP} problem is hard if and only if \cc{PLS} is hard. Along with traditional d.s.r, they also define a more general notion which they refer to as circuit d.s.r. Both these notions are subsumed by $\mu$-d.s.r (\cref{def: mu-dsr}). \cite{harsha2022downward} only used that d.s.r framework to show \cc{PLS} member for \prob{Iter}, \prob{Iter-with-source}, \prob{Sink-of-DAG}, and \prob{Sink-of-DAG-with-source}, all problems closely related to the canonical \cc{PLS}-complete problem \prob{Sink-of-DAG}. We use our framework to show \cc{PLS} and \cc{UEOPL} membership for a wide range of problems. Our framework is more broadly applicable than that of \cite{harsha2022downward}. We are able to use our framework to classify many problems into \cc{PLS^{\Sigma_i^P}}/\cc{UEOPL^{\Sigma_i^P}}, many of which are beyond the scope of \cite{harsha2022downward}.

\cite{bitansky2022ppad} showed another interesting connection between downward self-reducibility and \cc{TFNP}. They showed that the hardness of any \cc{TFNP} problem which is d.s.r and has a randomized batching property is sufficient to show the hardness of \cc{UEOPL}.

\cite{kleinberg2021total} introduced the total function polynomial hierarchy \cc{TF\Sigma_i^P}. They also introduced the idea of raising a \cc{TFNP} subclass (like \cc{PPP}) to a \cc{\Sigma_i^P} oracle to create a \cc{TF\Sigma_{i+1}^P} subclass (like \cc{PPP^{NP}}). They showed that the computational problem associated with the Sauer–Shelah lemma, \prob{Shattering}, is in \cc{PPP^{\Sigma_2^P}}. \cite{kleinberg2021total} also introduced the problem \prob{Avoid}, which has proven fruitful for understanding several areas of complexity theory \cite{korten2022hardest, chen2024symmetric, li2024symmetric, gajulapalli2024oblivious}. Barriers for designing $\cc{FP}$ algorithms \cite{ilango2023indistinguishability}, and $\cc{FNP}$ algorithms \cite{chen2024hardness} for \prob{Avoid} were shown under cryptographic assumptions. In concurrent work, \cite{hirsch2025upper} give an upper bound of $\P^{\cc{prSBP}}$ for the class $\cc{L_2P}$.

Our algorithms for \prob{King} are similar in spirit to the work of \cite{jin2024faster, zhang2024pigeonhole} where they give exponential time but faster than brute-force algorithms for the \cc{TFNP} problem \prob{Pigeonhole Equal Sums}. They do this by exploiting the fact that the \prob{Pigeonhole Equal Sums} is total, unlike the related subset sum problem. Our work therefore contributes to the nascent field of designing algorithms for total search problems.

\subsection{Discussion and Open questions}

We have seen that both \prob{LOP} and \prob{King} are in very small subclasses of \cc{TF\Sigma_i^P}, i.e. $\cc{UEOPL^{\NP}} \cap \cc{TFZPP^{NP}}$\footnote{The containment of \prob{LOP} in \cc{TFZPP^{NP}} follows from approximate counting.}  and $\cc{PLS^{\Sigma_2^P}} \cap \cc{TFZPP^{\Sigma_2^P}}$ respectively. There are two ways to interpret our results: we implicitly capture some kind of combinatorial property of the underlying \cc{TFNP} sub-class via downward self-reductions, or $\Sigma_{i-1}$ oracles are just very powerful.

On one hand, it is possible that the classic \cc{TFNP} subclasses (e.g, \cc{PLS}) seem to capture most of the combinatorial principles used to show totality of \cc{TFNP} problems, moreover the classic \cc{TFNP} subclasses given access to a $\cc{\Sigma_{i-1}^P}$ oracle capture most of the combinatorial principles used to show totality of \cc{TF\Sigma_i^P} problems. In some sense, maybe we do not really encounter new combinatorial principles as we go higher in the total function hierarchy.

On the other hand, do we get a $\cc{UEOPL}^{\NP}$ algorithm for \prob{Avoid} because \prob{Avoid} is somehow intimately related to some kind of end-of-type arguments, or is it because we believe that \prob{Avoid} is in $\cc{FP}^{\NP}$, and we are able to use the $\NP$ oracle in some clever way?

We offer no guess as to what the correct answer is, and even formulating this question formally remains an interesting open question.

Finally, we observe that it is not immediate that \prob{Avoid} even has a downward self-reduction. We are able to get an upper bound on \prob{Avoid} by showing that \prob{Avoid} has a polynomial time many-one reduction to another problem (\prob{LOP}) that does have a downward self-reduction. This raises a new approach of showing \cc{PLS} membership for problems that are not downward self-reducible: try finding an adjacent problem that is downward self-reducible, and to which your original problem reduces to.

Below we list a few open questions that arise from our work.
\begin{enumerate}

        \item Is there a candidate $\cc{TF\Sigma_3^P}$ problem that does not collapse to a sub-class of \cc{TFNP} with access to a $\cc{\Sigma_2^P}$ oracle? In the case of $\cc{TF\Sigma_2^P}$ the only problems not known to collapse further down are \prob{Empty} \cite{kleinberg2021total} and \prob{Short-Choice} \cite{pasarkarshortchoice2023}. Do these problems also admit further collapses?
    
    \item Can the d.s.r framework be used to show that the \cc{TF\Sigma_3^P} problem \prob{Shattering} \cite{kleinberg2021total} is in \cc{PLS^{\Sigma_2^P}}? In particular, the proof of the Sauer-Shelah lemma, the combinatorial principle behind \prob{Shattering}, employs a divide-and-conquer argument which seems closely related to a downward self-reduction.
    
    \item Does a non-adaptive downward self-reduction for a \cc{TF\Sigma_i^P} problem imply membership in a smaller class than \cc{PLS^{\Sigma_{i-1}^P}}? This is a particularly important question since our downward self-reductions for graph games (\cref{subsubsec: graph-games}), \prob{P-LCP} (\cref{subsubsec: p-lcp}), and \prob{LOP} (\cref{sec:lop}) are all non-adaptive.

    \item Does $\mu$-d.s.r imply traditional d.s.r?

    \item Aldous' algorithm \cite{aldous1983minimization} gives us a randomized procedure for $\prob{Sink-of-DAG}$ for an input $S: \{ 0, 1\}^n \rightarrow \{ 0, 1\}^n, V: \{ 0, 1\}^n \rightarrow \{ 0, 1\}^n$ that runs in expected time $\poly(n) 2^{n/2}$. \cite{fearnley2020unique} and \cite{gartner2018arrival} gave $\poly(n) 2^{n/2}$ time randomized algorithms for \prob{P-LCP} and \prob{S-Arrival} by showing there exist fine grained reductions from those problems to \prob{Sink-of-DAG}, and then applying Aldous' algorithm. Our proofs using $\mu$-d.s.r on the other hand incur a large blowup in instance size (except that of \prob{King}). Is there any way to apply Aldous' algorithm to achieve a speedup for problems shown to be in $\cc{PLS^{\Sigma_i^P}}$ via $\mu$-d.s.r?


    \item What conditions can we put on a $\mu$-d.s.r to show membership in \cc{TFNP} classes between \cc{PLS} and \cc{UEOPL}? For example, how should we extend the $\mu$-d.s.r framework in a way that enables us to show that \prob{Tarski} is in \cc{CLS}?

\end{enumerate}

\section{Preliminaries}

For a set $S \subseteq S_1 \times \dots \times S_n$, we let $\pi_i(S)$ denote the projection of $S$ onto its $i^{\text{th}}$ coordinate. In particular, $\pi_i(S) = \{ e_i : (e_1, \dots, e_i, \dots, e_n) \in S \}$.

\subsection{The total function hierarchy}
We now formally define a search problem and problems in the total function polynomial hierarchy, i.e. $\cc{TF\Sigma_i^P}$.
\begin{definition}
    A search problem is a binary relation $\mathcal{R} \subseteq \{ 0, 1\}^* \times \{ 0, 1\}^*$ where we say that a $y$ is a solution to $x$ if and only if $(x, y) \in \mathcal{R}$.
\end{definition}

\begin{definition}[\cite{kleinberg2021total}]
    A relation $\mathcal{R}$ is in \cc{TF\Sigma_i^P} if the following conditions hold:
    \begin{enumerate}
        \item $\mathcal{R}$ is total: for all $x \in \{ 0, 1\}^*$, there exists $y \in \{ 0, 1\}^*$ such that $(x, y) \in \mathcal{R}$.
        \item $\mathcal{R}$ is polynomial: For all $(x, y) \in \mathcal{R}$, $|y| \leq \poly(|x|)$.
        \item There exists a polynomial time Turing machine $M$ and polynomials $p_1(n), \dots, p_{i-1}(n)$ such that 
        \[(x, y) \in R \iff \forall z_1 \in \{ 0, 1\}^{p_1(|x|)} \exists z_2 \in \{ 0, 1\}^{p_2(|x|)} \forall z_3 \in \{ 0, 1\}^{p_3(|x|)} \dots V(x, y, z_1, \dots, z_{i-1})\text{ accepts.}\]
    \end{enumerate}
\end{definition}

\begin{definition}
    A relation $\mathcal{R}$ is in \cc{TFU\Sigma_i^P} if it is in \cc{TF\Sigma_i^P} and for all $x \in \{ 0, 1\}^*$, there exists exactly one $y$ such that $(x, y) \in \mathcal{R}$.
\end{definition}
\cc{TF\Sigma_1^P} and \cc{TFU\Sigma_1^P} are also referred to as \cc{TFNP} and \cc{TFUP} respectively. Informally, \cc{TFNP} is the set of search problems where a solution always exists and is efficiently verifiable (is in \cc{FNP}, the search analogue of \cc{NP}). \cc{TF\Sigma_i^P} is a generalization of this notion higher in the polynomial hierarchy, where the verifier need not be polynomial time.

With the notion of search problem pinned down, one can begin to talk about reductions between search problems. The norm when reducing between search problems is to restrict oneself to many-to-one reductions. The following definition formalizes the intuitive idea that a reduction from $\mathcal{R}$ to $\mathcal{Q}$ should take as input an instance $x$ of $\mathcal{R}$, transform it to an instance $f(x)$ of $\mathcal{Q}$, get back an answer $y$ to that instance, and transform that to $g(y)$, an answer for instance $x$ of $\mathcal{R}$.
\begin{definition}
    \label{def: reduction}
    Let $\mathcal{R}, \mathcal{Q}$ be two search problems. A many-to-one reduction from $\mathcal{R}$ to $\mathcal{Q}$ is defined as two polynomial time computable functions $f, g$ such that for all $x \in \pi_1(\mathcal{R}), y \in \{ 0, 1\}^*$, the following holds.
    \[ (x, g(y)) \in \mathcal{R} \impliedby (f(x), y) \in \mathcal{Q} \]
\end{definition}

We note that \cref{def: reduction} is slightly different from the notion of a reduction between \cc{TFNP} problems since we only quantify over all $x \in \pi_1(\mathcal{R})$ rather than all $x \in \{ 0, 1\}^*$. We do this since we will often work with problems where the inputs to the search problem are restricted and there exists some $x$ such that $x \notin \pi_1(\mathcal{R})$. One should think of these relations as promise problems. Occasionally, it will be helpful to make these problems total in a trivial way, so we define the following.
\begin{definition}\label{def:relation_total}
    Let $\mathcal{R}$ be a relation and let $A = \{ (x, \bot) : x \notin \pi_1(\mathcal{R})\}$. We define the completion of $\mathcal{R}$ as $\overline{\mathcal{R}} := \mathcal{R} \cup A$.
\end{definition}

We now define what it means for an (oracle) algorithm to solve a search problem $\mathcal{R}$ in polynomial time.
\begin{definition}
    We say $\mathcal{R}$ has a polynomial time algorithm $\mathcal{A}$ if $\mathcal{A}$ runs in polynomial time for all inputs in $\{ 0, 1\}^*$ and for all $x \in \pi_1(\mathcal{R})$, $(x, \mathcal{A}(x)) \in \mathcal{R}$. 
\end{definition}

\begin{definition}
    Let $\mathcal{Q}$ be a total search problem. We say $\mathcal{R}$ has a polynomial time algorithm $\mathcal{A}^{\mathcal{Q}}$ if $\mathcal{A}^{\mathcal{Q}}$ runs in polynomial time and for all $x \in \pi_1(\mathcal{R})$ and all instantiations of the oracle $\mathcal{Q}$, $(x, \mathcal{A}^{\mathcal{Q}}(x)) \in \mathcal{R}$.
\end{definition}




\subsection{TFNP Sub-problems}
Finally, we will make extensive use of the following \cc{TFNP} subclasses. For any \cc{TFNP} subclass \cc{C} has a circuit as part of its input, we define $\cc{C^{\Sigma_i^P}}$ as the same problem as \cc{C} except that the input circuit to \cc{C} may include $\cc{\Sigma_i^P}$ oracle gates. It is not hard to confirm that $\cc{C^{\Sigma_i^P}} \subseteq \cc{TF \Sigma_{i+1}^P}$.

\begin{definition}[\cc{PLS} and \prob{Sink-of-DAG}]
    \cc{PLS} is defined as all problems which are many-to-one reducible to \prob{Sink-of-DAG}, which is defined as follows. Given $\poly(n)$ size circuits $S: [2^n] \rightarrow [2^n]$ (successor circuit) and $V: [2^n] \rightarrow [2^n]$ (value circuit), find $v$ such that $S(v) \neq v$ and either $S(S(v)) = S(v)$ or $V(S(v)) \leq V(v)$.
\end{definition}
One should think of \prob{Sink-of-DAG} as a solution to a gradient ascent problem. There are two circuits, a successor circuit and a value circuit. At every point $v$ in the space, we hope that the successor function $S$ leads to a point which has a higher value $(V(S(v)) > V(v))$. A solution to \prob{Sink-of-DAG} is a point such that this condition is violated $V(S(v)) \leq V(v)$, or one which acts as (a predecessor of) a sink in the gradient ascent process, where $S(v) \neq v$ but $S(S(v)) = S(v)$.


We will also make use of the following promise problem. \prob{Sink-of-Verifiable-Line} is very similar to \prob{Sink-of-DAG} except that we also require a verifier which can tell us if a vertex $x$ is the $i^{\text{th}}$ vertex visited by repeatedly applying the successor function. Note that \prob{Sink-of-Verifiable-Line} is a promise problem since we have no way to confirm if the verifier meets this condition.
\begin{definition}[\prob{Sink-of-Verifiable-Line}] Given a successor circuit $S : \{0, 1\}^n \to \{0, 1\}^n$,  a source $s \in \{0, 1\}^n$, a target index $T \in [2^n]$, and a verifier circuit 
$W : \{0, 1\}^n \times [T] \to \{0, 1\}$ with the guarantee that for $(x, i) \in \{0, 1\}^n \times [T]$, $W(x, i) = 1$ if and only if $x = S^{i-1}(s)$, find the string $v \in \{0, 1\}^n$ such that $W(v, T) = 1$.
\end{definition}




Finally, although we will not make use of the complete problem for the \cc{UEOPL}, we present it here for the sake of completeness. See \cite{fearnley2020unique} for an explanation of why the following is a reasonable definition.

\begin{definition}[\prob{Unique-End-of-Potential-Line}]
\prob{Unique-End-of-Potential-Line} is defined as follows. Given two $\poly(n)$ size circuits $S, P : \bit^n \rightarrow \bit^n$, such that $P(0^n) = 0^n \neq S(0^n)$, and a value circuit $V:\bit^n \rightarrow \{0, 1, \hdots, 2^{m-1}\}$ where $V(0^n) = 0$. Find one of the following:
\begin{itemize}
    \item A point $x \in \bit^n$ such that $P(S(x)) \neq x$.

    \item A point $x \in \bit^n$ such that $x \neq S(x), P(S(X)) = x$ and $V(S(x)) - V(x) \leq 0$.

    \item A point $x \in \bit^n$ such that $S(P(x)) \neq x \neq 0^n$. 

    \item Two points $x, y \in \bit^n$, such that $x \neq y$, $x \neq S(x)$, $y \neq S(y)$, and either $V(x) = V(y)$ or $V(x) < V(y) < S(x)$.
\end{itemize}
\cc{UEOPL} is defined as all problems which are many-to-one reducible to \prob{Unique-End-of-Potential-Line}.
\end{definition}

\begin{lemma}[\cite{fearnley2020unique}]
    \label{lem: sovl_to_ueopl}
    \prob{Sink-of-Verifiable-Line} reduces to \prob{Unique-End-of-Potential-Line} under relativizing reductions.
\end{lemma}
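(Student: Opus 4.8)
The plan is to construct, from a \prob{Sink-of-Verifiable-Line} instance $(S,s,T,W)$, a \prob{Unique-End-of-Potential-Line} instance whose single potential line is exactly the verifiable line $s,\,S(s),\,S^2(s),\dots,S^{T-1}(s)$, with position along the line serving as the potential, so that the unique end of the line is a vertex whose first coordinate is the desired answer $v=S^{T-1}(s)$.

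Concretely, I would take the vertex set of the \cc{UEOPL} instance to be $\{0,1\}^n\times[2^n]$, encoded in $\{0,1\}^{O(n)}$, and after a trivial relabelling identify $0^{O(n)}$ with the pair $(s,1)$ so that the source condition $P(0^{O(n)})=0^{O(n)}\neq S(0^{O(n)})$ holds. Call $(x,i)$ \emph{live} if $W(x,i)=1$; on a promise instance the live vertices are exactly $\{(S^{i-1}(s),i):i\in[T]\}$ and are pinned down uniquely by the index $i$. The value circuit outputs the index $i$ (shifted so the source gets value $0$) on a live vertex, and a large, pairwise-distinct ``junk'' value (say $2^{2n}+\langle x,i\rangle$) on every other vertex, so that all potentials are distinct and every live potential lies below every junk potential. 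The successor circuit self-loops on non-live vertices and on the index-$T$ live vertex, and otherwise sends the live vertex $(x,i)$ to $(S(x),i+1)$, which is again live on a promise instance and is checkable via $W$. The predecessor circuit self-loops on non-live vertices and on the source, and sends the index-$i$ live vertex with $i\geq2$ back to the index-$(i-1)$ live vertex.

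Granting that the circuits behave as described, correctness is routine. There are no uniqueness-violation (``type 4'') solutions: any two distinct non-self-loop vertices are live, hence have distinct consecutive integer potentials, so neither $V(x)=V(y)$ nor the strict squeeze $V(x)<V(y)<V(S(x))$ can hold. There are no other solutions at non-live vertices, since those self-loop under both the successor and the predecessor circuit. Along the live line the potential strictly increases and the successor and predecessor circuits are mutual inverses, so the only remaining solution is the index-$T$ live vertex, which is a ``type 1'' solution (a self-loop whose predecessor image is a different vertex). The decoding map $g$ returns its first coordinate, which equals $v=S^{T-1}(s)$; since the whole construction treats $S$ and $W$ as black-box circuits, the reduction relativizes, yielding \cref{lem: sovl_to_ueopl}.

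The step I expect to be the real obstacle, and the reason this is a genuine construction rather than a one-liner, is implementing the predecessor circuit: from the index-$i$ live vertex, whose first coordinate is $S^{i-1}(s)$, one must produce the index-$(i-1)$ live vertex, whose first coordinate is $S^{i-2}(s)$, and since $S$ need not be injective and the line may have length $2^n$, this can be done neither by inverting $S$ nor by walking the line from $s$, nor by threading a bounded amount of history through the vertex encoding. It is precisely here that the verifier $W$ has to be used in an essential way (it is the only handle the instance has on ``position along the line''), and the delicate point is to do so through an encoding that introduces no spurious self-loops or potential collisions. I would fill this in following the construction of \cite{fearnley2020unique}; note that since the target is only membership in \cc{UEOPL}, any instance produced this way already has no uniqueness violations by the potential argument above.
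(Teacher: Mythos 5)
Your outline has the right shape, and you are correct to single out the predecessor circuit as the crux—but that is exactly why this is not yet a proof. In \cc{UEOPL} the circuit $P$ must satisfy $P(S(x))=x$ and $S(P(x))=x$ along the intended line, otherwise every live vertex becomes a spurious type-1 or type-3 solution. With your vertex set $\{0,1\}^n\times[2^n]$, computing $P(S^{i-1}(s),i)=(S^{i-2}(s),i-1)$ requires recovering $S^{i-2}(s)$, and, as you note, $S$ may not be injective, the line may have length $2^n$, and $W$ is only a verifier (it accepts or rejects a candidate but cannot be used to search for one). So none of inverting $S$, walking from $s$, or querying $W$ gives a polynomial-time predecessor, and you cannot sidestep this by, e.g., making $P$ constant, since that immediately produces type-1 solutions at every non-sink. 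This single step is essentially the whole technical content of the lemma, and deferring it to ``the construction of~\cite{fearnley2020unique}'' means the argument as written is an outline, not a reduction.

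What \cite{fearnley2020unique} actually does (following the SVL-to-EOL technique of Bitansky, Paneth, and Rosen, and of Hub\'a\v{c}ek and Yogev) is abandon the naive $(x,i)$ encoding in favor of vertices that record $O(\log T)$ \emph{checkpoints}: a short list of positions on the line together with their states. The successor and predecessor of a configuration are defined by a reversible pebbling rule that adds or removes one checkpoint at a time, so that both directions are locally computable, and $W$ is used to detect invalid configurations (which are then self-looped out of the way). Your value, self-looping, and type-4 analysis would carry over to that richer encoding, but it is the pebbling step—which you have not attempted—that makes the predecessor computable. Also note that the paper itself does not prove this lemma; it cites it, so there is no in-paper proof to compare against, but a self-contained proposal still needs to supply the checkpointing construction rather than point to it.
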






\section{Extending Downward self-reducibility}\label{sec:mu_dsr}

\subsection{\texorpdfstring{$\boldsymbol{\mu}$}{mu}-Downward self-reducibility}
We begin by defining the types of problems for which we hope to show downward self-reducibility: Efficiently verifiable promise problems.

\begin{definition}
    \label{def: promiseTFSigma_i}
    We say that a search problem $\mathcal{R} \subseteq \{ 0, 1\}^* \times \{ 0, 1\}^*$ is in the class \cc{PromiseF\Sigma_i^P} if it satisfies the following conditions.
    \begin{enumerate}
        \item
        For all $x \in \pi_1(\mathcal{R})$, and for all $y$ s.t. $(x, y) \in \mathcal{R}$, $|y| \leq \poly(|x|)$.
        \item 
        There exists a polynomial time verifier $V$ and polynomials $p_1(n), \dots, p_{i-1}(n)$ such that the following holds: for all $x \in \pi_1(R)$, 
        \[(x, y) \in R \iff \forall z_1 \in \{ 0, 1\}^{p_1(|x|)} \exists z_2 \in \{ 0, 1\}^{p_2(|x|)} \forall z_3 \in \{ 0, 1\}^{p_3(|x|)} \dots V(x, y, z_1, \dots, z_{i-1})\text{ accepts.}\]
    \end{enumerate}
\end{definition}

 \begin{definition}
     We say a search problem $\mathcal{R} \subseteq \{ 0, 1\}^* \times \{ 0, 1\}^*$ has unique solutions if for all $x \in \{ 0, 1\}^{*}$, there exists at most one $y$ such that $(x, y) \in \mathcal{R}$.
 \end{definition}

Perhaps the most helpful way to think of \cc{PromiseF\Sigma_i^P} is as a version of \cc{TF\Sigma_i^P} where some input instances are disallowed. In particular, we are only concerned with the instances $\pi_1(\mathcal{R})$\footnote{$\pi_1(\mathcal{R})$ is the set of instances that have a solution, for example, in the case of \prob{FSAT}: $\pi_1(\mathcal{R})$ is the set of satisfiable formulas.}. For a problem $\mathcal{R}$, we will often say that $x$ meets the promise of \cc{TF\Sigma_i^P} if $x \in \pi_1(\mathcal{R})$. Some examples of \cc{PromiseF\Sigma_i^P} problems to have in mind are \prob{Promise-FSAT} and \prob{Promise-Unique-FSAT}.


\begin{definition}
    \prob{Promise-FSAT}: $(\varphi, y) \in$ \prob{Promise-FSAT} if $\varphi$ is a satisfiable boolean formula, and $y$ is a satisfying assignment for $\varphi$.
\end{definition}

\begin{definition}
    \prob{Promise-Unique-FSAT}: $(\varphi, y) \in$ \prob{Promise-Unique-FSAT} if $\varphi$ is a satisfiable boolean formula with a unique satisfying assignment, and $y$ corresponds to the unique satisfying assignment for $\varphi$.
\end{definition}


We now define $\mu$-downward self-reducibility. Intuitively, a problem $\mathcal{R}$ is $\mu$-d.s.r if there exists some function $\mu$ over instances of $\mathcal{R}$ such that there exists a self-reduction for $\mathcal{R}$ which calls its $\mathcal{R}$ oracle on instances which have strictly smaller values for $\mu$. A traditionally d.s.r problem is $\mu$-d.s.r for $\mu(x) = |x|$. In some sense, $\mathcal{R}$ being $\mu$-d.s.r means that there is a recursive algorithm where $\mu$ decreases at each recursive call. In this way, we believe $\mu$-d.s.r more fully captures recursive algorithms over traditional d.s.r.
\begin{definition}
    \label{def: mu-dsr}
    Let $\mu:\bit^* \rightarrow \N$ be a function. A \cc{PromiseF\Sigma_i^P} problem $\mathcal{R}$ is (randomized) $\cc{\Sigma_{i-1}^P}$-$\mu$-downward-self-reducible ($\mu$-d.s.r.) if there is a (randomized) polynomial-time oracle algorithm $\mathcal{A}^{\cc{\Sigma_{i-1}^P}, \overline{\mathcal{R}}}$ for $\mathcal{R}$ that on input $x \in \pi_1(\mathcal{R})$, makes queries to a $\mathcal{R}$-oracle on instances $y$ such that $y$ satisfies the promise of $\mathcal{R}$, $|y| \leq |x| + \poly(\mu(x))$, and $\mu(y) < \mu(x)$. We simply say that such a problem is $\mu$-downward-self-reducible when $\cc{\Sigma_{i-1}^P}$ is clear from context.
\end{definition}

\gnote{This is new. Please check it.}
\znote{made it into bullet points.}

\begin{remark}
    We make a few comments about the definition above.
    \begin{itemize}
        \item It is implicit in the \cref{def: mu-dsr} that for any instance $x$ where $\mu(x) = 0$, the reduction algorithm $\mathcal{A}^{\cc{\Sigma_{i-1}^P}, \overline{\mathcal{R}}}$ straightaway solves $x$ since it cannot make any query of smaller $\mu$.

        \item To be fully formal, \cref{def: mu-dsr} specifies that $\mathcal{R}$ is $\cc{\Sigma_{i-1}^P}$-$\mu$-downward-self-reducible rather than just saying it is $\mu$-downward-self-reducible to eliminate trivial cases like the following. $\prob{Promise-FSAT}$ is a $\cc{TF\Sigma_2^P}$ problem since it is a $\cc{FNP}$ problem. Furthermore, $\prob{Promise-FSAT}$ is $\cc{\Sigma_1^P}$-$\mu$-d.s.r since one can use the $\cc{\Sigma_1^P}$ oracle to find a solution. So in some sense, $\prob{Promise-FSAT}$ is $\mu$-d.s.r. However, this clearly misses the point that we wish for the d.s.r to have access to less computational power than an oracle which simply solves the problem in question. We therefore parameterize the d.s.r by the oracle it is allowed access to ($\cc{\Sigma_{i-1}^P}$-$\mu$-downward-self-reducible). However, for the rest of this work, it will be clear what $\cc{\Sigma_{i-1}^P}$ should be and we therefore omit it and simply write $\mu$-d.s.r to avoid unnecessary notation.

        \item In the definition of $\mu$-d.s.r, if $x$ meets the promise of $\mathcal{R}$, then all the queries which $\mathcal{A}^{\cc{\Sigma_{i-1}^P}, \overline{\mathcal{R}}}$\footnote{$\overline{\mathcal{R}}$ is the extension of $\mathcal{R}$ that makes it total (see \cref{def:relation_total}).} makes to its $\mathcal{R}$ oracle, call them $y_i$, must satisfy $y_i \in \pi_1(R)$. In other words, if $x$ satisfies the promise of $\mathcal{R}$, the $\mu$-d.s.r for $\mathcal{R}$ can only make queries to its oracle which satisfy the promise of $\mathcal{R}$. Notice also that by our definition of $\mu$-d.s.r, we do not have to worry about the behavior of the oracle for $\mathcal{A}$ on inputs which do not satisfy the promise of $\mathcal{R}$. One should imagine these inputs as causing the oracle to output $\bot$. However we wish to emphasize that this is just for definitional convenience and that $\mathcal{A}$ can never use such information, since by definition, it never queries its oracle on such an input.

        For intuition, let us consider what a $\mu$-d.s.r for \prob{Promise-FSAT} would look like. This would be an algorithm which on a satisfiable SAT instance $x$ would query \emph{only} satisfiable SAT instances $y_1, \dots, y_{\poly(n)}$ such that each $y_i$ is smaller than $x$ (under $\mu$), receives back satisfying assignments $a_1, \dots, a_{\poly(n)}$, and then uses those to construct a satisfying assignment for $x$.
    \end{itemize}
\end{remark}


We now prove one of our main theorems relating $\mu$-d.s.r to \cc{PLS}/\cc{UEOPL}. The proof uses essentially the same strategy as \cite{harsha2022downward} of creating a \prob{Sink-of-DAG} instance where nodes represent possible stack traces of the natural recursive algorithm for solving our $\mu$-d.s.r problem.

\begin{theorem}
\label{thm:mu-dsr}
    Let $\mathcal{R}$ be a search problem in \cc{PromiseF\Sigma_i^P} which has a (randomized) $\mu$-d.s.r. If $\mu(x) \leq p(|x|)$ for a polynomial $p$, then there is a (randomized) reduction from $\mathcal{R}$ to $\PLS^{\cc{\Sigma_{i-1}^P}}$. Furthermore, if $\mathcal{R}$ has unique solutions, then there is a (randomized) reduction from $\mathcal{R}$ to $\cc{UEOPL}^{\cc{\Sigma_{i-1}^P}}$.
\end{theorem}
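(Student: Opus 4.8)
The plan is to follow the strategy of \cite{harsha2022downward}: turn the natural recursive algorithm induced by the $\mu$-d.s.r.\ into a \prob{Sink-of-DAG} instance whose vertices are the possible \emph{configurations} (stack traces) of that recursion, so that walking the \prob{Sink-of-DAG} graph exactly simulates running the recursion. Fix an input $x$ with $|x|=n$ in $\pi_1(\mathcal{R})$ and let $\mathcal{A}$ be the $\mu$-d.s.r.\ algorithm. Because $\mu(x)\le p(n)$ and every recursive call strictly decreases $\mu$, the recursion depth is at most $p(n)$; and because a query $y$ made from an instance $z$ satisfies $|y|\le |z|+\poly(\mu(z))$ with $\mu(z)\le p(n)$ at every level, telescoping over the at most $p(n)$ levels shows that every instance ever appearing has size at most $n+p(n)\cdot\poly(p(n))=\poly(n)$. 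Hence a configuration --- a stack of at most $p(n)$ frames, each recording the instance at that level, the (paused or halted) internal state of $\mathcal{A}$ on it, and the answers it has already received, together with a global step counter --- is a string of $N:=\poly(n)$ bits. The $\cc{\Sigma_{i-1}^P}$ oracle gates allowed inside the \prob{Sink-of-DAG} circuits are used in two ways: (i) to simulate the $\cc{\Sigma_{i-1}^P}$ oracle that $\mathcal{A}$ itself may call, and (ii) before a frame on instance $y$ is allowed to pop and hand its output $a$ to its parent, to decide the $\cc{\Pi_{i-1}^P}$ predicate ``$(y,a)\in\mathcal{R}$'' from \cref{def: promiseTFSigma_i} (which a $\cc{\Sigma_{i-1}^P}$ oracle can do), so that only genuine sub-solutions are ever propagated.

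Next, define the successor circuit $S$ to perform one elementary step of this simulation: advance the top frame's run of $\mathcal{A}$ by one TM step; if it is poised to make a recursive query $y$, push a fresh frame for $y$ (legal since $\mu(y)<\mu(\text{current})$); if it has halted with an output that passes the $\cc{\Sigma_{i-1}^P}$ check, pop it and write that output into the parent frame's oracle-answer tape; and if the bottom (level-$0$) frame has halted, let $S$ fix the configuration. Configurations that are not well-formed stacks are also made self-loops, as are frames that halt with an output failing the check. The value circuit $V$ is just the global step counter, which $S$ increments by exactly one on every real step, so $V$ is strictly monotone along every non-self-loop edge. The structural claim to establish is: $S$ maps well-formed configurations to well-formed configurations; ill-formed configurations (and, given correct coins, ``bad output'' frames) are isolated self-loops with no $S$-predecessors; and every well-formed configuration flows, monotonically in $V$, to a halted level-$0$ configuration, each of which records a valid answer for $x$ --- here one uses that a $\mu$-d.s.r.\ is correct for \emph{every} valid choice of sub-solutions it could receive. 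Granting this, every \prob{Sink-of-DAG} solution is a predecessor of some halted level-$0$ configuration, so the map $g$ that reads $\mathcal{A}$'s recorded output off $S(v)$ yields a valid answer for $x$; this is the required reduction to $\PLS^{\cc{\Sigma_{i-1}^P}}$.

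For the randomized case, first apply standard error reduction so that $\mathcal{A}$ errs with probability at most $2^{-q(n)}$ for a polynomial $q$ to be fixed, then have the reduction sample a single coin string $R$ and hardwire it into every frame. Let $U$ be the set of all instances that could ever be queried, over all coin outcomes, starting from $x$; by the size bound above every element of $U$ has length $\poly(n)$, so $|U|\le 2^{\poly(n)}$. Choosing $q(n)>\log|U|$, a union bound shows that with probability $>1/2$ (amplifiable) the string $R$ is \emph{good}: for every $z\in U$, $\mathcal{A}(z;R)$ given correct sub-answers halts with a correct answer. Conditioned on $R$ good, no frame ever triggers the ``bad output'' self-loop, so the structural claim goes through verbatim and $g$ extracts a correct answer for $x$; this is exactly a randomized reduction from $\mathcal{R}$ to $\PLS^{\cc{\Sigma_{i-1}^P}}$.

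Finally, for the uniqueness statement we target \prob{Sink-of-Verifiable-Line} and then invoke \cref{lem: sovl_to_ueopl}. When $\mathcal{R}$ has unique solutions (in the randomized case, after hardwiring a good $R$), the answer handed back by every popped frame is forced, so $\mathcal{A}$'s whole trajectory on $x$ is determined and at each value of the global counter there is exactly one well-formed configuration; the configuration graph degenerates to a line $s=S^0(s),S^1(s),\dots$ starting at the initial configuration $s$. Extend the line past the halted configuration $w$ by having $S$ just bump the counter on $w$ until it reaches a fixed target $T=2^{N}$, where $N$ is chosen large enough that $2^{N}$ exceeds the at most $2^{\poly(n)}$ total number of simulation steps. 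Set $s$ as the source, $T$ as the target index, and let the verifier $W(v,i)$ accept iff $v$ is a well-formed configuration (or a counter-bumped copy of $w$) whose counter equals $i-1$ --- a $\cc{\Pi_{i-1}^P}$ test, hence a circuit with $\cc{\Sigma_{i-1}^P}$ gates --- which is the promised verifier precisely because of the per-counter uniqueness. Then $S^{T-1}(s)=w$, from which $g$ reads $\mathcal{A}$'s answer for $x$, and \cref{lem: sovl_to_ueopl} (which relativizes) gives the reduction to $\cc{UEOPL}^{\cc{\Sigma_{i-1}^P}}$. I expect the main obstacle to be exactly this bookkeeping behind the structural claim --- choosing the configuration encoding, the well-formedness predicate, and $S$ so that $V$ is genuinely monotone, so that $S$ never creates a premature sink, and (in the uniqueness case) so that every configuration has a single efficiently checkable predecessor --- together with the observation that the randomized union bound ranges over the exponentially large but still $2^{\poly(n)}$-size set $U$, which is what makes the explicit error reduction necessary.
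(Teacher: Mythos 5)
Your proposal follows the same strategy as the paper's proof: encode the stack trace of the recursion induced by the $\mu$-d.s.r.\ as vertices of a \prob{Sink-of-DAG} instance, let the successor circuit advance the simulation, make the potential strictly increase by one per step, and (for uniqueness) target \prob{Sink-of-Verifiable-Line} and invoke \cref{lem: sovl_to_ueopl}; the randomized and relativized cases are handled the same way (amplify, hardwire coins, use $\cc{\Sigma_{i-1}^P}$ gates both to simulate $\mathcal{A}$'s oracle and to verify purported sub-solutions).

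One detail worth making explicit, since it is precisely the bookkeeping you flag: you store the potential as a separate global step counter, whereas the paper \emph{computes} the potential as a weighted count read off the table itself (after padding the recursion tree to a perfect $w$-ary tree of depth $d$, so that this count is exactly the step index). If the counter is a free-standing field, your well-formedness predicate must also check that the counter agrees with the number of steps needed to produce the stored stack trace (a poly-time computation from the trace); otherwise a well-formed stack with an absurdly large counter could wrap around under $S$, or an ill-formed $S(v)$ with $V(S(v))=0$ could create a spurious \prob{Sink-of-DAG} sink. Deriving the potential from the table, as the paper does, eliminates this failure mode by construction; the cost is the padding needed to make the per-step increment exactly one, which your explicit counter gets for free.
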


\znote{Please double check if the updated proof makes sense. Especially second paragraph}

\begin{proof}
    We start by proving the simple case where the $\mu$-d.s.r. algorithm is deterministic and $i = 1$, i.e. $\mathcal{R} \in \cc{\cc{PromiseF\Sigma_1^P}}$. 
    If the following reduction breaks at any point or runs for too long, the reduction simply outputs $\bot$. This will be fine since we will show that the reduction works whenever $x$ satisfies the promise of $\mathcal{R}$.
    
    Let $A'$ be the polynomial-time oracle algorithm for $\mathcal{R}$. Without loss of generality, let us assume that on input $x\in \bit^n$, $A'$ makes at most $q(n)$ queries, each of size bounded by $n+\ell(\mu(x))$ for polynomials $q$ and $\ell$. Let $A$ be an (exponential-time) depth-first recursive algorithm that simulates $A'$ and recursively calls $A'$ whenever $A'$ makes an oracle query. Due to the strictly decreasing property of $\mu(x)$ in the $\mu$-d.s.r. definition, the maximum depth of recursion is $\mu(x) \leq p(n)$ and all instances have size bounded by $s:= n + \mu(x)\ell(\mu(x))$. $A'$ on input $x$ computes depth $d := \mu(x)$ and width $w := q(s)$. Without loss of generality by introducing dummy oracle queries, we enforce $A'$ to make \emph{exactly} $w$ oracle queries at each simulation of $A'$ and have recursive depth \emph{exactly} $d$. In other words, the computational graph of $A$ would be a perfect $w$-ary tree of height $d$. For simplicity, let us further assume that the solutions have the same size as the instances, which can be done by padding $\mathcal{R}$ in the first place.

    On input $x \in \bit^n$, we will construct a $\prob{Sink-of-DAG}$ instance as follows. Each vertex is represented by a table $t[\cdot, \cdot]$ of $d + 1$ rows and $w$ columns. Each entry $t[i,j] \in \bit^{s}\times \bit^{s}$ consists of an instance-solution pair and may take one of the following forms:
    \begin{enumerate}
        \item $(\xi, \bot)$ of an $\mathcal{R}$-instance $\xi$ and $\bot$ indicating that its solution is yet to be found.
        \item $(\xi, \gamma)$ of an $\mathcal{R}$-instance $\xi$ and its purported solution $\gamma$.
        \item $(\bot, \bot)$ indicating that there is no instance.
    \end{enumerate}

    Intuitively, $t[\cdot, \cdot]$ is the stack content of the recursive algorithm $A$. On input $x$, the source vertex $t_0$ is defined by setting $t_0[0,1] = (x,\bot)$ and $t_0[i,j] = (\bot,\bot)$ for all other $i,j$. 

    \paragraph{Validity} A vertex $t$ is valid if and only if it passes the following validity test: Let $i^*$ be the smallest index such that the row $t[i^*,\cdot]$ consists of only $(\bot, \bot)$. For $i \in [p(n)]$, let $j_i^*$ be the smallest index such that $t[i,j_i^*]$ takes the form of $(\xi, \bot)$.
    \begin{itemize}
        \item All rows $t[i,\cdot]$ for $i \geq i^*$ consist of only $(\bot, \bot)$. For any $i \in [p(n)]$ and $j \geq j_i^*$, $t[i,j] = (\bot, \bot)$.
        \item For any $t[i,j]$ taking the form $(\xi, \gamma)$, $\gamma$ is a valid solution for $\xi$. We check this using the verifier for $\mathcal{R}$.
        \item For any $t[i,j] \neq (\bot,\bot)$, consider simulating $A'$ on the input instance stored in $t[i-1, j^*_{i-1}]$, and verify that the first $j-1$ query-solution pairs are in $t[i,1], \ldots, t[i,j-1]$, and $t[i,j]$ is the $j$-th query made by $A'$.
    \end{itemize}

    \paragraph{Successor} On vertex $t$, the successor circuit $S$ behaves as follows.
    \begin{itemize}
        \item If $t$ is invalid, then $S(t) = t$. These are isolated vertices.
        \item If $t[0,1] = (x,y)$ and $y\neq \bot$ is a valid solution to $x$, then $S(t) = t$. 
        \item Otherwise, we construct successor vertex $t'$ as follows. Let $i$ be the largest index such that the $i$-th row $t[i,\cdot]$ consists of an entry of the form $(\xi,\bot)$. Suppose that the $(i+1)$-th row $t[i+1, \cdot]$ has $j$ query-solution pairs. If $A'$ on input $\xi$ and these $j$ query-solution pairs makes a $(j+1)$-th query $\xi'$, then we set $t[i+1, j+1]:=(\xi', \bot)$. Otherwise $A'$ would have found a solution $\gamma$ for $\xi$. In this case, we replace $(\xi, \bot)$ by $(\xi, \gamma)$ and set $t[i+1,\cdot]$ to all $(\bot, \bot)$.
    \end{itemize}

    \paragraph{Potential} For invalid $t$, $V(t) = 0$. For valid vertex $t$, the potential circuit $V$ returns the \emph{exact} number of steps taken for the depth-first recursive algorithm $A$ to reach the state $t$. This can be easily computed as follows: Go through the table $t$, for each $t[i,j] = (\xi, \gamma)$, add $\sum_{k = i}^{d} 2w^{d-k}$; for each $t[i,j] = (\xi, \bot)$, add $1$. 
    We provide a quick verification that for a non-sink vertex $t$, $V(S(t)) = V(t) + 1$:
    \begin{itemize}
        \item If $S(t)$ writes an entry from $(\bot, \bot)$ to $(\xi, \bot)$, the potential increases by exactly 1.
        \item If $S(t)$ writes an entry $t[i,j]$ from $(\xi, \bot)$ to $(\xi, \gamma)$ for $i = d$ (i.e. at the leaf level), the potential changes by 2 - 1 = 1.
        \item If $S(t)$ writes an entry $t[i,j]$ from $(\xi, \bot)$ to $(\xi, \gamma)$ for $i < d$ and erase the row $t[i+1, \cdot]$ to $(\bot, \bot)$, by assumption all $w$ entries in $t[t+1, \cdot]$ were of the form $(\xi', \gamma')$. Hence the potential changes by $\sum_{k = i}^{d} 2w^{d-k} - w \cdot \sum_{k = {i+1}}^{d} 2w^{d-k} - 1 = 1$.
    \end{itemize}

    \paragraph{} Correctness follows from that the only non-isolated vertices without a proper successor correspond to the final configurations of $A$ with the solution to $x$ written in $t[0,1]$.

    \paragraph{Randomized Reduction} If the $\mu$-d.s.r. algorithm $A'$ is randomized, without loss of generality assume that $A'$ uses $r(n)$ bits of randomness and succeeds with probability at least $1 - 1/\nu(n)$ for polynomials $r(\cdot)$ and $\nu(\cdot)$. 

    Since $\mathcal{R} \in \cc{PromiseF\Sigma_1^P}$, we could verify the validity of the solution. Hence, we start by amplifying the success probability to $1 - \frac{1}{\nu(n)^{n+s}}$ by repeating $A'$ $(n+s)$ times in parallel with $(n+s)r(n)$ bits of randomness. By union bound over at most $2^{s}$ instances of size $s$, the success probability remains exponentially high.

    The randomized reduction to $\PLS$ is as follows. Sample $(n+s)r(n)$ bits of randomness and hardcode them into the $\PLS$ instance. With the randomness fixed, we could apply the reduction from above to obtain the $\PLS$ instance. 

    \paragraph{Relativization} The proof above fully relativizes in the following sense: the \emph{only} property of $\cc{PromiseF\Sigma_1^P}$ used is that given an instance-solution pair $(\xi, \gamma)$, one could verify if $\gamma$ is a solution for $\xi$ in polynomial time. Now if $\mathcal{R}$ is a $\cc{PromiseF\Sigma_i^P}$ problem, one could use a $\cc{\Sigma_{i-1}^P}$ oracle to verify the validity of a purported solution. Moreover, the d.s.r. algorithm $A'$ (with $\cc{\Sigma_{i-1}^P}$ oracle) can be simulated by a circuit with $\cc{\Sigma_{i-1}^P}$ oracle gates.
    
    \paragraph{Unique Solutions} We reduce to \prob{Sink-Of-Verifiable-Line} and the appeal to the fact that \prob{Sink-Of-Verifiable-Line} reduces to \cc{UEOPL} (\cref{lem: sovl_to_ueopl}). The reduction to \prob{Sink-Of-Verifiable-Line} uses the exact same $S$ as previously but creates $W$ as $W(t, i) = 1$ if and only if $t$ is valid and $V(t) = i$, sets $T = \sum_{k = 0}^{d} 2w^{d-k}$ and then calls \prob{Sink-Of-Verifiable-Line} to get an answer $t'$. The reduction then outputs the solution stored in $t'[0, 1]$. We see that $W$ is a valid verifier for \prob{Sink-Of-Verifiable-Line}. If $S^{i-1}(s) = t$, then $W(t, i) = 1$ because $t$ must be valid and $V(t) = i$ since $V$ increased by exactly one at each application of $S$. If $W(t, i) = 1$, then $S^{i-1}(s) = t$. This follows from the fact that since $\mathcal{R}$ has unique solutions, for every $i$, there exists exactly one valid table $t$ such that $V(t) = i$. Therefore, $W$ satisfies the promise required by \prob{Sink-Of-Verifiable-Line}. $S^{T-1}(s) = t'$ is clearly the $(T-1)^{\text{th}}$ state of the recursive algorithm $A$ and therefore clearly contains the solution for $x$, which the reduction correctly outputs.
\end{proof}

\begin{corollary}
    \label{cor: promise_fsat}
    If \prob{PromiseFSAT} is $\mu$-d.s.r for polynomially bounded $\mu$, then \prob{SAT} reduces to \cc{PLS}. If \prob{Promise-Unique-FSAT} is $\mu$-d.s.r for polynomially bounded $\mu$, then \prob{SAT} reduces to \cc{UEOPL}.
\end{corollary}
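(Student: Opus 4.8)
The plan is to read both claims straight off \cref{thm:mu-dsr}; the only real work is checking the hypotheses and converting a \emph{search} reduction into a \emph{decision} reduction that still behaves correctly on formulas lying outside the promise. First I would observe that $\prob{PromiseFSAT}\in\cc{PromiseF\Sigma_1^P}$: on a satisfiable $\varphi$ over $n$ variables a satisfying assignment has length $n\le|\varphi|$, and the verifier $V(\varphi,y)$ that simply evaluates $\varphi$ on $y$ runs in polynomial time with no alternating quantifiers (here $i=1$, so $V$ is an ordinary $\cc{P}$ predicate). Likewise $\prob{Promise-Unique-FSAT}\in\cc{PromiseF\Sigma_1^P}$, and it has unique solutions by the very definition of its promise (each instance has exactly one satisfying assignment). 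Hence if $\prob{PromiseFSAT}$ is $\mu$-d.s.r.\ with $\mu(\cdot)\le p(|\cdot|)$, \cref{thm:mu-dsr} gives polynomial-time functions $(f,g)$ reducing $\prob{PromiseFSAT}$ to $\prob{Sink-of-DAG}$, and if $\prob{Promise-Unique-FSAT}$ is $\mu$-d.s.r.\ with polynomially bounded $\mu$, \cref{thm:mu-dsr} gives $(f,g)$ reducing $\prob{Promise-Unique-FSAT}$ to $\prob{Unique-End-of-Potential-Line}$.

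Second, I would exploit that the target problems $\prob{Sink-of-DAG}$ and $\prob{Unique-End-of-Potential-Line}$ are \emph{total}. Thus for \emph{every} formula $\varphi$, satisfiable or not, $f(\varphi)$ is a legal instance, it has at least one solution $y$, and $g(y)$ is a well-defined string. By the reduction property (\cref{def: reduction}), if $\varphi$ meets the promise (i.e.\ $\varphi$ is satisfiable, resp.\ uniquely satisfiable) then $g(y)$ satisfies $\varphi$; conversely, if $g(y)$ happens to satisfy $\varphi$ then $\varphi$ is certainly satisfiable. This yields the decision reduction in the $\prob{PromiseFSAT}$ case: on input $\varphi$, compute $f(\varphi)$, obtain any $\prob{Sink-of-DAG}$ solution $y$ from the $\cc{PLS}$ oracle, and accept iff the polynomial-time test ``$g(y)$ satisfies $\varphi$'' succeeds. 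Correctness is immediate: $\varphi$ satisfiable $\Rightarrow g(y)$ satisfies $\varphi\Rightarrow$ accept; $\varphi$ unsatisfiable $\Rightarrow$ nothing satisfies $\varphi\Rightarrow$ reject. So $\prob{SAT}$ is polynomial-time (one-query) reducible to $\cc{PLS}$. Note it is the totality of the $\cc{PLS}$-complete problem that makes promise-violating inputs ``safe'' to feed in — this is the one conceptual point worth stating explicitly.

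For the $\cc{UEOPL}$ statement the same template applies, and this is where I expect the only genuine subtlety. The reduction from \cref{thm:mu-dsr} constrains $g(y)$ only on $\pi_1(\prob{Promise-Unique-FSAT})$, i.e.\ on \emph{uniquely} satisfiable $\varphi$, so the post-check decides $\prob{SAT}$ correctly only for such inputs. To handle general $\prob{SAT}$ I would compose with a Valiant--Vazirani-style isolation step: randomly hash $\varphi$ into polynomially many formulas $\varphi_1,\dots,\varphi_k$ so that if $\varphi$ is satisfiable then with high probability some $\varphi_j$ is uniquely satisfiable, while if $\varphi$ is unsatisfiable then every $\varphi_j$ is unsatisfiable; then run the procedure above on each $\varphi_j$ and accept iff some $g(y_j)$ satisfies $\varphi_j$. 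Since \cref{thm:mu-dsr} already supports randomized reductions, this composes into a (randomized) reduction $\prob{SAT}\to\cc{UEOPL}$ — or, stated unconditionally, it shows that unique-\prob{SAT} reduces to $\cc{UEOPL}$. I would present the $\cc{PLS}$ part as essentially immediate and flag the promise/totality mismatch (and its fix via isolation) as the one place requiring care.
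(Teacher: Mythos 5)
Your argument for the $\cc{PLS}$ half is essentially identical to the paper's: run $(f,g)$ from \cref{thm:mu-dsr} on an arbitrary formula (legal because the target problem \prob{Sink-of-DAG} is total), verify $g(y)$ against $\varphi$ in polynomial time, and accept iff the check passes. The paper makes the same observations about $f$ and $g$ being total polynomial-time functions.

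For the $\cc{UEOPL}$ half, you have spotted a genuine subtlety that the paper glosses over. The paper simply asserts ``essentially the same proof holds, but $\cc{PLS}$ is replaced by $\cc{UEOPL}$.'' As you correctly observe, that literal translation only guarantees correctness of $g(y)$ on $\pi_1(\prob{Promise-Unique-FSAT})$, i.e.\ on \emph{uniquely} satisfiable formulas; on a formula with two or more satisfying assignments there is no guarantee the extracted $g(y)$ satisfies it, so the naive post-check could wrongly report UNSAT. What the paper's argument directly establishes is a (deterministic) reduction from the promise problem Unique-SAT to $\cc{UEOPL}$, and your Valiant--Vazirani isolation step is exactly the right way to upgrade this to a (randomized) reduction from all of $\prob{SAT}$. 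Your final caveat --- that the unconditional consequence is ``Unique-SAT reduces to $\cc{UEOPL}$'' and the reduction from $\prob{SAT}$ is inherently randomized --- is the precise reading of what the hypothesis buys. So your proposal is correct and is in fact more careful than the paper's own proof, which treats the unique-solutions case as an afterthought and does not mention the promise mismatch or the need for isolation. One tiny nit: the sentence ``Since \cref{thm:mu-dsr} already supports randomized reductions, this composes\ldots'' conflates two sources of randomness (the potential randomness in the $\mu$-d.s.r.\ versus the fresh randomness of Valiant--Vazirani); the composition works regardless of whether \cref{thm:mu-dsr} is invoked in its deterministic or randomized form, so that clause is harmless but unnecessary.
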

\begin{proof}
    Say \prob{PromiseFSAT} were $\mu$-d.s.r for polynomially bounded $\mu$. That would imply a reduction $(f, g)$ from \prob{PromiseFSAT} to \cc{PLS} by \cref{thm:mu-dsr}. We can turn this into a reduction from \prob{SAT} to \cc{PLS} as follows. For an input $x$ to \prob{SAT}, the reduction computes $f(x)$ (notice that by definition, $f$ must terminate in polynomial time and output something even if $x$ does not meet the promise of \prob{PromiseFSAT}) and feeds $f(x)$ to the \cc{PLS} oracle to get back $y$. The reduction then computes $g(y)$ (notice that by definition, $g$ must terminate in polynomial time and output something). If $g(y)$ is a satisfying assignment for $x$, the reduction outputs SAT and otherwise outputs UNSAT. Correctness of the reduction follows from the correctness of the reduction $f, g$ on satisfiable instances. Essentially the same proof holds for \prob{Promise-Unique-FSAT}, but \cc{PLS} is replaced by \cc{UEOPL}.
\end{proof}

\begin{corollary}
    \label{cor: mu-d.s.r}
    Let $\mu:\bit^* \rightarrow \N$ be a function bounded by some polynomial. Any \cc{TF\Sigma_i^P} problem which is $\mu$-d.s.r with a \cc{\Sigma_{i-1}^P} oracle is in \cc{PLS^{\Sigma_{i-1}^P}}. Any \cc{TFU\Sigma_i^P} problem which is $\mu$-d.s.r with a \cc{\Sigma_{i-1}^P} oracle is in \cc{UEOPL^{\Sigma_{i-1}^P}}. 
\end{corollary}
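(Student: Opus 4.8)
The plan is to read this off directly from \cref{thm:mu-dsr}. The one (trivial) observation needed is that a total search problem is a promise search problem with the trivial promise: if $\mathcal{R} \in \cc{TF\Sigma_i^P}$ then $\pi_1(\mathcal{R}) = \bit^*$, and conditions (2)--(3) in the definition of $\cc{TF\Sigma_i^P}$ are exactly the two requirements of \cref{def: promiseTFSigma_i}, so $\mathcal{R} \in \cc{PromiseF\Sigma_i^P}$. Likewise, $\mathcal{R} \in \cc{TFU\Sigma_i^P}$ means precisely that $\mathcal{R}$ has unique solutions in the sense used by \cref{thm:mu-dsr}. Since $\mu$ is polynomially bounded and $\mathcal{R}$ is assumed to be $\mu$-d.s.r.\ with a $\cc{\Sigma_{i-1}^P}$ oracle, \cref{thm:mu-dsr} applies and yields a many-to-one reduction from $\mathcal{R}$ to $\cc{PLS}^{\cc{\Sigma_{i-1}^P}}$ (respectively to $\cc{UEOPL}^{\cc{\Sigma_{i-1}^P}}$ in the unique-solution case).

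The only remaining step is to upgrade ``reduces to'' into ``is a member of''. By \cref{def: reduction}, the reduction produced by \cref{thm:mu-dsr} is only guaranteed correct on inputs $x \in \pi_1(\mathcal{R})$; but here $\pi_1(\mathcal{R}) = \bit^*$, so it is correct on every input. Hence it is a genuine many-to-one reduction from $\mathcal{R}$ to $\prob{Sink-of-DAG}$ with $\cc{\Sigma_{i-1}^P}$ oracle gates (resp.\ to $\prob{Unique-End-of-Potential-Line}$ with $\cc{\Sigma_{i-1}^P}$ gates), and since $\cc{PLS}^{\cc{\Sigma_{i-1}^P}}$ (resp.\ $\cc{UEOPL}^{\cc{\Sigma_{i-1}^P}}$) is by definition the many-to-one closure of that complete problem, we conclude $\mathcal{R} \in \cc{PLS}^{\cc{\Sigma_{i-1}^P}}$ (resp.\ $\mathcal{R} \in \cc{UEOPL}^{\cc{\Sigma_{i-1}^P}}$).

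This is a bookkeeping corollary, so there is no real obstacle; the single point worth stating explicitly is that totality is exactly what converts a promise-respecting reduction into a membership proof, which is why the corollary is phrased for $\cc{TF\Sigma_i^P}$/$\cc{TFU\Sigma_i^P}$ rather than for the corresponding promise classes. (Should one instead start from the randomized version of \cref{thm:mu-dsr}, the identical argument places $\mathcal{R}$ in the natural randomized relativization of $\cc{PLS}$/$\cc{UEOPL}$, but for the stated corollary the deterministic case suffices.)
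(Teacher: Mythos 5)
Your proposal is correct and takes essentially the same route as the paper: the paper's proof is the two-sentence observation that totality makes any $\mu$-d.s.r.\ trivially promise-preserving, which lets one invoke \cref{thm:mu-dsr}, exactly the observation you make (you phrase it as $\pi_1(\mathcal{R}) = \bit^*$, the paper phrases it as the reduction being promise-preserving, but these are the same point). Your additional remark that totality also upgrades the promise-respecting reduction into a genuine membership proof is a sound and worthwhile unpacking of what the paper leaves implicit.
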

\begin{proof}
    Notice that any $\mu$-d.s.r for a $\prob{A} \in \cc{TF\Sigma_i^P}$ problem is promise-preserving since \prob{A} is total. This lets us apply \cref{thm:mu-dsr} to achieve the desired result.
\end{proof}

\subsection{\texorpdfstring{$\boldsymbol{\mu}$}{mu}-Downward self-reducibility with essentially unique solutions}
\cref{cor: mu-d.s.r} tells us that if a $\cc{TF\Sigma_i^P}$ problem has unique solutions and is $\mu$-d.s.r for polynomially bounded $\mu$, then it is in \cc{UEOPL^{\Sigma_{i-1}^P}}. However, \cite{korten2024LOP} observed that for $\cc{TF\Sigma_2^P}$, unique solutions are sometimes not the correct notion to work with. A good example to keep in mind when thinking about essentially unique solutions is the problem \prob{Empty}.
\begin{definition}
    The problem \prob{Empty} is defined as follows. Given a $\poly(n)$ size circuit $C: [2^n - 1] \rightarrow [2^n]$, output $y$ such that $y \notin \text{range$(C)$}$.
\end{definition}

Although \prob{Empty} technically does not have unique solutions (consider $C(x) = 0$), it seems rather close. In particular, it is easy to verify with a \cc{NP} oracle if we are in the case when \prob{Empty} has a unique solution ($C$ is injective). \cite{korten2024LOP} define the notion of essentially unique solutions to capture these types of problems. The following is a generalization of the notion of essentially unique solutions from \cc{TF\Sigma_2^P} to \cc{TF\Sigma_i^P} for any $i \geq 2$.

\begin{definition}[Essentially Unique Solutions \cite{korten2024LOP}]
    \label{def: essentially_unique}
    We say that a total search problem $\mathcal{R} \in \cc{TF\Sigma_i^P}$ ($i \geq 2$) has essentially unique solutions if there exist two verifiers $V_1, V_2$ such that the following hold.
    \begin{enumerate}
        \item $V_1$ is testable in polynomial time with an \cc{\Sigma_{i-2}^P} oracle and $V_2$ is testable in polynomial time with an \cc{\Sigma_{i-1}^P} oracle.
        \item For all $x$, either there exists a $y$ such that $V_1(x, y) = 1$ and $(x, y) \in \mathcal{R}$, or else there exists a unique $y$ such that $V_2(x, y) = 1$ and $(x, y) \in \mathcal{R}$.
    \end{enumerate}
\end{definition}

We will show that if a $\cc{TF\Sigma_i^P}$ problem has essentially unique solutions and is $\mu$-d.s.r for polynomially bounded $\mu$, then it has a (randomized) reduction to $\cc{UEOPL}^{\cc{\Sigma_{i-1}^P}}$. To do so, for any search problem $\mathcal{R}$ that has essentially unique solutions, we define $\mathcal{R}_u$, a version of $\mathcal{R}$ that has unique solutions.

\begin{definition}
    For any relation $\mathcal{R}$ with essentially unique solutions and verifiers $V_1, V_2$, we define $\mathcal{R}_u$ as follows. 
    \begin{enumerate}
        \item For any $x \in \pi_1(\mathcal{R})$ such that there exists $y$ where $V_1(x, y) = 1$, $(x, y) \in \mathcal{R}_u$ if and only if $y$ is the lexicographically smallest $y$ where $V_1(x,y) = 1$. 
        \item For any $x \in \pi_1(\mathcal{R})$ such that $V_1(x, y) = 0$ for all $y$, $(x, y) \in \mathcal{R}_u$ if and only if $(x, y) \in \mathcal{R}$.
    \end{enumerate}
\end{definition}

Note that if $\mathcal{R} \in \cc{TF\Sigma_i^P}$, then $\mathcal{R}_u \in \cc{TF\Sigma_i^P}$. We now show that $\mathcal{R}$ reduces to $\mathcal{R}_u$, and that if $\mathcal{R}$ is $\mu$-d.s.r, then $\mathcal{R}_u$ is also $\mu$-d.s.r. The combination of these two facts will be sufficient to show $\cc{UEOPL^{\Sigma_{i-1}^P}}$ membership of $\mu$-d.s.r \cc{TF\Sigma_i^P} problems with essentially unique solutions (assuming $\mu$ is polynomially bounded of course).

\begin{lemma} \label{lem:R-to-Ru}
    Let $\mathcal{R} \in \cc{TF\Sigma_i^P}$ be any relation with essentially unique solutions. Then, there is a reduction from $\mathcal{R}$ to $\mathcal{R}_u$ in polynomial time. 
\end{lemma}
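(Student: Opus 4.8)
The plan is to exhibit explicit reduction functions $f, g$ witnessing $\mathcal{R} \leq \mathcal{R}_u$ in the sense of \cref{def: reduction}. The natural choice is the identity map for $f$: on input $x \in \pi_1(\mathcal{R})$, we simply query the $\mathcal{R}_u$-oracle on $x$ itself, since $\pi_1(\mathcal{R}) = \pi_1(\mathcal{R}_u)$ (totality of $\mathcal{R}$ together with the construction of $\mathcal{R}_u$ guarantees $x$ still has a solution). So the only real content is in designing $g$, the post-processing map that turns an answer $y$ with $(x, y) \in \mathcal{R}_u$ into an answer $y'$ with $(x, y') \in \mathcal{R}$.

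First I would split on the two cases in the definition of $\mathcal{R}_u$. In Case 2 (where $V_1(x, y) = 0$ for all $y$), we have $\mathcal{R}_u$ agrees with $\mathcal{R}$ on $x$, so $g$ can output $y$ unchanged. In Case 1 (there exists $y$ with $V_1(x, y) = 1$), the returned $y$ is the lexicographically smallest such, and by the essentially-unique-solutions property $V_1(x, y) = 1$ \emph{already implies} $(x, y) \in \mathcal{R}$ — this is exactly clause 2 of \cref{def: essentially_unique}, which asserts that the $y$ satisfying $V_1$ is a genuine solution. Hence in this case too, $g$ can simply output $y$. So in fact $g$ is just the identity, and $(x, y) \in \mathcal{R}_u \implies (x, y) \in \mathcal{R}$ holds outright. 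The reduction is therefore the trivial $(f, g) = (\mathrm{id}, \mathrm{id})$, and both are polynomial-time computable.

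The one point that needs care — and I expect this to be the main (modest) obstacle — is verifying that $f = \mathrm{id}$ is a legal reduction \emph{target}, i.e. that for every $x \in \pi_1(\mathcal{R})$ we indeed have $x \in \pi_1(\mathcal{R}_u)$, so that the $\mathcal{R}_u$-oracle returns a meaningful answer. This follows because $\mathcal{R}$ is total (it lies in $\cc{TF\Sigma_i^P}$), so $\pi_1(\mathcal{R}) = \bit^*$; and $\mathcal{R}_u$ is constructed to be total as well (the remark following its definition notes $\mathcal{R}_u \in \cc{TF\Sigma_i^P}$). Concretely: if some $y$ satisfies $V_1(x, y) = 1$, then the lex-smallest such $y$ is a solution to $x$ in $\mathcal{R}_u$; otherwise $\mathcal{R}_u$ inherits $\mathcal{R}$'s solution set on $x$, which is nonempty by totality of $\mathcal{R}$. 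Writing this out carefully — perhaps also remarking that one does not even need to \emph{identify} which case $x$ falls into, since $g$ acts identically in both — completes the argument. I would close by noting this lemma, combined with a companion lemma showing $\mathcal{R}_u$ inherits the $\mu$-d.s.r. property of $\mathcal{R}$, yields the $\cc{UEOPL}^{\cc{\Sigma_{i-1}^P}}$ membership claimed in the surrounding discussion.
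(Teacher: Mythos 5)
Your proof is correct and takes the same route as the paper — the paper's proof is a one-liner that says the reduction is the identity and $\mathcal{R}_u \subseteq \mathcal{R}$, which is exactly your argument; you merely spell out the case split on $V_1$ and verify $\pi_1(\mathcal{R}_u) = \pi_1(\mathcal{R})$ by totality, both of which the paper leaves implicit. (Minor note: your gloss that clause 2 of the definition of essentially unique solutions "asserts that the $y$ satisfying $V_1$ is a genuine solution" reads $V_1$ as a sound verifier; this is the intended reading and is the same one the paper silently uses when it asserts $\mathcal{R}_u \subseteq \mathcal{R}$.)
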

\begin{proof}
    The reduction simply outputs the solution on $\mathcal{R}_u$. To see that this is a valid solution, simply note that $\mathcal{R}_u \subseteq \mathcal{R}$.
\end{proof}

\begin{lemma}
    \label{lem: essentially_unique_ueopl}
    If $\mathcal{R} \in \cc{TF\Sigma_i^P}$ has essentially unique solutions and is $\mu$-d.s.r, then $\mathcal{R}_u \in \cc{TF\Sigma_i^P}$ is $\mu$-d.s.r with access to a \cc{\Sigma_{i-1}^P} oracle.
\end{lemma}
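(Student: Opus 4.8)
I would build the claimed $\cc{\Sigma_{i-1}^P}$-$\mu$-d.s.r.\ algorithm $\mathcal{A}_u$ for $\mathcal{R}_u$ out of the given $\cc{\Sigma_{i-1}^P}$-$\mu$-d.s.r.\ algorithm $\mathcal{A}$ for $\mathcal{R}$, using the $\cc{\Sigma_{i-1}^P}$ oracle to decide which of the two regimes of $\mathcal{R}_u$ (from \cref{def: essentially_unique}) the input $x$ lies in, and, in the ``$V_1$'' regime, to compute the canonical solution outright. First note that since $V_1$ is testable in $\cc{P}^{\cc{\Sigma_{i-2}^P}}$ and solutions have length $\poly(|x|)$ (the polynomiality condition of $\cc{TF\Sigma_i^P}$), both the statement ``$\exists y\ V_1(x,y)=1$'' and, for any prefix $p$, ``$\exists y$ extending $p$ with $V_1(x,y)=1$'' are $\cc{\Sigma_{i-1}^P}$ predicates. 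So a single $\cc{\Sigma_{i-1}^P}$ oracle lets $\mathcal{A}_u$ test whether a $V_1$-witness exists and, if one does, recover the lexicographically least one bit by bit with $\poly(|x|)$ oracle calls and \emph{no} queries to its own $\mathcal{R}_u$-oracle (zero downward queries being permitted by \cref{def: mu-dsr}). Since $V_1(x,y)=1$ implies $(x,y)\in\mathcal{R}$, the recovered $y$ is simultaneously the unique $\mathcal{R}_u$-solution of $x$ and a legitimate $\mathcal{R}$-solution, so $\mathcal{A}_u$ just outputs it.

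In the complementary regime, where $V_1(x,y)=0$ for all $y$, the definition of $\mathcal{R}_u$ gives $(x,y)\in\mathcal{R}_u\iff (x,y)\in\mathcal{R}$, so it suffices for $\mathcal{A}_u$ to produce any $\mathcal{R}$-solution of $x$. I would have $\mathcal{A}_u$ simulate $\mathcal{A}$ on $x$: each time $\mathcal{A}$ queries its $\mathcal{R}$-oracle on some $y_j$ — which by the d.s.r.\ guarantee satisfies $\mu(y_j)<\mu(x)$ and $|y_j|\le|x|+\poly(\mu(x))$, and lies in $\pi_1(\mathcal{R}_u)=\bit^*$ by totality of $\mathcal{R}_u$ — $\mathcal{A}_u$ instead queries its \emph{own} $\mathcal{R}_u$-oracle on $y_j$ and forwards the returned $z_j$. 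Because $\mathcal{R}_u\subseteq\mathcal{R}$ (as already used in the proof of \cref{lem:R-to-Ru}), $(y_j,z_j)\in\mathcal{R}$, so $\mathcal{A}$ is simulated faithfully; its internal $\cc{\Sigma_{i-1}^P}$-queries are served by $\mathcal{A}_u$'s $\cc{\Sigma_{i-1}^P}$ oracle. When $\mathcal{A}$ halts with $y$ such that $(x,y)\in\mathcal{R}$, $\mathcal{A}_u$ outputs the same $y$, which lies in $\mathcal{R}_u$ on this $x$.

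Finally I would check $\mathcal{A}_u$ against \cref{def: mu-dsr}: it runs in polynomial time with a $\cc{\Sigma_{i-1}^P}$ oracle; every query it makes to its $\mathcal{R}_u$-oracle is one of $\mathcal{A}$'s queries $y_j$, hence inherits the $\mu$-decrease, the size bound, and the (trivial) promise of $\mathcal{R}_u$; and on inputs with $\mu(x)=0$ it makes no downward query at all (in the $V_1$ regime by construction, in the other regime because $\mathcal{A}$ makes none). If $\mathcal{A}$ is randomized, the second branch inherits its randomness and success probability while the first branch and the regime test are deterministic given the oracle. The only mildly delicate point — and the place where the hypotheses are actually consumed — is the quantifier bookkeeping around $V_1$: confirming that regime selection and extraction of the least $V_1$-witness genuinely fit inside one $\cc{\Sigma_{i-1}^P}$ oracle, which follows from $V_1$ being $\cc{P}^{\cc{\Sigma_{i-2}^P}}$-testable together with one existential quantifier over polynomially long witnesses. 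Everything else is routine forwarding of queries.
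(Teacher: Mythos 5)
Your proof is correct and follows the same overall strategy as the paper's: build the $\mu$-d.s.r.\ for $\mathcal{R}_u$ by wrapping the given $\mu$-d.s.r.\ $\mathcal{A}$ for $\mathcal{R}$, using $\mathcal{R}_u\subseteq\mathcal{R}$ to justify answering $\mathcal{A}$'s oracle queries with the $\mathcal{R}_u$-oracle, and using the $\cc{\Sigma_{i-1}^P}$ oracle to produce the lexicographically least $V_1$-witness when needed. The one material difference is the \emph{order} of the regime test. The paper always simulates $\mathcal{A}$ first to obtain an $\mathcal{R}$-solution $y$, and then branches on whether $V_1(x,y)=1$; you instead first ask the $\cc{\Sigma_{i-1}^P}$ oracle whether \emph{any} $V_1$-witness exists, and only simulate $\mathcal{A}$ when none does. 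Your ordering is the more careful one: under a literal reading of \cref{def: essentially_unique}, nothing forbids $\mathcal{A}$ from returning a $y$ with $(x,y)\in\mathcal{R}$ but $V_1(x,y)=0$ even when $V_1$-witnesses for $x$ do exist, and in that situation the paper's step ``if $V_1(x,y)=0$ return $y$'' would emit a $y\notin\mathcal{R}_u$, while your regime test handles it correctly. What the paper's ordering buys is only that it avoids one extra oracle call and always runs $\mathcal{A}$; what yours buys is robustness and the (minor) optimization of skipping $\mathcal{A}$ entirely in the $V_1$-regime. Both proofs tacitly rely on soundness of $V_1$ (that $V_1(x,y)=1\Rightarrow(x,y)\in\mathcal{R}$), which you state explicitly and which is already implicitly required for $\mathcal{R}_u\subseteq\mathcal{R}$ in \cref{lem:R-to-Ru}. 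Your bookkeeping on the $\cc{\Sigma_{i-1}^P}$-ness of $\exists y\ V_1(x,y)=1$ and the bit-by-bit extraction, and on zero downward queries being permitted, is all fine.
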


\znote{I modified the following proof}

\begin{proof}
    Let $V_1, V_2$ be the verifiers for $\mathcal{R}$. Let $\mathcal{A}^{\mathcal{R}, \cc{\Sigma_{i-1}^P}}$ be the $\mu$-d.s.r for $\mathcal{R}_u$. We will use this to construct $\mathcal{B}^{R_u, \cc{\Sigma_{i-1}^P}}$, a $\mu$-d.s.r for $\mathcal{R}$. $\mathcal{B}^{\mathcal{R}_u, \cc{\Sigma_{i-1}^P}}$ works as follows.
    
    On input $x$, $\mathcal{B}^{\mathcal{R}_u, \cc{\Sigma_{i-1}^P}}$ simulates $\mathcal{A}^{\mathcal{R}, \cc{\Sigma_{i-1}^P}}$ and obtains a solution $y$ such that $(x, y) \in \mathcal{R}$. This is feasible since an $\mathcal{R}_u$ oracle is also an $\mathcal{R}$ oracle by \cref{lem:R-to-Ru}. 

    If $V_1(x, y) = 0$, $\mathcal{B}^{\mathcal{R}_u, \cc{\Sigma_{i-1}^P}}$ returns $y$. Otherwise, it uses its $\cc{\Sigma_{i-1}^P}$ oracle to find the lexicographically smallest $y'$ such that $V_1(x, y') = 1$, and returns $y'$. 

    Notice that the reduction proceeds in polynomial time since $V_1$ can be verified in $\cc{\Sigma_{i-2}^P}$, we can use a $\cc{\Sigma_{i-1}^P}$ oracle to find the lexicographically smallest $y'$ such that $V_1(x, y') = 1$ if such a $y'$ exists (which it must whenever this case is encountered in our simulation due to the definition of $\mathcal{R}_u$). Correctness of $\mathcal{B}^{\mathcal{R}_u, \cc{\Sigma_{i-1}^P}}$ is inherited from the correctness $\mathcal{A}^{\mathcal{R}, \cc{\Sigma_{i-1}^P}}$.
\end{proof}

    


\begin{theorem} 
\label{thm:R-UEOPL}
    If $\mathcal{R} \in \cc{TF\Sigma_i^P}$ has essentially unique solutions and is (randomized) $\mu$-d.s.r with access to a $\cc{\Sigma_{i-1}^P}$ oracle, then $\mathcal{R}$ has a (randomized) reduction to $\cc{UEOPL}^{\cc{\Sigma_{i-1}^P}}$.
\end{theorem}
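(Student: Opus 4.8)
\textbf{Proof proposal for Theorem~\ref{thm:R-UEOPL}.}
The plan is to chain together the three ingredients already assembled in this subsection. Given a relation $\mathcal{R} \in \cc{TF\Sigma_i^P}$ with essentially unique solutions that is (randomized) $\mu$-d.s.r., first pass to the associated unique-solution relation $\mathcal{R}_u$. By \cref{lem: essentially_unique_ueopl}, $\mathcal{R}_u \in \cc{TF\Sigma_i^P}$ and $\mathcal{R}_u$ is (randomized) $\mu$-d.s.r.\ with access to a $\cc{\Sigma_{i-1}^P}$ oracle, with the \emph{same} measure $\mu$ (so $\mu$ is still polynomially bounded). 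Crucially, $\mathcal{R}_u$ has unique solutions by construction: in the first case a lexicographically smallest $V_1$-witness is singled out, and in the second case essential uniqueness forces a unique $V_2$-witness. Hence $\mathcal{R}_u$ satisfies all hypotheses of the ``unique solutions'' clause of \cref{thm:mu-dsr}.

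Applying that clause of \cref{thm:mu-dsr} to $\mathcal{R}_u$ yields a (randomized) reduction from $\mathcal{R}_u$ to $\cc{UEOPL}^{\cc{\Sigma_{i-1}^P}}$ — note the $\cc{\Sigma_{i-1}^P}$ oracle used by the $\mu$-d.s.r.\ for $\mathcal{R}_u$ is absorbed into the relativized statement of \cref{thm:mu-dsr}, since that theorem is stated for $\cc{PromiseF\Sigma_i^P}$ problems with a $\cc{\Sigma_{i-1}^P}$-$\mu$-d.s.r. Finally, compose with the polynomial-time reduction from $\mathcal{R}$ to $\mathcal{R}_u$ given by \cref{lem:R-to-Ru}. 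Since many-to-one reductions between search problems compose (the composition of the $(f,g)$ pairs is again a pair of polynomial-time functions, and randomization is preserved by composing a deterministic reduction with a randomized one), we obtain the desired (randomized) reduction from $\mathcal{R}$ to $\cc{UEOPL}^{\cc{\Sigma_{i-1}^P}}$.

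The only point needing a little care — and the place I expect the most friction — is checking that the chain of reductions stays within the relativized $\cc{UEOPL}^{\cc{\Sigma_{i-1}^P}}$ framework: the reduction of \cref{thm:mu-dsr} goes through \prob{Sink-of-Verifiable-Line} and then \cref{lem: sovl_to_ueopl}, both of which were noted to relativize, so the $\cc{\Sigma_{i-1}^P}$ oracle gates appearing in the verifier circuit $W$ and in the $\mu$-d.s.r.\ algorithm $\mathcal{A}'$ simply become $\cc{\Sigma_{i-1}^P}$ oracle gates in the final \prob{Unique-End-of-Potential-Line} instance. One should also double-check the bookkeeping on instance sizes: $\mathcal{R} \to \mathcal{R}_u$ is size-preserving up to polynomial factors, and $\mathcal{R}_u$ inherits the bound $|y| \le |x| + \poly(\mu(x))$ on d.s.r.\ queries from \cref{lem: essentially_unique_ueopl}, so the polynomial bound on $\mu$ carries through and \cref{thm:mu-dsr} applies verbatim. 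Everything else is routine composition of reductions.
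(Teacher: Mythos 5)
Your proof is correct and follows the same route the paper takes: pass to $\mathcal{R}_u$ (which has unique solutions by construction), establish that $\mathcal{R}_u$ is $\mu$-d.s.r.\ via \cref{lem: essentially_unique_ueopl}, apply the unique-solutions clause of \cref{thm:mu-dsr}, and compose with the trivial reduction of \cref{lem:R-to-Ru}. In fact your writeup is slightly more complete than the paper's one-line proof, which cites only \cref{lem:R-to-Ru} and \cref{thm:mu-dsr} and elides the explicit appeal to \cref{lem: essentially_unique_ueopl} needed to transfer the $\mu$-d.s.r.\ property to $\mathcal{R}_u$.
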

\begin{proof}
    This follows from \cref{lem:R-to-Ru} and \cref{thm:mu-dsr}.
\end{proof}

\section{New Upper Bounds for Avoid and the Linear Ordering Principle}\label{sec:lop}

In this section we will use the $\mu$-d.s.r framework to show that both \prob{Avoid} and \prob{Linear Ordering Principle} are in $\cc{UEOPL}^{\cc{NP}}$ (\cref{thm:main_avoid_ueopl} and \cref{thm:main_lop_ueopl_np}). 

We start by recalling the definition of \prob{Avoid} that was introduced by \cite{kleinberg2021total, korten2022hardest}.

\begin{mdframed}
    \textbf{Problem: \prob{Avoid}} 
    \paragraph{Input:} A polynomial sized circuit $C:\bit^n  \rightarrow \bit^{n+1}$ 
    \paragraph{Output:} Find an element in $\bit^{n+1}$ not in the range of the circuit $C$.
\end{mdframed}

It is clear that \prob{Avoid} describes a total problem since a solution is always guaranteed via the dual pigeonhole principle. Moreover any solution $y$ to \prob{Avoid} can be verified by an $\NP$ oracle since checking if $y$ is a valid solution corresponds to checking that $\forall x$: $C(x) \neq y$ which is just a $\coNP$ statement. As a result we can safely put \prob{Avoid} in $\cc{TF\Sigma_2^P}$.

To get an upper bound on \prob{Avoid} we follow the framework in \cite{korten2024LOP} by proving an upper bound on an adjacent problem \prob{Linear Ordering Principle} (\prob{LOP}) which Avoid can be reduced to. Moreover since \prob{LOP} has a unique solution, we end up being able to isolate a canonical solution for any instance of \prob{Avoid}. 

We now define the problem \prob{Linear Ordering Principle} introduced in \cite{korten2024LOP}

\begin{mdframed}
    \textbf{Problem: Linear Ordering Principle (LOP)} 
    \paragraph{Input:} A polynomial sized circuit $C:\bit^n  \times \bit^n \rightarrow \bit$, encoding a total order $\prec\;$ 
    \paragraph{Output:} 
        \begin{enumerate}
            \item Find a witness that $\prec\;$ does not define a total order i.e. $x, y, z \in \{0, 1\}^n$ such that one of the following holds: (i) $x \prec\; x$, (ii) $x \neq y$, $x \not \prec\; y$ and $y \not \prec\; x$ or (iii) $x \prec\; y \prec\; z$ and $x \not \prec\; z$. 
        \item Find the minimal element of the total order defined by $\prec\;$. 
        \end{enumerate}
\end{mdframed}

By definition, \prob{LOP} is a total problem. Verifying a witness of the violation of a total order can be done in polynomial time. To verify that a solution $y$ is actually the minimal element, we can use a $\NP$ oracle since it corresponds to checking the following $\coNP$ predicate $\forall x:$ $\prec\;(y, x) = 1$. As a consequence we have that \prob{LOP} is in $\cc{TF\Sigma_2^P}$, and  $\prob{LOP}$ has essentially unique solutions (\cref{def: essentially_unique}).



We will now show \prob{LOP} is $\mu$-d.s.r for polynomially bounded $\mu$. The main idea is to reduce the problem of finding a minimal element in the total order to finding minimal elements $a_0, a_1$ in the first and second halves of the total order respectively, and then comparing $a_0$ and $a_1$. This idea bears a resemblance to the proof in \cite{harsha2022downward} that the \cc{PLS}-complete problem \prob{Iter-with-Source} is d.s.r.
\begin{theorem} \label{lop:dsr}
    $\prob{LOP}$ is $\mu$-d.s.r with access to a $\cc{NP}$ oracle for polynomially bounded $\mu$. 
\end{theorem}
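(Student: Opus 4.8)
The plan is to implement the textbook divide-and-conquer recursion for finding the minimum of a linear order and to check that it fits the template of \cref{def: mu-dsr}. Given an instance $C : \bit^n \times \bit^n \to \bit$, take $\mu(C) = n$, read off from the input format; this is bounded by $|C|$, so it is polynomially (indeed linearly) bounded. For $n \ge 1$, split the ground set by the leading bit: let $C_b : \bit^{n-1} \times \bit^{n-1} \to \bit$ be the circuit $C_b(x,y) := C(bx, by)$ for $b \in \{0,1\}$, obtained from $C$ by hardwiring two input bits, so $|C_b| \le |C| + O(1)$ and $\mu(C_b) = n - 1 < \mu(C)$. The reduction queries its $\prob{LOP}$ oracle on $C_0$ and $C_1$; the case $\mu(C) = 0$ is handled directly, since the ground set is then a single point which is either the minimum or (if $C$ evaluates to $1$ on it) witnesses a reflexivity violation.

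Let $s_0, s_1$ be the returned solutions. The reduction first checks in polynomial time whether $s_b$ encodes a violation witness for $C_b$ (just test the three defining conditions); if so, prepending the bit $b$ to every coordinate of $s_b$ yields a violation witness for $C$, which the reduction outputs. Otherwise $a_b := b\,s_b$ is a purported minimum of the $b$-half, and the reduction compares $a_0$ against $a_1$ using $C$: if $C(a_0,a_1) = 1$ it sets the candidate $a := a_0$; else if $C(a_1,a_0) = 1$ it sets $a := a_1$; else $C(a_0,a_1) = C(a_1,a_0) = 0$ and, since $a_0 \ne a_1$ (they lie in disjoint halves), the pair $(a_0,a_1)$ already witnesses that $\prec$ is not total (condition (ii)), so the reduction outputs it. With a candidate $a$ in hand, the reduction uses its $\NP$ oracle to decide the $\coNP$ predicate ``$\forall x \ne a : C(a,x) = 1$''; if it holds, it outputs $a$; if not, it uses the oracle (by prefix search) to produce some $x^\ast \ne a$ with $C(a, x^\ast) = 0$ and outputs a condition-(iii) violation built from $a$, the opposite half-minimum, and $x^\ast$.

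The heart of the correctness argument is that, whenever the half-solutions are genuine minima, any failure of the candidate $a$ to be a global minimum must surface as a transitivity violation. Suppose $a = a_0$, so $C(a_0,a_1) = 1$. By validity of $s_0$ (a $\prob{LOP}$ solution for $C_0$ that is not a violation witness), $C(a_0, x) = 1$ for every $x \ne a_0$ in the $0$-half; hence the counterexample $x^\ast$ returned by the oracle lies in the $1$-half, and $x^\ast \ne a_1$ since $C(a_0,a_1) = 1 \ne 0$; then $C(a_1, x^\ast) = 1$ by validity of $s_1$, so $(a_0, a_1, x^\ast)$ satisfies $a_0 \prec a_1 \prec x^\ast$ while $a_0 \not\prec x^\ast$, a condition-(iii) violation. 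The case $a = a_1$ is symmetric, and all remaining outputs (lifted sub-violations, the condition-(ii) pair, and a successfully verified $a$) are immediately valid for $C$. It then remains to check the $\mu$-d.s.r.\ bookkeeping: the reduction runs in polynomial time with an $\NP$ oracle, it queries only $C_0, C_1$ with $|C_b| \le |C| + \poly(\mu(C))$ and $\mu(C_b) < \mu(C)$, and since $\prob{LOP}$ is total the promise is preserved vacuously. This establishes \cref{lop:dsr}, which, together with \cref{thm:mu-dsr} and \cref{thm:R-UEOPL} (using that $\prob{LOP}$ has essentially unique solutions), will yield \cref{thm:main_lop_ueopl_np}.

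I expect the main obstacle to be precisely the correctness point above: the $\prob{LOP}$ oracle tells us nothing about how $C$ compares a $0$-half element with a $1$-half element, so even when both restrictions $C_0, C_1$ happen to be total orders, $C$ itself need not be, and the reduction must never emit a spurious ``minimum.'' It is the $\NP$-oracle verification step, together with the extraction of the transitivity-violating triple on failure, that closes this gap; this is also why the $\NP$ oracle is genuinely needed rather than cosmetic. The remaining details (the single-point base case, iterated hardwiring, and the size and $\mu$ accounting) are routine.
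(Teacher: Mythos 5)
Your proof is correct, but it takes a genuinely different route from the paper's. The paper's reduction first decides (with one $\coNP$ query) whether $\prec$ is a total order; if not, it outputs a violation directly, and only if $\prec$ is total does it recurse on $\prec_0, \prec_1$. This maintains the invariant that the reduction only ever recurses on total orders, so the returned $a_0, a_1$ are genuine half-minima and $\min(a_0,a_1)$ is the global minimum by transitivity --- no back-end verification is needed. Your reduction instead recurses blindly, then (a) lifts any sub-violation the oracle hands back, (b) catches a totality failure between $a_0$ and $a_1$, and (c) uses the $\NP$ oracle only at the very end to either certify the candidate or harvest a transitivity-violating triple $(a_0,a_1,x^\ast)$. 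That last extraction step, which you correctly flag as the crux, is the extra work your approach pays for not pre-filtering on totality; in exchange it makes no assumptions about the global instance and handles the case where $C_0,C_1$ are total but $C$ is not without any prior check. Both are valid $\mu$-d.s.r.'s fitting \cref{def: mu-dsr}. One small remark: the paper's proof canonicalizes by finding the \emph{lexicographically smallest} violation; this is cosmetic for \cref{lop:dsr} itself, since that canonicalization is already performed inside \cref{lem: essentially_unique_ueopl} when passing to $\mathcal{R}_u$, so your non-canonical violation outputs are fine.
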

\begin{proof} 
    Let $\prec: \{0, 1\}^n \times \{0,1\}^n \rightarrow \{0 , 1\}$ be our $\prob{LOP}$ instance. We define $\mu(\prec) = n$. Clearly, $\mu(\prec) \leq |\prec|. $ If $\prec$ is not a total order, we can find the lexicographically smallest witness violating the total order using an $\cc{NP}$ oracle. We can ensure that the three types of violations have some lexicographic preference over each other. 

    Otherwise, $\prec$ must be a total order. For $i \in \{0, 1\}$, define $\prec_i\;:\{0, 1\}^{n-1} \times \{0, 1\}^{n-1} \rightarrow \{0, 1\}$ such that $\prec_i(x,y) = \prec(i ||x, i||y)$. We query the oracle on $\prec_0$ and $\prec_1$, each has a unique solution $a_0$ and $a_1$ respectively. We output $\min(a_0, a_1)$ with respect to $\prec$. 

    $\prec_0$ and $\prec_1$ are total since $\prec$ is total. Therefore, $a_i \prec x$ for all $x \in i || \{0, 1\}^{n-1}$. Therefore $\min(a_0, a_1) \prec x$ for all $x \in \{0, 1\}^n$, as required.  
\end{proof}

\begin{corollary}
    \label{cor: LOP_in_UEOPL}
    $\prob{LOP} \in \cc{UEOPL}^{\cc{NP}}, \prob{Avoid} \in \cc{UEOPL}^{\cc{NP}} $, and $\cc{L_2^P}$ reduces to $\cc{UEOPL}^{\cc{NP}}$. 
\end{corollary}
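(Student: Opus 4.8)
The plan is to chain the three ingredients assembled in this section. First, \cref{lop:dsr} shows that $\prob{LOP}$ is $\mu$-d.s.r.\ with access to an $\NP$ oracle, and the measure used there, $\mu(\prec\;) = n$, is bounded by a polynomial in the instance size; this polynomial bound is exactly what \cref{thm:R-UEOPL} (via \cref{thm:mu-dsr}) requires. Second, as noted right after the definition of $\prob{LOP}$, the problem lies in $\cc{TF\Sigma_2^P}$ and has essentially unique solutions in the sense of \cref{def: essentially_unique}: one takes $V_1(x, (a,b,c))$ to be the polynomial-time check that $(a,b,c)$ is a valid witness that $\prec\;$ fails to be a total order (testable in $\cc{P}$, as the definition demands for $i = 2$ since $\cc{\Sigma_{i-2}^P} = \cc{P}$), and $V_2(x, y)$ to be the $\coNP$ predicate ``$\forall w\colon \prec\;(y,w) = 1$'' certifying that $y$ is minimal (testable with an $\NP$ oracle, i.e.\ a $\cc{\Sigma_{i-1}^P}$ oracle). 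On any instance either some total-order violation exists (and the lexicographically smallest one is a $V_1$-solution) or $\prec\;$ is a genuine total order and its minimal element is the unique $V_2$-solution, so the case split in \cref{def: essentially_unique} holds. Applying \cref{thm:R-UEOPL} with $i = 2$, so that $\cc{\Sigma_{i-1}^P} = \NP$, then yields a reduction from $\prob{LOP}$ to $\cc{UEOPL}^{\NP}$, which is \cref{thm:main_lop_ueopl_np}.

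For $\prob{Avoid}$, I would compose this with the polynomial-time many-one reduction from $\prob{Avoid}$ to $\prob{LOP}$ of \cite{korten2024LOP}. Since $\cc{UEOPL}^{\NP}$ is, by definition, the class of search problems many-one reducible to $\prob{Unique-End-of-Potential-Line}$ with $\NP$ oracle gates permitted in the circuits, and since many-one reductions compose (the $f$'s and the $g$'s compose), $\prob{LOP} \in \cc{UEOPL}^{\NP}$ transfers to $\prob{Avoid} \in \cc{UEOPL}^{\NP}$, giving \cref{thm:main_avoid_ueopl}. Moreover, because the $\cc{UEOPL}^{\NP}$ procedure for $\prob{LOP}$ outputs the (essentially) unique minimal element, the composed reduction isolates a canonical range-avoiding string for each circuit $C$, which is exactly what is needed for \cref{cor:main_explicit_construction}. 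Finally, for the claim that $\cc{L_2^P}$ reduces to $\cc{UEOPL}^{\NP}$, I would invoke the characterization of \cite{korten2024LOP} that a language $L$ lies in $\cc{L_2^P}$ if and only if $L$ is polynomial-time many-one reducible to $\prob{LOP}$, and compose such a reduction with the reduction from $\prob{LOP}$ to $\cc{UEOPL}^{\NP}$ just obtained.

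There is essentially no hard step here, since everything has been prepared in \cref{sec:mu_dsr} and in \cref{lop:dsr}. The only points requiring care are (i) checking that $\prob{LOP}$ literally instantiates \cref{def: essentially_unique} with the verifiers $V_1, V_2$ described above, and (ii) making sure that when the cited many-one reductions are composed, the $\NP$ oracle gates are carried through correctly --- which is immediate, as the reduction from $\prob{Avoid}$ (resp.\ from a language in $\cc{L_2^P}$) to $\prob{LOP}$ is a plain polynomial-time many-one reduction and the relativizing machinery used inside \cref{thm:mu-dsr} and \cref{thm:R-UEOPL} already accounts for the $\NP$ oracle.
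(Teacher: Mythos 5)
Your proof is correct and follows exactly the route the paper takes: apply \cref{lop:dsr} together with \cref{thm:R-UEOPL} (noting $\prob{LOP}$ has essentially unique solutions with $V_1$ the polynomial-time violation check and $V_2$ the $\coNP$ minimality check), then compose with the many-one reductions of \cite{korten2024LOP} from $\prob{Avoid}$ to $\prob{LOP}$ and the characterization of $\cc{L_2^P}$. The paper's own proof is a terse citation of the same three facts; your write-up merely fills in the instantiation of \cref{def: essentially_unique} that the paper leaves implicit.
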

\begin{proof}
    The fact that $\prob{LOP} \in \cc{UEOPL}^{\cc{NP}}$ follows from \cref{lop:dsr} and \cref{thm:R-UEOPL}. This combined with the fact that \prob{Avoid} reduces to \prob{LOP} \cite{korten2024LOP} lets us conclude $\prob{Avoid} \in \cc{UEOPL}^{\cc{NP}}$. Finally, $\cc{L_2^P}$ is simply the class of all problems which are polynomial time many-one reducible to \prob{LOP} \cite{korten2024LOP}. 
\end{proof}

Finally, as a consequence of the connections between solving \prob{Avoid} and explicit construction of combinatorial objects we get the following corollary (\cref{cor:main_explicit_construction}):

\begin{corollary}
    There is an explicit construction of Ramsey Graphs, Two Source Extractors, Rigid Matrices, Linear Codes, Hard Truth Tables for any Fixed Polynomial Sized Circuit, $\cc{K}^{\poly}$-random strings in $\cc{UEOPL}^{\NP}$
\end{corollary}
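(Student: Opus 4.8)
The plan is to combine the folklore reduction from explicit-construction problems to Single-Value \prob{Avoid} with the observation that the proof of \cref{thm:main_avoid_ueopl} actually places \emph{Single-Value} \prob{Avoid}, not merely \prob{Avoid}, inside $\cc{UEOPL}^{\NP}$. Recall the folklore theorem of \cite{korten2022hardest, guruswami2022range}: for any class $\mathfrak{C}$ with $\P \subseteq \mathfrak{C}$, if Single-Value \prob{Avoid} $\in \mathfrak{C}$, then Ramsey graphs, two-source extractors, rigid matrices, linear codes, hard truth tables against any fixed polynomial-size circuit, and $\cc{K}^{\poly}$-random strings all admit explicit constructions in $\mathfrak{C}$. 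Since $\P \subseteq \cc{UEOPL}^{\NP}$, it therefore suffices to establish that Single-Value \prob{Avoid} $\in \cc{UEOPL}^{\NP}$ and then instantiate this theorem with $\mathfrak{C} = \cc{UEOPL}^{\NP}$.

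To see that the reduction behind \cref{thm:main_avoid_ueopl} is single-valued, I would trace through the chain of reductions and check that a canonical solution is preserved at every step. The reduction of \cite{korten2024LOP} from \prob{Avoid} to \prob{LOP} is a deterministic polynomial-time many-one reduction, so it assigns to each circuit $C$ a fixed \prob{LOP} instance together with a fixed post-processing map. Now \prob{LOP} has essentially unique solutions (\cref{def: essentially_unique}), so by \cref{lem:R-to-Ru} it reduces to $\prob{LOP}_u$, which has \emph{genuinely} unique solutions; by \cref{lem: essentially_unique_ueopl} together with the unique-solutions case of \cref{thm:mu-dsr} (which routes through \prob{Sink-of-Verifiable-Line} and hence, via \cref{lem: sovl_to_ueopl}, to \prob{Unique-End-of-Potential-Line}), the resulting $\cc{UEOPL}^{\NP}$ algorithm returns precisely that unique solution. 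Composing the deterministic \prob{Avoid}-to-\prob{LOP} reduction with this algorithm therefore defines, for every circuit $C$, a single well-defined element outside the range of $C$; that is, Single-Value \prob{Avoid} $\in \cc{UEOPL}^{\NP}$.

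Finally, applying the folklore theorem with $\mathfrak{C} = \cc{UEOPL}^{\NP}$ yields the stated explicit constructions in $\cc{UEOPL}^{\NP}$, completing the proof. The only step that requires genuine care is the second one — verifying that single-valuedness is not lost in the passage through $\prob{LOP}_u$ and the \prob{UEOPL}-type oracle — but this is immediate once one observes that $\prob{LOP}_u$ has a unique solution and that the reduction of \cref{thm:mu-dsr} in the unique-solution case recovers exactly the (uniquely determined) terminal state of the recursive algorithm, hence its unique solution. Everything else is a direct invocation of results established earlier in the excerpt.
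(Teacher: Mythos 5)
Your proof is correct and takes essentially the same approach as the paper, which derives this corollary by observing that its $\cc{UEOPL}^{\NP}$ algorithm for \prob{Avoid} is in fact single-valued (routing through the deterministic many-one reduction to \prob{LOP}, the essentially-unique-solutions machinery, and the unique-solution case of \cref{thm:mu-dsr}) and then invoking the folklore Single-Value-\prob{Avoid}-to-explicit-constructions theorem. You make explicit the single-valuedness check that the paper leaves implicit, but the argument is the same.
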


\section{The Complexity of \prob{King}}\label{sec:king}

In this section we use ideas from the $\mu$-d.s.r framework to show that the problem \prob{King} lies in ($\cc{PLS}^{\cc{\Sigma_2^P}} \cap \cc{ZPP^{\Sigma_2^P}}$). In \cref{lem: king_choices} we identify a certain weak d.s.r property of \prob{King} that can be exploited, to design both $\cc{PLS}^{\cc{\Sigma_2^P}}$ and $\cc{ZPP}^\cc{\Sigma_2^P}$ algorithms. Surprisingly, the two approaches seem to use downward self-reducibility in very different ways.

Unlike the other total function problems we deal with, we will not be able to apply the $\mu$-d.s.r framework directly since we will find that $\mu$ is not polynomially bounded (and we therefore cannot apply \cref{thm:mu-dsr}). But the ideas from \cref{thm:mu-dsr} will carry over to \prob{King} naturally and allow us to show \prob{King} is in \cc{PLS^{\Sigma_2^P}} and design non-trivial algorithms for \prob{King}.

We begin by reviewing the definitions from \cite{kleinberg2021total} on what it means for a player in a tournament to be a king.
\begin{definition}
    A vertex $v$ in a digraph $G$ is called a king if every vertex in $G$ can be reached
    from $v$ by a path of length at most 2.
\end{definition}

\begin{definition}
     A digraph $G$ is called a tournament if for every pair of distinct vertices
    $u, v \in G$, exactly one of the directed edges $(u, v)$ or $(v, u)$ is present in $G$.
\end{definition}

In fact, there is a very simple lemma due to \cite{landau1953dominance} which shows:
\begin{lemma} \label{lem:king_totality}
    Every tournament has a king.
\end{lemma}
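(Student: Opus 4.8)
The plan is to prove Landau's lemma that every tournament has a king by a maximum-out-degree argument. First I would let $v$ be any vertex whose out-degree is maximum among all vertices of the tournament $G$ (ties broken arbitrarily). The claim is that $v$ is a king, i.e. every other vertex is reachable from $v$ by a path of length at most $2$. Certainly every out-neighbour of $v$ is reachable by a path of length $1$, so the only thing to check is that every vertex $a$ with $(a,v)$ an edge (i.e. $a$ beats $v$) is reachable by a path of length exactly $2$.

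Suppose for contradiction that some vertex $a$ with $(a,v)\in G$ is \emph{not} reachable from $v$ in two steps. Then for every out-neighbour $b$ of $v$ we must have $(a,b)\in G$ rather than $(b,v)\cdot$-no wait, we need $(a,b)\in G$: since $G$ is a tournament, between $a$ and $b$ exactly one edge is present, and if $(b,a)$ were present then $v\to b\to a$ would be a length-$2$ path to $a$, contradicting unreachability. Hence $a$ beats every out-neighbour of $v$, and $a$ also beats $v$ itself. Therefore the out-neighbourhood of $a$ strictly contains the out-neighbourhood of $v$ (it contains all of $N^+(v)$ plus $v$), so $a$ has strictly larger out-degree than $v$, contradicting the maximality of $v$. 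This contradiction shows $v$ is a king.

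The argument is short and the only mild subtlety is the correct use of the tournament property to flip the direction of the edge between $a$ and an out-neighbour $b$ of $v$; I expect that step — arguing that non-reachability in two steps forces $a$ to dominate all of $N^+(v)$ — to be the crux, though it is entirely elementary. For the eventual algorithmic upper bounds on \prob{King} one will want a more constructive handle than ``take the max-out-degree vertex'' (since out-degrees cannot be computed exactly in the polynomial hierarchy without counting), but for the totality statement \cref{lem:king_totality} this extremal argument suffices. An alternative proof I would keep in mind is induction on the number of vertices: remove the vertex $v$ of maximum out-degree, find a king $k$ of $G - v$ by induction, and then argue that either $k$ already reaches $v$ in two steps, or $v$ itself is a king of $G$; but the direct extremal argument above is cleaner, so that is the one I would write down.
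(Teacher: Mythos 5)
Your proof is correct and takes essentially the same approach as the paper: both pick a vertex $v$ of maximum out-degree, assume some $a$ beating $v$ is not reachable in two steps, deduce that $a$ must beat all of $N^+(v)$ as well as $v$, and derive a contradiction with the maximality of $v$'s out-degree. The only difference is cosmetic (the paper calls the offending vertex $u$ and is slightly terser about the tournament-property step you spell out).
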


\begin{proof}
    Let $G$ be a digraph representing a tournament, and $\mathcal{N}(u)$ represent all vertices who have an edge incident from $u$ in $G$. We now argue that any node $v$ with maximum out-degree in $G$ is a king. Towards contradiction, suppose that $v$ is not a king. Then there must exist a vertex $u$, such that $u$ is not incident to $v$, and $u$ is not incident to any vertex in $\mathcal{N}(v)$. So the out-degree of $u \geq |\mathcal{N}(v)| + 1$, which is greater than out-degree of $v$, a contradiction.
\end{proof}

\cite{kleinberg2021total} define the search problem \prob{King} which sits in $\Sigma_3^P$ and whose totality follows from \cref{lem:king_totality}. Consider as input a circuit $C : [2^n] \times [2^n] \rightarrow \bit$ where one can imagine $C$ implicitly encoding the digraph $G$ corresponding to a tournament, i.e. $C(x,y) = 1$ implies that there is a directed edge from $x$ to $y$ in $G$. If $C$ is not a tournament then there exists $x \neq y$ such that $C(x,y) = C(y,x)$ which can easily be checked by an $\NP$ oracle. If $C$ is a tournament we are guaranteed to have a king.

\begin{mdframed}
    \textbf{Problem: KING}
    \paragraph{Input:} A circuit $C:[2^n] \times [2^n] \rightarrow \bit$ encoding a digraph
    \paragraph{Output:}
    
    \begin{enumerate}
        \item Find a distinct $x_1, x_2 \in [2^n]$ such that $C(x_1, x_2) = C(x_2, x_1)$. [A no witness, showing $C$ does not encode a tournament]
    
        \item Find an element $k \in [2^n]$ such that for every $x \in [2^n]\setminus \{ k\}$, either $C(k, x) = 1$, or there exists an element $j \in [2^n]\setminus \{ k, x\}$, such that $C(k, j) = 1$ and $C(j, x) = 1$. [A yes witness, a solution to \prob{king}].
    \end{enumerate}
    
    
\end{mdframed}

We note that deciding whether there is a king for a digraph can be written as a $\cc{\Sigma_3^P}$ predicate: $\exists k$,  $\forall x$,  $\exists j$ \textnormal{s.t.} $[C(k,x) = 1] \vee [C(k,j) = 1 \wedge C(j, x) = 1]$. By definition \prob{King} is total, and it is in $\cc{TF\Sigma_3^P}$ since it's solutions can be verified by a $\cc{\Sigma_2^P}$ verifier:
    \begin{enumerate}
        \item We can verify the no witness $(x_1, x_2)$ using an $\NP$ oracle. There exists $(x_1, x_2) \in [2^n]$ such that $C(x_1, x_2) = C(x_2, x_1)$.
        
        \item We can verify the yes witness $(k)$ using a $\cc{\Sigma_2^P}$ oracle. For all $x \neq k \in [2^n]$ , there exists $j$ such that either $C(k,x) = 1$ or $(C(k,j) = 1 \wedge C(j,x) = 1)$.
    \end{enumerate}




If we want to extract a solution for \prob{king} via the existence argument in \cref{lem:king_totality}, it appears that we need to solve some kind of generic counting problem (outside of the polynomial hierarchy) to compute the neighborhood of a given node. Interestingly, we show that we need much less. Our starting point is 
the following crucial lemma which could be thought of as a weak d.s.r property
about the structure of tournaments. 

\znote{modified \cref{lem: king_choices}} \snote{Looks good}

We start by defining what we call a \emph{weak king} of a subset of vertices.

\begin{definition}[Weak King]
    Let $G$ be a digraph on a set of vertices $V$ and let $U \subseteq V$ be a subset of vertices. A vertex $v \in U$ is called a weak king of $U$ if every vertex in $U$ can be reached from $v$ by a path of length at most 2 in $G$.
\end{definition}

We note that a weak king $v$ for $U \subseteq V$ may not be a king in the induced subgraph $G[U]$ since it is allowed to use edges from $G \setminus G[U]$. On the other hand, if $U = V$, then a weak king is also a king.

\begin{lemma}
    \label{lem: king_choices}
    Let $G$ be a tournament on a set of vertices $V$ where $|V| \geq 1$ and let $U \subseteq V$ be any subset of vertices. For any $v \in V \setminus U$, if $u$ is a weak king for $U$, then either $u$ or $v$ is a weak king for $U \cup \{v\}$.
\end{lemma}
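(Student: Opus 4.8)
The statement to prove is: if $G$ is a tournament on vertex set $V$ with $|V| \geq 1$, $U \subseteq V$, $v \in V \setminus U$, and $u$ is a weak king for $U$, then $u$ or $v$ is a weak king for $U \cup \{v\}$. The plan is to do a straightforward case analysis on the direction of the edge between $u$ and $v$, which exists and is oriented exactly one way since $G$ is a tournament.

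First I would record what needs to be checked: a vertex $w \in U \cup \{v\}$ is a weak king for $U \cup \{v\}$ iff every vertex in $U \cup \{v\}$ is reachable from $w$ by a path of length at most $2$ in $G$. Since $u$ is already a weak king for $U$, every vertex of $U$ is reachable from $u$ within two steps; so to argue $u$ works for $U \cup \{v\}$ it suffices to additionally reach $v$ from $u$ within two steps. \textbf{Case 1: the edge is $(u,v)$}, i.e. $C(u,v)=1$. Then $v$ is reachable from $u$ by a path of length $1$, so $u$ is a weak king for $U \cup \{v\}$ and we are done. \textbf{Case 2: the edge is $(v,u)$}, i.e. $C(v,u)=1$. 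I claim $v$ is then a weak king for $U \cup \{v\}$. Clearly $v$ reaches itself (length $0$), and $v$ reaches $u$ by the edge $(v,u)$ (length $1$). Now take any $w \in U$; I must reach $w$ from $v$ in at most two steps. Since $u$ is a weak king for $U$, there is a path of length at most $2$ from $u$ to $w$: either $u = w$, or $(u,w)$ is an edge, or there is some intermediate $x$ with $(u,x)$ and $(x,w)$ edges. In the first two sub-cases, $v \to u \to w$ (or $v \to u = w$) is a path of length at most $2$ from $v$ to $w$. The third sub-case is the delicate one: the natural path $v \to u \to x \to w$ has length $3$, too long; here I would instead use the edge between $v$ and $x$. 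If $(v,x)$ is an edge in $G$, then $v \to x \to w$ works. If instead $(x,v)$ is an edge, I need another route — but note in that case I still have $v\to u\to w$ available only if $(u,w)$ is an edge, which is not assumed. So the claim as I've set it up is not yet closed in this sub-case, which tells me the lemma must be using that $u$ is a weak king for $U$ in a slightly stronger way, or that I should reach $w$ via a different length-$\le 2$ path from $u$.

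Reconsidering: the correct observation is that I do not get to pick which length-$\le 2$ path from $u$ to $w$ to use, but I only need \emph{one} good choice for each $w$, and the only problematic configuration is $w \ne u$, $(u,w)$ not an edge (so $(w,u)$ is an edge), and the witness intermediate $x$ satisfies $(x,v)$ an edge. I would then argue directly about $w$: since $(w,u)$ is an edge and $(v,u)$ is an edge, consider the edge between $v$ and $w$. If $(v,w)$ is an edge we are done immediately ($v \to w$). If $(w,v)$ is an edge, then combined with $(w,u)$... this still does not directly give a short path from $v$ to $w$. At this point I expect the right move is to not fix $u$ as the weak king in Case 2 but to observe that we may reach $w$ from $v$ via $v \to u \to \cdots$ only when the $u$-to-$w$ hop is length $\le 1$, and to handle length-exactly-$2$ $u$-to-$w$ paths by noting that if $(v,x)$ holds for \emph{the} witness $x$ then $v \to x \to w$ works, and if $(x,v)$ holds then $x$ itself, together with $v$, gives extra structure — in particular $x$ beats $v$ and $u$ beats $x$. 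The cleanest resolution, which I would aim for, is: whenever the $u$-to-$w$ path has length $2$ through $x$, either $v$ beats $x$ (use $v \to x \to w$) or $x$ beats $v$; in the latter case replace the role of $u$: we still want $v$ to be the weak king, and we reach $w$ by... I suspect the intended argument actually shows that when $(v,u)$ holds, $v$ is a weak king, and the length-$2$ sub-case is handled by: if $v$ does not beat $x$, then since $u$ beats $x$ and $x$ beats $v$, look at whether $v$ beats $w$ directly. I would therefore structure the write-up to first dispose of Case 1 in one line, then in Case 2 prove $v$ is a weak king by taking arbitrary $w \in U$ and exhausting the edge orientation between $v$ and $w$: if $(v,w)$, done; if $(w,v)$, then I need a length-$2$ path $v \to ? \to w$, and I would find the intermediate vertex using the weak-kingness of $u$ for $U$ applied appropriately, being careful to track that all vertices involved are handled.

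\textbf{Main obstacle.} The delicate point, and where I'd spend the most care, is exactly the sub-case in Case 2 where $w$ is reached from $u$ only via a length-$2$ path through some $x$, and the edge between $v$ and $x$ points the "wrong" way ($x \to v$). This is the heart of the lemma and the only place the argument is not a one-liner; I'd want to verify carefully that the tournament property forces enough edges to still produce a length-$\le 2$ path from $v$ to $w$, possibly by arguing that in this configuration $v$ must in fact beat $w$ directly, or by identifying a different intermediate vertex. Everything else (the definitions unwinding, Case 1, the trivial reachabilities $v\to v$ and $v\to u$) is routine.
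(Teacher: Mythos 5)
Your case split on the direction of the single edge $\{u,v\}$ is too fine, and that is exactly why you cannot close the argument: in your Case 2 (where $(v,u)$ is the edge) it is simply \emph{false} that $v$ must be a weak king for $U \cup \{v\}$. Concretely, take $V = \{u,v,x,w\}$ with edges $(v,u)$, $(u,x)$, $(x,w)$, $(x,v)$, $(w,u)$, $(w,v)$. Then $u$ is a weak king for $U=\{u,x,w\}$, and $(v,u)$ is an edge so you are in your Case 2; but $v$'s only out-neighbor is $u$, and $u$'s only out-neighbor is $x$, so $v$ cannot reach $w$ in two steps and $v$ is not a weak king for $U \cup \{v\}$. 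The lemma is still saved because $u \to x \to v$ is a length-$2$ path, so $u$ is a weak king for $U \cup \{v\}$ --- but your Case 1 only captures the direct edge $(u,v)$ and misses this.

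The paper's proof makes the split you need: Case 1 is \textquotedblleft there exists a path of length at most $2$ from $u$ to $v$,\textquotedblright\ in which case $u$ is immediately a weak king for $U \cup \{v\}$. Case 2 is then the complementary \emph{stronger} hypothesis \textquotedblleft there is no path of length at most $2$ from $u$ to $v$,\textquotedblright\ and this extra leverage resolves precisely the sub-case that blocks you. In your problematic configuration --- $u \to x \to w$ is the witnessing path, and the edge between $v$ and $x$ points as $(x,v)$ --- the paper's Case 2 hypothesis is violated, because $u \to x \to v$ would then be a length-$2$ path from $u$ to $v$; so this configuration belongs to Case 1. Under the paper's Case 2, for every intermediate $x$ with $(u,x) \in G$ you are guaranteed $(x,v) \notin G$, hence by the tournament property $(v,x) \in G$, and the length-$2$ path $v \to x \to w$ exists. (Your own earlier sub-cases $x=u$ and $(u,x) \in G$ go through verbatim under the stronger hypothesis as well.) You identified the obstruction correctly; the fix is to redraw the boundary between your two cases so that the obstruction lands in Case 1, rather than trying to find a new path in Case 2.
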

\begin{proof}
    Consider any such $G, U \subseteq V$ and $v \in V \setminus U$. Let $u$ be a weak king for $U$. There can be two cases:
    \begin{enumerate}
        \item Suppose there exists a path of length at most 2 from $u$ to $v$ in $G$. Since $u$ is a weak king for $U$, this implies $u$ is also a king for $U \cup \{v\}$. 

        \item Suppose there is no path of length at most 2 from $u$ to $v$ in $G$. We will show that for any $x \in U$, there is a path of length at most 2 from $v$ to $x$. We can have three cases: 
        \begin{itemize}
            \item $x = u$: We know that $(u, v) \notin G$. Hence $(v, u) \in G$.
            \item $(u,x) \in G$: We know that $(x, v) \notin G$ for otherwise we found a path of length 2 from $u$ to $v$. Hence $(v, x) \in G$.
            \item $\exists\; w \in V$, such that $(u,w), (w,x) \in G$: We know that $(w, v) \notin G$ for otherwise there is a path of length 2 from $u$ to $v$. Hence $(v, w) \in G$ and we have the path $(v,w),(w,x)$.
        \end{itemize}
        Hence $v$ is a weak king for $U \cup \{v\}$.
    \end{enumerate}
\end{proof}

\cref{lem: king_choices} now suggests a natural d.s.r strategy. Given a tournament $G$ on a set of vertices $V$, finding a king is equivalent to finding a weak king for the whole vertex set $V$. Moreoever, we can reduce the problem of finding a weak king for vertices $[1, t]$ to that of finding a weak king for vertices $[1, t-1]$, get back a weak king $k'$, and then output either $k'$ or $t$, whichever is a weak king for $[1,t]$. Notice that by \cref{lem: king_choices}, either $k'$ or $t$ will be a weak king for $[1,t]$. This d.s.r does not show \cc{PLS^{\Sigma_2^P}} membership of \prob{King} because for a circuit $C$, $\mu(C)$ is the number of vertices in the tournament implicitly defined by $C$. $\mu$ is therefore not polynomially bounded. Fortunately, our proposed d.s.r is ``memoryless'' in that we do not need to keep a stack trace. Therefore, we do not need all the machinery of \cref{thm:mu-dsr}. We will use that insight to turn our proposed d.s.r into a reduction from \prob{King} to $\prob{Sink-of-DAG}^{\cc{\Sigma_2^P}}$.

\begin{theorem}
    \label{thm: king_pls}
    \prob{King} is in $\cc{PLS^{\Sigma_2^P}}$.
\end{theorem}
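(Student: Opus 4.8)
The plan is to turn the "memoryless" downward self-reduction suggested by Lemma~\ref{lem: king_choices} directly into a reduction from \prob{King} to $\prob{Sink-of-DAG}^{\cc{\Sigma_2^P}}$, bypassing Theorem~\ref{thm:mu-dsr} (which we cannot invoke since $\mu$ here counts the vertices of the tournament and is exponential in $n$). The key observation is that, unlike a general recursive algorithm whose stack trace must be recorded as a \prob{Sink-of-DAG} vertex, the recursion for finding a weak king needs only a constant amount of local state: to compute a weak king for $[1,t]$ it suffices to know a weak king for $[1,t-1]$ and then decide, using a $\cc{\Sigma_2^P}$ oracle, whether that vertex or $t$ is a weak king for $[1,t]$. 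So the "line" we traverse has length $2^n$, which is fine for \prob{Sink-of-DAG} since its domain is $[2^N]$ for $N = \poly(n)$.

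First I would set up the \prob{Sink-of-DAG} instance on vertices indexed by pairs $(t, w)$ with $t \in [2^n]$ and $w \in [2^n]$, with the intended semantics that $(t,w)$ is a ``live'' vertex precisely when $w$ is a weak king for $\{1,\dots,t\}$ in the tournament encoded by $C$. (If $C$ is not a tournament, the very first oracle query detects a \prob{King} solution of type~1 and we short-circuit; so assume $C$ encodes a tournament.) The value circuit is $V(t,w) = t$, so potential strictly increases along the line. The successor circuit $S$, on input $(t,w)$: if $w$ is \emph{not} a weak king for $[1,t]$ — checkable by a $\cc{\Sigma_2^P}$ query, since ``$\forall x\in[1,t]\ \exists j:\ C(w,x)=1 \lor (C(w,j)=1 \land C(j,x)=1)$'' is a $\Pi_2$ predicate — then $(t,w)$ is an isolated/junk vertex and $S$ fixes it (and $V$ is set to $0$ there). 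If $t = 2^n$, then $w$ is a king of the whole graph, so $(t,w)$ is a sink: $S(t,w)=(t,w)$, and the \prob{King} reduction reads off $w$ as the type-2 solution. Otherwise $S(t,w) = (t+1, w')$ where $w'$ is whichever of $w$ or $t+1$ is a weak king for $[1,t+1]$ — by Lemma~\ref{lem: king_choices} at least one of them is, and which one can be decided with one more $\cc{\Sigma_2^P}$ query (if both work, break ties by picking $w$, say). The source is $(1,1)$, which is trivially a weak king for the singleton $\{1\}$.

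Next I would verify correctness. Along the chain starting at the valid source $(1,1)$, each successor step produces another valid vertex (this is exactly the content of Lemma~\ref{lem: king_choices}) with potential incremented by $1$, so the chain has no ``violation'' vertex ($V(S(v)) \le V(v)$ can never occur among live vertices) and must terminate at a genuine sink, which by construction has $t = 2^n$ and hence a king in the first coordinate's companion. Conversely, any \prob{Sink-of-DAG} solution $v$ is either a sink with $S(v)=v$ (a live vertex $v$ with $S(v)=v$ must have $t=2^n$, yielding a king) or a vertex $v$ with $S(v)\neq v$ but $V(S(v)) \le V(v)$ — but the latter is impossible for live vertices since $V$ strictly increases, and junk vertices are fixed points, so this case never arises; thus every \prob{Sink-of-DAG} solution hands us a king. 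Finally, since the successor, value, and validity circuits all invoke only $\cc{\Sigma_2^P}$ oracle gates, this is a bona fide $\prob{Sink-of-DAG}^{\cc{\Sigma_2^P}}$ instance, so $\prob{King} \in \cc{PLS}^{\cc{\Sigma_2^P}}$.

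**The main obstacle** I anticipate is purely bookkeeping rather than conceptual: getting the domain/index encoding and the ``junk vertex'' handling exactly right so that the only fixed points of $S$ among potential \prob{Sink-of-DAG} solutions are true sinks encoding a king, and making sure that the $\cc{\Sigma_2^P}$ predicate ``$w$ is a weak king for $[1,t]$'' is expressed with the correct quantifier alternation (it is $\Pi_2$, so a $\cc{\Sigma_2^P}$ oracle suffices to test its negation, and complementing lands us in $\cc{\Sigma_2^P}$ as needed for oracle gates). One subtlety worth double-checking is that the ``$|x_i| \le |x| + \poly(\mu(x))$'' size blowup issue that plagues the general $\mu$-d.s.r framework does not bite here: we never create larger \prob{King} instances, we only walk through indices of the \emph{same} circuit $C$, so the reduction is clean.
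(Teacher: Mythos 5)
Your construction is the same as the paper's: after ruling out a non-tournament with a $\cc{\Sigma_2^P}$ query, you build \prob{Sink-of-DAG} on pairs $(t,w)$ with $V(t,w)=t$, a successor that advances $t$ by one and keeps whichever of $w$ or $t+1$ is a weak king for $[1,t+1]$ (guaranteed to exist by Lemma~\ref{lem: king_choices}), and junk vertices made into fixed points. That is exactly the paper's reduction and it is correct.

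One small slip in your correctness verification is worth fixing. You write that a \prob{Sink-of-DAG} solution is ``either a sink with $S(v)=v$ ... or a vertex $v$ with $S(v)\neq v$ but $V(S(v))\le V(v)$.'' That is not the definition used in this paper: a solution is a $v$ with $S(v)\neq v$ such that \emph{either} $S(S(v))=S(v)$ \emph{or} $V(S(v))\le V(v)$; in particular a fixed point $S(v)=v$ is never returned. So the king is not read off the returned vertex itself. You should take the returned $(j,k)$, compute $(j',k')=S(j,k)$, argue (as the paper does) that the $V$-violation case is impossible so $S(j',k')=(j',k')$, conclude $j'=2^n$ because $k'$ is a weak king for $[1,j+1]$ and $S$ only fixes live vertices at $t=2^n$, and then output $k'$. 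Everything else in your write-up, including the observation that the reduction stays within one fixed circuit $C$ and hence avoids the size-blowup issue of the general $\mu$-d.s.r. theorem, matches the paper.
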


\znote{modified due to \cref{lem: king_choices}}

\begin{proof}
    As previously discussed, a $\cc{\Sigma_2^P}$ oracle is sufficient to check the answer to a \prob{King} instance efficiently. We reduce \prob{King} to \prob{Sink-of-DAG} with a \cc{\Sigma_2^P} oracle as follows. We are given a \prob{King} instance $C: [2^n] \times [2^n] \rightarrow \{ 0, 1\}$. We first use our \cc{\Sigma_2^P} oracle to check if there exists distinct $x_1, x_2 \in [2^n]$ such that $C(x_1, x_2) = C(x_2, x_1)$. If there are, we can use the \cc{NP} oracle to find $x_1, x_2$ and output them as a type 1 solution to \prob{King}. Notice that in this case the reduction runs in $\poly(n)$ time and is correct. We will therefore assume for the remainder of the proof that $C$ defines a proper tournament $G$.
    
    We now show the reduction from \prob{King} to \prob{Sink-Of-Dag}. We first specify the successor circuit $S: [2^n] \times [2^n] \rightarrow [2^n] \times [2^n]$. We should think of the first input $i$ as an integer and the second input $x$ as a vertex. 
    $S$ is defined as follows:
    \begin{enumerate}
        \item 
        If $x$ is not a weak king for $[1,i]$ (which one can check using a \cc{\Sigma_2^P} oracle), $S(i, x) = (i, x)$.
        \item 
        For all $x$, $S(2^n, x) = (2^n, x)$.
        \item 
        If $x$ is a weak king for $[1, i+1]$ (which one can check using a \cc{\Sigma_2^P} oracle), $S(i, x) = (i+1, x)$.
        \item 
        If $i+1$ is a weak king for $[1, i+1]$ (which one can check using a \cc{\Sigma_2^P} oracle), $S(i, x) = (i+1, i+1)$.
    \end{enumerate}

    Notice that $S$ covers all possible cases by \cref{lem: king_choices}.

    We now specify the value circuit $V: [2^n] \times [2^n] \rightarrow [2^{n}]$, $V(i, k) = i$. The reduction calls the \prob{Sink-of-DAG} oracle on $S, V$ and receives back $(j, k)$ as output. Let $(j', k') = S(j, k)$. The reduction outputs $k'$. It should be clear that the reduction runs in time $\poly(n)$. We now show correctness. By the definition of \prob{Sink-of-Dag}, $S(j, k) \neq (j, k)$ but one of the following two conditions hold.
    \begin{enumerate}
        \item $S(S(j, k)) = S(j, k)$: In other words $S(j', k') = (j', k')$. Notice that by the definition of $S$, since $S(j, k) \neq (j, k)$, $k$ must be a weak king for $[1,j]$. Again by the definition of $S$, $k'$ is a weak king for $[1, j' = j + 1]$. Since $S(j', k') = (j', k')$, but $k'$ is a weak king for $[1, j' = j + 1]$, it must be the case that $j' = 2^n$ (otherwise $S(j', k')$ would equal $(j'+1, k'')$ for some $k''$). Therefore, $k'$ is a weak king for $[1, 2^n]$, and hence a king for $G$ as desired. 
        \item $V(S(j, k)) \leq V(j, k)$: Notice that since $S(j, k) \neq (j, k)$, by the definition of $S$, either $S(j, k) = (j+1, k)$ or $S(j, k) = (j+1, j+1)$. In either case $V(S(j, k)) = j+1$. So it cannot be the case that $V(S(j, k)) \leq V(j, k)$.
    \end{enumerate}
\end{proof}

To our surprise, we also find that \prob{King} is in \cc{TFZPP^{\Sigma_2^P}}. This proof uses different ideas from downward self-reducibility, but interestingly still relies on \cref{lem: king_choices}. 
\begin{theorem}
    \label{thm: king_tfzpp}
    \prob{King} is in \cc{TFZPP^{\Sigma_2^P}}
\end{theorem}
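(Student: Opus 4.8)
The plan is to implement the $W_u$-shrinking strategy hinted at before the statement. Given the input circuit $C\colon[2^n]\times[2^n]\to\bit$, first use the $\NP$ oracle to decide whether there are distinct $x_1,x_2$ with $C(x_1,x_2)=C(x_2,x_1)$; if so, recover such a pair by standard search-to-decision and output it as a type-$1$ solution. Otherwise $C$ encodes a tournament $G$, and for a vertex $u$ we let $W_u=\{v\neq u : \text{there is no directed path of length}\le 2\text{ from }u\text{ to }v\text{ in }G\}$, so that $u$ is a king exactly when $W_u=\emptyset$. Crucially, ``$v\in W_u$'' is a $\coNP$ predicate in $v$ (it asserts $C(u,v)=0$ and $\forall w\, (C(u,w)=0\vee C(w,v)=0)$), so by the $\coNP$ sampler of \cite{BGP00NPsampler} one can, in zero-error randomized polynomial time with a $\cc{\Sigma_2^P}$ oracle, either report that $W_u$ is empty or output a sample from $W_u$ whose distribution is within, say, total variation distance $1/4$ of uniform. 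The algorithm then starts at an arbitrary vertex $u$ and repeatedly replaces $u$ by such a sample $v\in W_u$, stopping and outputting $u$ as a type-$2$ solution once the sampler reports $W_u=\emptyset$.

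The first ingredient is the strict-decrease claim: if $v\in W_u$ then $W_v\subsetneq W_u$. This is essentially a repackaging of \cref{lem: king_choices}. For containment, suppose $x\in W_v$; then $u$ is a weak king of $\{x\}$, and $v\notin\{x\}$, so by \cref{lem: king_choices} either $u$ or $v$ is a weak king of $\{x,v\}$; but $u$ is not, since $v\in W_u$ means $u$ has no path of length $\le 2$ to $v$, hence $v$ is a weak king of $\{x,v\}$, contradicting $x\in W_v$. Thus $W_v\subseteq W_u$, and the inclusion is strict because $v\in W_u$ while $v\notin W_v$. Consequently $|W_u|$ strictly decreases at every iteration, so the loop terminates after at most $2^n$ iterations and always outputs a correct answer — a type-$1$ violation, or a vertex $k$ with $W_k=\emptyset$, which is precisely a type-$2$ solution.

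The second ingredient is the constant-fraction-decrease claim: for $v$ drawn uniformly from $W_u$, $\Pr[\,|W_v|\le |W_u|/2\,]\ge \tfrac12$. Here one works inside the induced tournament $G[W_u]$. If $x\in W_v$ then (using $W_v\subseteq W_u$ from above, and that $v$ fails to reach $x$ in two steps even within $G[W_u]$) the tournament structure forces $x\to v$ and $x\to w$ for every out-neighbour $w$ of $v$ in $G[W_u]$, so $\deg^+_{G[W_u]}(x)>\deg^+_{G[W_u]}(v)$. Hence $|W_v|$ is at most the number of vertices of $W_u$ whose out-degree in $G[W_u]$ strictly exceeds that of $v$; ordering the $m=|W_u|$ vertices by out-degree, the $i$-th one has at most $m-i$ vertices above it, and a uniformly random choice lands in the top half (so $|W_v|\le \lfloor m/2\rfloor$) with probability $\ge\tfrac12$. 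With the sampler's $1/4$ bias, the halving event still has probability $\ge\tfrac14$.

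Putting it together: $|W_u|$ drops monotonically and halves with probability $\ge\tfrac14$ each iteration, so at most $n+1$ ``halving'' iterations suffice to reach $W_u=\emptyset$, and a Chernoff bound on the number of halvings shows the loop runs for $O(n)$ iterations in expectation (and with all but exponentially small probability). Each iteration costs expected polynomial time with the $\cc{\Sigma_2^P}$ oracle, and every output is correct by construction, giving $\prob{King}\in\cc{TFZPP}^{\cc{\Sigma_2^P}}$. I expect the main obstacle to be making the constant-fraction-decrease claim airtight: one must be careful that $W_v$ is defined via paths in $G$ but is bounded using out-degrees in the \emph{induced} tournament $G[W_u]$, and that ties in out-degree together with the merely almost-uniform sampler do not spoil the ``random vertex lands in the top half'' argument. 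The strict-decrease claim is immediate from \cref{lem: king_choices}, and the Las Vegas bookkeeping is routine once the $\coNP$ sampler of \cite{BGP00NPsampler} is invoked.
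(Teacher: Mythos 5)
Your proof follows the paper's approach essentially verbatim: the same $W_u$-shrinking algorithm with the $\cc{\Sigma_2^P}$ sampler, the same strict-decrease claim via \cref{lem: king_choices}, and the same constant-fraction-decrease claim via out-degree counting in $G[W_u]$ (your sort-by-out-degree argument uses the slightly sharper observation that $x\in W_v$ forces $\deg^+_{G[W_u]}(x)>\deg^+_{G[W_u]}(v)$, where the paper just notes that out-neighbours of $v$ are excluded from $W_v$, but both give an equivalent iteration bound). One small writing slip in your strict-decrease argument: after ``suppose $x\in W_v$'' you assert ``then $u$ is a weak king of $\{x\}$,'' which does not follow — you should say ``suppose for contradiction $x\in W_v$ but $x\notin W_u$; since $x\notin W_u$, $u$ is a weak king of $\{x\}$,'' after which the application of \cref{lem: king_choices} and the contradiction go through correctly (the paper's version, which applies the lemma once to $U=V\setminus W_u$, is a cleaner way to package the same idea).
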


We will need a uniform sampler for $\cc{\Sigma_2^P}$ relations, which follows from observing that the sampler for $\NP$ relations in \cite{BGP00NPsampler} directly applies for higher classes in the polynomial hierarchy.

\begin{lemma}[\cite{BGP00NPsampler}]\label{thm: sigma_i_sampler}
    For $i \geq 1$, let $R$ be a \cc{\Sigma_i^P} relation. Then there is a uniform generator for $R$ which is implementable in probabilistic, polynomial time with a \cc{\Sigma_i^P} oracle.
\end{lemma}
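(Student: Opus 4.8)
The plan is to observe that the uniform generator of Bellare, Goldreich and Petrank \cite{BGP00NPsampler} for \cc{NP} relations relativizes cleanly, and to pin down exactly which of its sub-computations need oracle access. Say that $R$ is a \emph{\cc{\Sigma_i^P} relation} if it is polynomially bounded ($|y| \leq \poly(|x|)$ whenever $(x,y) \in R$) and, as a set of pairs, $R \in \cc{\Sigma_i^P}$; this includes, in particular, the \cc{\Pi_{i-1}^P}-verifiable relations that arise from \cc{TF\Sigma_i^P} problems, such as the set $W_u$ of witnesses that $u$ is not a king used in \cref{sec:king}. The goal is a probabilistic polynomial-time oracle machine $G^{\cc{\Sigma_i^P}}$ that on input $x$ outputs an element distributed uniformly over $R_x := \{ y : (x,y) \in R\}$, and outputs $\bot$ when $R_x = \emptyset$. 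For $i = 1$ this is exactly the BGP theorem, so the work is entirely in checking the relativization.

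First I would isolate the single point at which the relation enters the BGP algorithm: the \emph{restricted-emptiness query}. For a polynomial-time-computable pairwise-independent hash $h : \bit^{m} \to \bit^{k}$, a target $w \in \bit^{k}$, and a prefix $p$, this is the predicate ``there is a $y$ with prefix $p$ such that $h(y) = w$ and $(x,y) \in R$.'' Writing $(x,y) \in R$ in prenex form as $\exists u_1 \forall u_2 \cdots \varphi(x,y,\vec u)$ with $\varphi$ polynomial-time, the restricted-emptiness predicate equals $\exists (y,u_1)\, \forall u_2 \cdots [\, p \preceq y \wedge h(y) = w \wedge \varphi(x,y,\vec u)\,]$, so it is again a \cc{\Sigma_i^P} predicate (the polynomial-time side conditions are absorbed into the leading existential block). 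All three pieces of BGP reduce to such queries: (i) the Sipser-style \emph{approximate counting} of $|R_x|$ up to a factor of two, obtained by binary search over $k$ with each step a fresh random restricted-emptiness query; (ii) \emph{isolation}, where after calibrating $2^{k} \approx |R_x|$ a random bucket $B = R_x \cap h^{-1}(w)$ is enumerated by prefix search; and (iii) the final \emph{sampling-with-rejection} step, which picks $y \in B$ uniformly and accepts with probability $\propto 1/|B|$, restarting otherwise, to cancel the bucket-size bias and make the output exactly uniform. Everything else — drawing hash functions, arithmetic on counts, the rejection coin flips — is relation-oblivious randomized polynomial time.

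I would then conclude that replacing BGP's \cc{NP} oracle by a \cc{\Sigma_i^P} oracle makes their analysis go through verbatim: each of the polynomially many oracle calls is a restricted-emptiness query, hence a single \cc{\Sigma_i^P} question; the randomness budget, failure probabilities, and expected running time are unchanged; and the output-distribution analysis is purely combinatorial and never references the level of the hierarchy. This yields a \cc{\Sigma_i^P}-oracle probabilistic machine running in expected polynomial time that generates a uniformly random element of $R_x$ (and $\bot$ when $R_x$ is empty). If one prefers strict polynomial time, truncating the restart loop after $\poly(|x|)$ rounds gives a generator whose output is within $2^{-\poly(|x|)}$ of uniform, which is more than enough for every application of this lemma in the paper.

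I expect the only real obstacle to be bookkeeping rather than mathematics. One must (a) fix the definition of ``\cc{\Sigma_i^P} relation'' so that both ``$(x,y)\in R$'' and the restricted-emptiness predicate land in \cc{\Sigma_i^P} — the prenex argument above does this, and it also covers the \cc{\Pi_{i-1}^P}-verifiable case needed elsewhere; (b) verify that merging the outer existential over $y$ (and over the leading witness block $u_1$, and over the hash range in the counting step) does not increase the number of quantifier alternations; and (c) dispatch the edge cases, namely $R_x = \emptyset$ (one \cc{\Sigma_i^P} query) and $|R_x| = 1$ (the counting step returns $1$ and the hashing becomes vacuous). None of these touch the substance of \cite{BGP00NPsampler}, so the proof is essentially ``inspect the BGP construction and relativize.''
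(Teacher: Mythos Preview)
Your proposal is correct and matches the paper's approach exactly: the paper does not give its own proof of this lemma but simply cites \cite{BGP00NPsampler} and remarks that the sampler for \cc{NP} relations ``directly applies for higher classes in the polynomial hierarchy.'' Your proposal is precisely this relativization observation, spelled out in more detail than the paper itself provides.
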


\begin{proof}[Proof of \cref{thm: king_tfzpp}]
    We are given a \prob{King} instance $C: [2^n] \times [2^n] \rightarrow \{ 0, 1\}$. We first use our \cc{\Sigma_2^P} oracle to check if there exists distinct $x_1, x_2 \in [2^n]$ such that $C(x_1, x_2) = C(x_2, x_1)$. If there are, we can use the \cc{NP} oracle to find $x_1, x_2$ and output them as a type 1 solution to \prob{King}. Notice that in this case the reduction runs in $\poly(n)$ time and is correct. We will therefore assume for the remainder of the proof that $C$ defines a proper tournament $G$.
    

    For any vertex $u\in V$, we use $W_u$ to denote the set of all vertices    witnessing that $u$ is not a king. Formally,
    \[ W_u:=\{v \in V \setminus \{u\}: (u,v) \notin G, \forall w\in V \setminus \{u, v\}, (u,w)\notin G \lor (w,v)\notin G\} \; . \]

    In particular, if $u$ is a king, then $W_u = \emptyset$. For any given $u \in V$, we could sample uniformly from $W_u$ in probabilistic polynomial time with a \cc{\Sigma_2^P} oracle by \cref{thm: sigma_i_sampler}.
    
    Next, we describe the algorithm for finding a king. Start with a vertex $u \in V$, if $u$ is a king, we find a solution and terminate. Otherwise we sample $v$ uniformly from $W_u$ and repeat the process with $v$.

    To see that the algorithm above terminates in (expected) polynomial-time, we have the following two claims:
    \begin{claim} \label{clm: strict_decrease}
        For any $u \in V$ and $v \in W_u$, $W_v \subsetneq W_u$.
    \end{claim}
    
\znote{modified due to \cref{lem: king_choices}}
    
    \begin{proof}
        Let $U = V \setminus W_u$. By definition of $W_u$, $u$ is a weak king for $U$. 
        
        Now by \cref{lem: king_choices}, $v$ is a weak king of $U \cup \{v \}$ since $v \in W_u$. 
        
        It follows that $W_v \subseteq V \setminus (U \cup \{v\}) = W_u \setminus \{v\} \subsetneq W_u$.
    \end{proof}
    \begin{claim}\label{clm:const_frac}
        For any $u \in V$,
        \[ \Pr_{v \sim W_u} \left[|W_v| \leq \frac{2|W_u|}{3}\right] \geq \frac{1}{4} \; . \]
    \end{claim}
    \begin{proof}
        Let $s = |W_u|$. Consider the induced tournament $G[W_u]$ on the vertex set $W_u$. The total outdegree is exactly $s(s-1)/2$. By a simple counting argument, at least $1/4$ of the vertices in $W_u$ have outdegree at least $(s-1)/3$. Moreover, any vertex $w$ that is an outgoing neighbor of $v$ cannot be in $W_v$. Combining these two facts yields
        \[\Pr_{v \sim W_u}\left[|W_v| \leq (s - 1) - \frac{s-1}{3}\right] \geq \frac{1}{4} \; .\]
    \end{proof}
    Now we can measure the progress of our algorithm by $|W_u|$, which in expectation decreases by a constant factor in constant number of iterations. It terminates when $|W_u| = 0$ which takes in expectation $O(\polylog(|V|) = O(\poly(n))$ iterations.
\end{proof}

\cref{thm: king_tfzpp} stands in striking contrast to the following theorem. This indicates that the \cc{\Sigma_2^P} oracle is in some sense making \prob{King} much easier.
\begin{theorem}[\cite{mande2023randomized}]
    \label{thm: mande}
    In the black-box query model, any randomized algorithm for \prob{King} requires at least $\Omega(2^n)$ time.
\end{theorem}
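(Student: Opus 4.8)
The plan is to prove the lower bound by Yao's minimax principle: it suffices to produce a distribution $\mathcal{D}$ over valid tournaments on $N := 2^n$ vertices such that every \emph{deterministic} algorithm making at most $q$ queries to $C$ outputs a king with probability below $1/3$ whenever $q \le N/5$. Fix any tournament $T_0$ on $[N]$ (for concreteness the transitive tournament induced by the natural order, which has a trivial circuit). Draw a vertex $v^\star \in [N]$ uniformly at random, and let $T_{v^\star}$ be obtained from $T_0$ by re-orienting every edge incident to $v^\star$ to point away from $v^\star$; then in $T_{v^\star}$ the vertex $v^\star$ beats everyone and has in-degree $0$. The first step is to observe that $v^\star$ is the \emph{unique} king of $T_{v^\star}$: it reaches every vertex in one step, while for any $u \ne v^\star$ no path of length at most $2$ can reach $v^\star$ because $v^\star$ has in-degree $0$. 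In particular $T_{v^\star}$ is a genuine tournament with no ``no-witness'', so on an input drawn from $\mathcal{D}$ a correct algorithm is forced to output $v^\star$.

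Second, I would analyze the information a deterministic algorithm learns about $v^\star$. Each query $C(x,y)$ reveals one oriented edge; call a vertex $w$ \emph{eliminated} if a revealed edge points into $w$, which rules out $w = v^\star$. The re-orientation at $v^\star$ only changes answers on edges incident to $v^\star$, and it changes such an answer relative to $T_0$ exactly when $T_0$ oriented that edge \emph{into} $v^\star$. Consequently, as long as every answer agrees with $T_0$ the algorithm is driven along one fixed sequence of queries — the ``$T_0$-path'' $(x_1,y_1),\dots,(x_q,y_q)$ — and one checks that, on input $T_{v^\star}$, some answer disagrees with $T_0$ if and only if $v^\star$ is eliminated by some edge of this fixed $T_0$-path. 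Since the $T_0$-path has $q$ edges, at most $q$ vertices are eliminated by it, so $\Pr[v^\star \text{ eliminated by the } T_0\text{-path}] \le q/N$. On the complementary event the algorithm traverses the entire $T_0$-path and returns a fixed vertex $\hat v$; moreover, conditioned on this event, every non-eliminated vertex is equally likely to be $v^\star$ (each yields the identical transcript), so there are at least $N-q$ equally likely candidates and $\Pr[\hat v = v^\star \mid \text{this event}] \le 1/(N-q)$. Altogether the success probability of any deterministic $q$-query algorithm on $\mathcal{D}$ is at most $q/N + 1/(N-q)$, which is below $1/3$ once $q \le N/5$ and $n$ is sufficiently large. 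Yao's principle then yields that any randomized algorithm that finds a king with probability at least $2/3$ on every input must make $\Omega(N) = \Omega(2^n)$ queries (and, via the same estimate, $\Omega(2^n)$ queries in expectation).

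I expect the step to get right to be the posterior-uniformity claim: one must argue carefully that when $v^\star$ is not eliminated by the $T_0$-path, running the algorithm on $T_{v^\star}$ reproduces \emph{exactly} that path and transcript, so that all surviving candidates are truly indistinguishable and the conditional law of $v^\star$ is uniform over them; and one must be content to bound the remaining transcripts (where $v^\star$ is eliminated, hence essentially revealed) simply by $\Pr[v^\star \text{ eliminated}] \le q/N$, which is the ``good case'' for the algorithm. A second point to pin down is the model: this is a bound in the black-box query model, where the algorithm has only oracle access to the tournament; since $T_0$ has a trivial circuit and the re-orientation at $v^\star$ adds only $\poly(n)$ gates, the hard instances also admit $\poly(n)$-size circuit descriptions, so the $\Omega(2^n)$ bound is not an artifact of permitting large circuits. (As this is a cited result, an alternative route is to reduce from the classical $\Omega(N)$ randomized comparison lower bound for finding the minimum of $N$ elements, using uniformly random \emph{transitive} tournaments as the hard distribution; the self-contained argument above is cleaner because the Bayesian posterior on $v^\star$ is genuinely uniform rather than merely spread out.)
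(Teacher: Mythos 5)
The paper does not prove \cref{thm: mande}; it is imported verbatim from the cited reference \cite{mande2023randomized}, so there is no internal argument to compare against. Your self-contained proof, however, is correct, and I'll assess it on its own terms.

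Your reduction to a one-out-of-$N$ hidden-source search via Yao's principle is sound. The crucial combinatorial facts all check out: (i) the planted source $v^\star$ (re-orienting every edge at $v^\star$ away from it) makes $v^\star$ a vertex of in-degree $0$, hence the \emph{unique} king of $T_{v^\star}$, since no other vertex can reach a vertex of in-degree $0$ by any path; (ii) $T_{v^\star}$ and $T_0$ differ exactly on the edges that $T_0$ orients into $v^\star$, so a deterministic algorithm running on $T_{v^\star}$ reproduces the $T_0$-transcript verbatim until (if ever) it queries such an edge, which happens iff the fixed $T_0$-path would ``eliminate'' $v^\star$; and (iii) since the prior on $v^\star$ is uniform and the set of vertices eliminated by the $T_0$-path is a fixed set of size at most $q$, the posterior given the $T_0$-transcript is uniform over the $\ge N-q$ survivors. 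One small remark: you can get the slightly cleaner bound $\Pr[\text{success}] \le (q+1)/N$ directly, since $\Pr[v^\star \text{ not eliminated and } \hat v = v^\star] \le \Pr[v^\star = \hat v] = 1/N$; but your $q/N + 1/(N-q)$ also suffices and yields the same $\Omega(N)=\Omega(2^n)$ conclusion. The aside about circuit size is harmless but unnecessary in the black-box query model (the theorem is purely about query complexity, so the encoding of the oracle is irrelevant). The final step invoking Yao's principle, including the comment about passing from worst-case to expected query count by truncation, is standard and fine. In short, the argument is a correct, essentially textbook ``hidden uniform element'' lower bound, cleanly adapted to the tournament/king setting.
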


Finally, as a side note, we show that there exists a faster than trivial algorithm to solve \prob{King} if we are given access to an \cc{NP} oracle.
\begin{theorem}
    \label{thm: king_efficient}
    There exists a $\poly(n) 2^n$ time algorithm using an \cc{NP} oracle algorithm for \prob{King}.
\end{theorem}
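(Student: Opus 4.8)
The plan is to implement the existence argument of \cref{lem:king_totality} as directly as possible, using the $\NP$ oracle only to substitute for the one expensive step: counting out-degrees. Fix the tournament $G$ on $[2^n]$ defined by $C$ (the non-tournament case is handled by a single $\NP$-oracle call, as in the proofs of \cref{thm: king_pls} and \cref{thm: king_tfzpp}). The key observation is that the maximum-out-degree vertex is a king, but we do not actually need to compute out-degrees exactly; it suffices to find \emph{any} vertex whose out-neighborhood is not strictly contained in the out-neighborhood of some other vertex's ``second neighborhood.'' Concretely, I would run the following iterative improvement: maintain a current candidate $u$; use the $\NP$ oracle to ask ``does there exist a vertex $w$ that witnesses $u$ is not a king?'', i.e.\ $\exists w\,\forall j\, [\,C(u,w)=0 \wedge (C(u,j)=0 \vee C(j,w)=0)\,]$ — this is an $\exists\forall$ predicate, hence in $\Sigma_2^P$, but in fact we can make it $\NP$ by the trick below. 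If no such $w$ exists, $u$ is a king and we output it. If such a $w$ exists, we move to $w$ and repeat.

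The main obstacle is bounding the number of iterations: naively this is the argument of \cref{clm: strict_decrease}, which only guarantees the witness set $W_u$ shrinks by \emph{one} each step, giving $2^n$ iterations and each iteration a $\Sigma_2^P$ query — too slow and the wrong oracle. To get $\poly(n)2^n$ total with just an $\NP$ oracle, I would instead \emph{count}. The number of vertices $v$ with $C(u,v)=1$ can be computed bit by bit with $n$ binary searches against an $\NP$ oracle: the predicate ``at least $k$ vertices $v$ satisfy $C(u,v)=1$'' is in $\NP$ (guess $k$ distinct $v$'s and verify). So with $\poly(n)$ oracle calls we learn $\outdeg(u)$ exactly. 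Then the algorithm is: scan all $2^n$ vertices, for each compute its out-degree with $\poly(n)$ $\NP$-queries, keep the one of maximum out-degree; by \cref{lem:king_totality} that vertex is a king, and we can verify it (or simply output it, since the analysis guarantees correctness). This is $2^n \cdot \poly(n)$ total queries and time, and uses only an $\NP$ oracle. I would present this as the main construction.

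Alternatively — and this is probably the cleaner write-up — avoid scanning all $2^n$ vertices by a doubling/halving argument: repeatedly pick the current candidate $u$, compute $W_u$'s size via an $\NP$ oracle (the predicate ``$|W_u| \ge k$'' is $\Sigma_2$, so this variant genuinely needs care), or better, directly find a vertex of strictly larger out-degree than $u$ whenever $u$ is not a king. The cleanest route: ask the $\NP$ oracle for a vertex $v$ with $\outdeg(v) > \outdeg(u)$ (having computed $\outdeg(u)$ as above, this predicate — ``$\exists v$ with at least $\outdeg(u)+1$ out-neighbors'' — is in $\NP$). Each improvement step strictly increases out-degree, so there are at most $2^n$ steps, each costing $\poly(n)$ $\NP$-queries plus $\poly(n)$ work to extract $v$ and recompute its out-degree; when no improvement exists, the current vertex has maximum out-degree and hence is a king by the proof of \cref{lem:king_totality}. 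Total time $\poly(n)2^n$ with an $\NP$ oracle, as claimed. I expect the bookkeeping in showing each oracle query is genuinely in $\NP$ (as opposed to $\Sigma_2^P$) to be the one place requiring attention: the fix throughout is that once a \emph{numerical threshold} like $\outdeg(u)$ is known as a concrete integer, the associated counting/comparison predicate becomes an existential (guess-and-check) statement rather than a quantifier alternation.
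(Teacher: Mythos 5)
Your proposal has a genuine gap at its very core: the claim that the predicate ``at least $k$ vertices $v$ satisfy $C(u,v)=1$'' is in $\NP$. The witness you describe consists of $k$ distinct vertices, each encoded in $n$ bits, so the witness has $kn$ bits. Since $k$ ranges up to $2^n-1$, this witness is exponential in the input size for most values of $k$, so the predicate is \emph{not} in $\NP$ --- it is a threshold counting problem, which is $\PP$-hard in general and not believed to lie anywhere in the polynomial hierarchy. This is precisely the obstacle that \cite{kleinberg2021total} had in mind when they said that a constructive version of Landau's argument ``seems crucial[ly]'' to require counting and hence $\#\P$; both of your variants (scan-all-vertices-and-take-the-max, and iterate-to-strictly-larger-out-degree) hit the same wall, since both need to evaluate or compare exact out-degrees.

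The paper's proof sidesteps counting entirely via \cref{lem: king_choices}, which is a genuine (if weak) downward self-reduction: if $v_{i-1}$ is a weak king of $[1,i-1]$, then one of $v_{i-1}$ or $i$ is a weak king of $[1,i]$, and deciding which one requires only a single $\NP$ query (``does there exist $z$ with $C(v_{i-1},z)=1$ and $C(z,i)=1$?''), whose witness is a single $n$-bit vertex. Iterating this for $i=1,\dots,2^n$ gives the $\poly(n)2^n$ algorithm with an $\NP$ oracle without ever computing an out-degree. So the structural insight you are missing is that one should not try to make Landau's maximum-out-degree argument constructive; one should instead maintain a \emph{local} invariant (weak kingship of a prefix) that can be extended one vertex at a time with a polynomially small witness.
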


    

\znote{modified due to \cref{lem: king_choices}}

\begin{proof}
    The algorithm $\mathcal{A}$ is as follows. $\mathcal{A}$ uses the $\NP$ oracle to first find distinct $x, y \in [2^n]$ such that $C(x, y) = C(y, x)$ (if any) and if it finds them outputs $x, y$ as a type 1 solution to \prob{King}. This can be achieved in polynomial time with access to an $\NP$ oracle.  We therefore assume that $C(x, y) \neq C(y, x)$ for all $x \neq y$ for the remainder of the proof. 
    
    The algorithm runs for $2^n$ steps. At step $i$, we aim to find a weak king $v_i$ for $[1,i]$. Clearly, at step $1$, $v_1 = 1$ is a weak king of $[1,1]$. When we are at step $i$, \cref{lem: king_choices} tells us that a weak king of $[1,i]$ is either $v_{i-1}$ or $i$. Furthermore, it provides an easy check to determine if $v_{i-1}$ or $i$ is a weak king. If there is a length $2$ path in from $v_{i-1}$ to $i$, then $v_i = v_{i-1}$ is a weak king for $[1,i]$, otherwise $v_i = i$ is a weak king for $[1,i]$. We can check this using the $\NP$ oracle which tells us if there exists a $z \in V$ such that $C(v_{i-1}, z) = 1$ and $C(z, i) = 1$. Finally, the algorithm outputs $v_{2^n}$, the weak king of $[1,2^n]$ which is by definition a king.

    Since each step of the algorithm takes at most $\poly(n)$ time with access to the $\NP$ oracle, we also get our desired runtime. Correctness follows from \cref{lem: king_choices}.
\end{proof}

Our \cc{NP} oracle algorithm for \prob{King} compares favorably to lower bounds for finding a king in a tournament without access to an oracle.
\begin{theorem}[\cite{shen2003searching}]
    \label{thm: shen}
    In the black-box query model, any deterministic algorithm for \prob{King} requires at least $\Omega(2^{4n/3})$ time.
\end{theorem}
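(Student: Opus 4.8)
I would prove this entirely in the black-box model the statement lives in, so the only thing the $\prob{King}$-circuit interface contributes is that the algorithm learns the tournament solely through queries ``what is $C(x,y)$?''. At any point the algorithm holds a partial tournament $P$ (the answered edges); since $\prob{King}$ is total (\cref{lem:king_totality}) it cannot abstain, so it must halt with a vertex that is a king in \emph{every} completion of $P$. A circuit-$\prob{King}$ algorithm is a fortiori such a black-box algorithm — the hard tournaments can be taken with poly-size encodings, or one phrases the bound relative to a tournament oracle — so it suffices to show that any such algorithm makes $\Omega(N^{4/3})$ queries, where $N = 2^n$.

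The combinatorial heart is a clean description of when $P$ already forces some vertex to be a king. Writing $N^+_P(S)$ for the set of vertices having an incoming $P$-edge from $S$, I claim $v$ is forced to be a king iff $\{v\}\cup N^+_P(v)\cup N^+_P(N^+_P(v)) = V$. The ``if'' direction is immediate; for ``only if'', if some $w$ is omitted then the adversary can complete $P$ by orienting the pair $\{v,w\}$ (necessarily unqueried, or already $w\to v$) as $w\to v$ and every unqueried pair $\{x,w\}$ with $x\in N^+_P(v)$ as $w\to x$, leaving $v$ with no length-$\le 2$ route to $w$. This is exactly the ``absorb one vertex at a time'' phenomenon behind \cref{lem: king_choices}: each edge query enlarges some vertex's certified $2$-neighbourhood by at most one vertex, and as long as that neighbourhood still misses a (flippable) vertex, the adversary retains a refutation for $v$.

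With this in place the lower bound is an adversary argument: answer each fresh query so as to keep the determined out-degrees as balanced as feasible, while never letting the condition above become satisfied for any vertex. Since $\sum_x\deg^+_P(x)$ equals the number of queries $q$ and since covering $V$ from a single $v$ needs $\deg^+_P(v)+\sum_{x\in N^+_P(v)}\deg^+_P(x)\ge N-1$, a perfectly balanced adversary would make the left-hand side only of order $q/N + (q/N)^2$, which heuristically suggests $q=\Omega(N^{3/2})$. The genuinely delicate part — and the main obstacle — is that the optimal bound is the \emph{smaller} $\Omega(N^{4/3})$, matched by a non-trivial $O(N^{4/3})$ king-finding algorithm; so the naive balancing heuristic is too optimistic, the adversary cannot in fact maintain balance against a clever algorithm, and the tight adversary must instead repeatedly let the algorithm build an almost-complete certificate around some candidate $v$ and then spend only $O(1)$ committed edges to install a witness $w$ refuting it, so that the ``wasted'' out-degree mass accumulated over the candidates the algorithm runs through is what produces the extra factor. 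Carrying out this charging precisely — and likewise obtaining the randomized $\Omega(2^{n})$ bound of \cref{thm: mande} via a Yao-style distributional version of the same obstruction — is exactly the content of \cite{shen2003searching}, which the plan is to invoke (or reproduce along these lines).
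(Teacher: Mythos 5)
The paper offers no proof of this theorem; it is simply imported from \cite{shen2003searching}, and the paragraph right after \cref{thm: shen} spells this out: the cited result is a query lower bound of $\Omega(t^{4/3})$ for finding a king in a tournament on $t$ vertices, and \cref{thm: shen} is the instantiation $t = 2^n$. Your proposal ultimately also defers to that same reference, so the two ``proofs'' coincide in the only way that matters. The extra context you supply is correct: the forcing characterization ($v$ is a forced king iff $\{v\}\cup N^+_P(v)\cup N^+_P(N^+_P(v)) = V$) is right, since if $w$ is missing from that set then every pair $\{v,w\}$ and $\{x,w\}$ with $x\in N^+_P(v)$ is either unqueried or already oriented toward $w$'s side, so the adversary can complete $P$ with $w$ beating $v$ and all of $N^+_P(v)$; and your observation that the balanced-adversary heuristic gives the too-strong $\Omega(N^{3/2})$ and therefore cannot be the actual argument (because Shen, Sheng and Wu also give a matching $O(N^{4/3})$ algorithm) is exactly the right sanity check. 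You are honest that the charging argument needed to get the tight $N^{4/3}$ is the real content of \cite{shen2003searching} and is not reproduced here — which is fine, since the paper doesn't reproduce it either.
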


\cref{thm: shen} and \cref{thm: mande} were originally stated for the problem of finding a king in a tournament on $t$ vertices. They showed a $\Omega(t^{4/3})$ query lower bound for deterministic algorithms and a $\Omega(2^n)$ query lower bound for randomized algorithms respectively. Setting $t = 2^n$ yields \cref{thm: shen} and \cref{thm: mande}. We find it somewhat interesting that access to a \cc{NP} provides a provable speedup for deterministic algorithms (in the query model) from $\Omega(2^{4n/3})$ to $\poly(n) 2^n$ and that access to a $\cc{\Sigma_2^P}$ oracle provides a provable speedup for randomized algorithms (in the query model) from $\Omega(2^n)$ to $\poly(n)$.

\section{Downward self-reducibility in \cc{TFNP}}\label{sec: PLS_TFNP}

We now demonstrate the utility of the $\mu$-d.s.r framework by applying it to several \cc{TFNP} problems. For \prob{P-LCP} and \prob{Memdet}, we are able to dramatically simplify proofs of \cc{UEOPL} and \cc{PLS} membership respectively. Interestingly, unlike most proofs of membership in a \cc{TFNP} subclass, our proofs of \cc{PLS}/\cc{UEOPL} membership do not mirror (or even resemble) the proof of totality for the base problem. For the \prob{Tarski} problem, we give a proof of \cc{PLS} membership which mirrors the divide-and-conquer algorithm for finding Tarski fixed points. Although this proof of \cc{PLS} membership is not syntactically simpler than the original \cite{etessami2020tarski}, we view it as conceptually simpler since it only requires knowledge of the well-known divide-and-conquer algorithm for \prob{Tarski}. Finally, we show \cc{UEOPL} membership for the \prob{S-Arrival} problem. This proof is admittedly significantly more complicated than the original proof \cite{gartner2018arrival}\footnote{technically this work only showed \cc{CLS} membership, but it was observed in \cite{fearnley2020unique} that the proof also implies \cc{UEOPL} membership}. However, we have chosen to include it because we hope it will provide an alternative perspective which may be useful in future analysis of the \prob{S-Arrival} problem.

\subsection{P-LCP}
\label{subsubsec: p-lcp}

We begin by defining a P-matrix, the linear complementarity problem, and the \prob{Promise-P-LCP} problem.
\begin{definition}
    A matrix $M \in \mathbb{R}^{n \times n}$ is a P-matrix if every principal minor is positive.
\end{definition}

\begin{definition}
    For any $M \in \mathbb{R}^{n \times n}$ and $q \in \mathbb{R}^{n}$, we say $z \in \mathbb{R}^{n}$ is a solution to the linear complementarity problem (LCP) if all the following conditions hold.
    \begin{enumerate}
        \item $z \geq 0$,
        \item $y = q+Mz \geq 0$,
        \item $z^T y = 0$.
    \end{enumerate}
\end{definition}


\begin{mdframed}
    \textbf{Problem: \prob{Promise-P-LCP}} 
    \label{def: promise-plcp}
\paragraph{Input:} A P-matrix $M \in \mathbb{R}^{n \times n}$ and $q \in \mathbb{R}^{n}$. 
\paragraph{Output:} A solution $z$ to the linear complementarity problem with inputs $M$ and $q$.
\end{mdframed}

The following lemma will be useful in showing a $\mu$-d.s.r for \prob{Promise-P-LCP} and immediately implies uniqueness of solutions for \prob{Promise-P-LCP}.
\begin{lemma}[\cite{samelson1958partition}]
    \label{lem: P-matrix}
     $M \in \mathbb{R}^{n \times n}$ is a P-matrix if and only if for any $q \in \mathbb{R}^n$, LCP with input $M,q$ has exactly one solution.
\end{lemma}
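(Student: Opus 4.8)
The plan is to prove the two implications separately, leaning on the Fiedler--Pt\'ak characterization of P-matrices: $M$ is a P-matrix iff it has no nonzero \emph{sign-reversing} vector, i.e.\ no $x \neq 0$ with $x_i (Mx)_i \le 0$ for all $i$. (This is classical; for a self-contained treatment, the direction we actually need---that a matrix with a non-positive principal minor has a sign-reversing vector---can be shown by induction on the size of a smallest such minor.)

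\textbf{P-matrix $\Rightarrow$ unique solution for every $q$.} Uniqueness is immediate: if $z^1, z^2$ both solve the LCP $(M,q)$, with $y^t = q + Mz^t$, set $u = z^1 - z^2$, so $Mu = y^1 - y^2$. Checking coordinates: if $z^1_i, z^2_i > 0$ then $y^1_i = y^2_i = 0$ so $(Mu)_i = 0$; if $z^1_i = z^2_i = 0$ then $u_i = 0$; and in the mixed cases the nonzero coordinate of $u$ has the opposite sign to the corresponding coordinate of $Mu$ (from complementarity and nonnegativity). Hence $u_i(Mu)_i \le 0$ for all $i$, so $u$ is sign-reversing and therefore $u = 0$. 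For existence I would invoke Lemke's algorithm on the augmented system $y = q + \mathbf{1}\,z_0 + Mz$ with lexicographic anti-cycling: it terminates finitely, and either ends with $z_0 = 0$ (an LCP solution) or on a secondary ray---but a secondary ray yields a nonzero sign-reversing vector for $M$, impossible for a P-matrix. Alternatively, one can argue that the piecewise-linear map $H(z) = \min(z,\ q+Mz)$, whose zeros are exactly the LCP solutions, has Jacobian determinant $\det M_{\bar\alpha\bar\alpha} > 0$ on each of its $2^n$ linear pieces (indexed by $\alpha \subseteq [n]$) and is proper; a degree argument then makes $H$ a bijection, giving both existence and uniqueness at once.

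\textbf{Unique solution for every $q$ $\Rightarrow$ P-matrix.} I would argue the contrapositive. If $M$ is not a P-matrix, Fiedler--Pt\'ak supplies a nonzero sign-reversing vector $x$; after restricting to the principal submatrix on the support of $x$ (and later padding the remaining coordinates of $q$ with large positive entries, which keeps those LCP coordinates inactive in both solutions) we may assume every $x_i \neq 0$. Put $w = Mx$, $z^{(0)} = x^- = \max(-x,0)$, and define $y^{(0)}_i = -w_i$ when $x_i > 0$ and $y^{(0)}_i = 0$ when $x_i < 0$; the sign-reversing property gives $y^{(0)} \ge 0$, and $z^{(0)}, y^{(0)}$ are complementary. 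Setting $q := y^{(0)} - M z^{(0)}$ makes $z^{(0)}$ a solution of $(M,q)$, and a direct check shows $z^{(1)} := x^+ = z^{(0)} + x$ is a second, distinct solution (with $y^{(1)} = y^{(0)} + w \ge 0$). So $(M,q)$ has at least two solutions, contradicting the hypothesis.

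\textbf{Where the work is.} Uniqueness and the converse are both short once Fiedler--Pt\'ak is in hand; the real content is the \emph{existence} half of the forward direction, which forces one either to import the (nontrivial) termination analysis of Lemke's algorithm or to set up enough piecewise-linear degree theory for the coherent-orientation argument, and one also has to prove or cite Fiedler--Pt\'ak itself. For this paper only a thin slice of the lemma is needed downstream---the ``only if'' direction on a single instance, namely that a P-matrix input has a \emph{unique} LCP solution---which is exactly what makes \prob{Promise-P-LCP} a unique-solution problem and hence a target for the $\cc{UEOPL}$ membership argument via the $\mu$-d.s.r.\ framework.
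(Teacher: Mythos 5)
The paper does not prove this lemma; it cites Samelson--Thrall--Wesler \cite{samelson1958partition} and uses it as a black box, so there is no internal proof to compare against. Your outline is a correct sketch of the classical argument via the Fiedler--Pt\'ak sign-reversing characterization. The uniqueness half (the difference $u = z^1 - z^2$ of two LCP solutions satisfies $u_i(Mu)_i \le 0$ for every $i$, hence $u = 0$ for a P-matrix) checks out case by case, and the converse construction (taking $z^{(0)} = x^-$ and $z^{(1)} = x^+$ as two distinct complementary solutions for $q := y^{(0)} - Mx^-$, after padding the coordinates off the support of $x$ with large entries of $q$) is also sound. You rightly flag where the real work lies: existence, which needs Lemke termination or the piecewise-linear degree/coherent-orientation argument, and the Fiedler--Pt\'ak characterization itself, neither of which is free. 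Your closing observation is accurate and worth emphasizing: the paper only ever invokes the forward direction on a fixed P-matrix instance --- existence and uniqueness of the LCP solution --- inside \cref{lem: p-lcp-dsr} and \cref{cor:plcp_ueopl}, so the full two-sided equivalence is stronger than what the downstream $\cc{UEOPL}$ argument actually requires.
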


\begin{lemma}
    \label{lem: p-lcp-dsr}
    \prob{Promise-P-LCP} $\mu$-d.s.r for polynomially bounded $\mu$.
\end{lemma}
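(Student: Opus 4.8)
The plan is to take $\mu(M,q) = n$, the dimension of the instance, which is trivially at most $|(M,q)|$ and hence polynomially bounded; since a solution $z$ to \prob{Promise-P-LCP} is checkable in deterministic polynomial time (plug into the linear inequalities and test $z^{T}y=0$), the problem lies in $\cc{PromiseF\Sigma_1^P}$ and the d.s.r.\ needs no extra oracle beyond the \prob{Promise-P-LCP} oracle itself. The downward self-reduction will fix the complementary status of the last pair $(z_n,y_n)$ and recurse on two dimension-$(n-1)$ instances: one obtained by deleting index $n$ (which forces $z_n=0$) and one obtained by pivoting on index $n$ (which forces $y_n=0$). Concretely, on input $(M,q)$ with $n\ge 1$ (when $\mu=0$, i.e.\ $n=0$, the reduction outputs the empty vector directly), let $M' = M[\{1,\dots,n-1\}]$ be the principal submatrix and $q'=(q_1,\dots,q_{n-1})$; let $\widetilde M_{ij} = M_{ij} - M_{in}M_{nj}/M_{nn}$ and $\widetilde q_i = q_i - M_{in}q_n/M_{nn}$ for $1\le i,j\le n-1$ (well defined since $M_{nn}>0$, being a $1\times 1$ principal minor of a P-matrix). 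Query the oracle on $(M',q')$ to obtain $z'$ and on $(\widetilde M,\widetilde q)$ to obtain $\widetilde z$. If $(z',0)$ is a valid LCP solution for $(M,q)$ --- checkable in polynomial time --- output it; otherwise output $(\widetilde z, z_n)$ with $z_n := -(q_n + \sum_{j<n}M_{nj}\widetilde z_j)/M_{nn}$.

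To see the reduction is legal and promise-preserving I would invoke the standard facts that every principal submatrix of a P-matrix is a P-matrix and every Schur complement of a P-matrix is a P-matrix, so both queried instances satisfy the promise, and $\mu$ drops by exactly one at each query. Correctness rests on \cref{lem: P-matrix}: the unique solution $z^{*}$ of $\mathrm{LCP}(M,q)$ has $z_n^{*}=0$ or $y_n^{*}=0$. If $z_n^{*}=0$, then $(z_1^{*},\dots,z_{n-1}^{*})$ solves $\mathrm{LCP}(M',q')$, hence equals $z'$ by uniqueness, and one computes $q_n + \sum_{j<n}M_{nj}z'_j = y_n^{*}\ge 0$, so $(z',0)=z^{*}$ is output. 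If instead $z_n^{*}>0$ (so $y_n^{*}=0$), then $(z',0)$ cannot pass the validity check --- it would be a solution distinct from $z^{*}$, contradicting uniqueness --- so the second branch is taken; substituting $z_n^{*} = -(q_n+\sum_{j<n}M_{nj}z_j^{*})/M_{nn}$ into rows $1,\dots,n-1$ shows $(z_1^{*},\dots,z_{n-1}^{*})$ solves $\mathrm{LCP}(\widetilde M,\widetilde q)$, hence equals $\widetilde z$, and the reconstructed $z_n$ equals $z_n^{*}$, giving $(\widetilde z, z_n)=z^{*}$.

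The step I expect to be the main obstacle is controlling the bit-complexity of the pivoted instances $(\widetilde M,\widetilde q)$: a single Schur-complement step can multiply the bit-length of the entries by a constant factor, which would give an exponential blowup over the $n$ levels of recursion and break the size bound in \cref{def: mu-dsr}. The fix is to keep every intermediate instance in a canonical form. One checks, using the quotient property of Schur complements and the fact that a principal submatrix of a Schur complement is again a Schur complement of a principal submatrix, that every matrix arising in the recursion equals $M[T]/S$ for some $S\subseteq T\subseteq[n]$ (Schur complement of the principal submatrix on $T$ with respect to the sub-block on $S$). By Sylvester's determinant identity, its $(i,j)$ entry is the ratio $\det M[S\cup\{i\},S\cup\{j\}]/\det M[S]$ of two minors of the \emph{original} $M$ (and similarly for $\widetilde q$), so recomputing entries from $M$ rather than from the parent instance and applying Hadamard's bound keeps every queried instance of size $\poly(|(M,q)|)$ with no growth down the recursion. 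With the bit-complexity controlled, \cref{thm:mu-dsr} together with uniqueness of solutions (\cref{lem: P-matrix}) then immediately yields that \prob{Promise-P-LCP} reduces to $\cc{UEOPL}$, recovering \cref{cor:plcp_ueopl}.
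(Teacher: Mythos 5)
Your proof is correct, but it takes a genuinely different route from the paper's. Both define $\mu(M,q)=n$ and both rest on \cref{lem: P-matrix}, but the decompositions are different. The paper queries all $n$ principal submatrices $(M_i, q_i)$, tries inserting a $0$ at position $i$ in each returned solution, and separately handles the base case where $z^*$ has no zero coordinate via the explicit check $-M^{-1}q \geq 0$. You instead fix the complementary status of only the last index: one query removes row/column $n$ (anticipating $z^*_n=0$), and the other pivots on the $(n,n)$ entry via a Schur complement (anticipating $y^*_n=0$), so you make $2$ oracle calls rather than $n$. Your approach mirrors the principal pivoting method for LCP, whereas the paper's is a purely combinatorial case split on the support of $z^*$. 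The trade-off is that the pivot branch forces you to confront the bit-length growth of Schur complements, which you correctly flag as the delicate point: without care, the entry sizes could grow multiplicatively down the recursion, violating the $|y|\leq |x|+\poly(\mu(x))$ clause of \cref{def: mu-dsr}. Your fix --- recomputing every intermediate matrix as ratios $\det M[S\cup\{i\},S\cup\{j\}]/\det M[S]$ of minors of the \emph{original} $M$ and bounding these by Hadamard --- is sound. The paper avoids this issue entirely because its only operation is deleting a row and column, which never increases bit-length, at the cost of $n$ queries instead of $2$. Both approaches correctly use that principal submatrices (and, in your case, Schur complements) of a P-matrix are again P-matrices, so the reduction is promise-preserving, and both correctly appeal to \cref{lem: P-matrix} for uniqueness.
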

\begin{proof}
    For an input matrix $M \in \mathbb{R}^{n \times n}$ and $q \in \mathbb{R}^n$, we define the measure $\mu(M) = n$. We now show a downward self-reduction for \prob{Promise-P-LCP}. If $-M^{-1}q \geq 0$, output $-M^{-1}q$. Otherwise, let $M_i \in \mathbb{R}^{(n-1) \times (n-1)}$ denote $M$ with row $i$ and column $i$ removed. Similarly, let $q_i \in \mathbb{R}^{n-1}$ denote $q$ with entry $i$ removed. The reduction queries its oracle on $(M_1, q_1), \dots, (M_n, q_n)$ to get back the answers $z_1, \dots, z_n \in \mathbb{R}^{n-1}$ respectively. We find $i$ such that $z_i$ is a solution to $\prob{LCP}(M, q)$ after inserting a $0$ in the $i$-th coordinate, and output the solution.

    This reduction clearly runs in $\poly(n)$ time. It is also a downward reduction since $\mu(M_j) = n-1$ for all $j \in [1,n]$. We see that it is also promise-preserving since if $M$ is a P-matrix, then $M_j$ is also a P-matrix for all $j \in [1, n]$.

    If the reduction outputs $z = M^{-1}q$, then clearly $z \geq 0$, $y = q + Mz = 0$, and $z^T y = 0$. Otherwise, consider $z^*$, the solution to $\prob{LCP}$ on input $M, q$ (and note that $z^*$ is unique by \cref{lem: P-matrix}). If $z^*$ contains no zero coordinates, then $y = (0, \dots, 0)$ since $z^{*T}y = 0$. But this implies $z^* = -M^{-1}q$, which cannot be true since that case has already been checked. So let $i$ be any coordinate of $z^*$ that is zero and consider $z^*_{-i} \in \mathbb{R}^{n-1}$, $z^*$ with the $i^{\text{th}}$ coordinate removed. $z^*_{-i}$ must be a solution to the \prob{LCP} on input $M_i, q_i$ since $z^*$ was a solution to \prob{LCP} on input $M, q$. Notice that by \cref{lem: P-matrix}, $z^*_{-i}$ is the only solution to $(M_i, q_i)$. Therefore, by the correctness of the d.s.r, $z_i = z^*_{-i}$, and the reduction outputs $z^*$ as desired.
\end{proof}

\begin{corollary}\label{cor:plcp_ueopl}
    \prob{Promise-P-LCP} reduces to \cc{UEPOL}.
\end{corollary}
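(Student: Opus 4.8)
\textbf{Proof proposal for Corollary~\ref{cor:plcp_ueopl}.} The plan is to simply feed \prob{Promise-P-LCP} into \cref{thm:mu-dsr} with $i = 1$. The three hypotheses of that theorem need to be checked. First, I would observe that $\prob{Promise-P-LCP} \in \cc{PromiseF\Sigma_1^P}$ (i.e.\ $\cc{PromiseFNP}$): given an instance $(M, q)$ meeting the promise and a candidate $z$, one verifies in deterministic polynomial time that $z \geq 0$, that $y := q + Mz \geq 0$, and that $z^T y = 0$, which are exactly the defining conditions of an LCP solution; and any such solution has polynomial bit-length since it is obtained by zeroing out a subset of the coordinates and solving the resulting square linear subsystem (working throughout over rationals of polynomially bounded bit-complexity). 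So the trivial verifier $V$ with no quantifier alternations witnesses membership in $\cc{PromiseF\Sigma_1^P}$.

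Second, $\prob{Promise-P-LCP}$ has unique solutions in the sense of \cref{def: mu-dsr}: by \cref{lem: P-matrix}, whenever $M$ is a P-matrix---which is precisely the promise---the LCP on input $(M, q)$ has exactly one solution, so for every $(M,q) \in \pi_1(\prob{Promise-P-LCP})$ there is a unique $z$ with $((M,q), z) \in \prob{Promise-P-LCP}$. Third, \cref{lem: p-lcp-dsr} gives that $\prob{Promise-P-LCP}$ is $\mu$-d.s.r for the polynomially bounded measure $\mu(M) = n$; note in particular that the reduction there is promise-preserving, since deleting the $i$-th row and column of a P-matrix yields a P-matrix, and that it bottoms out correctly at $\mu = 0$ (the empty instance, whose only ``solution'' is the empty vector).

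With these three facts in hand, \cref{thm:mu-dsr} applied with $i = 1$ produces a reduction from $\prob{Promise-P-LCP}$ to $\cc{UEOPL}^{\cc{\Sigma_{0}^P}}$. Since $\cc{\Sigma_0^P} = \cc{P}$, the oracle is useless and can be absorbed into the polynomial-time machine computing the reduction, so in fact we obtain a reduction to $\cc{UEOPL}$ itself, as claimed. (Strictly, this is the statement of \cref{cor: mu-d.s.r} specialized to $i=1$ together with the observation that $\cc{UEOPL}^{\cc{P}} = \cc{UEOPL}$.)

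\textbf{Where the difficulty lies.} There is essentially no residual difficulty: all the real content is in \cref{lem: p-lcp-dsr} (the divide-by-a-coordinate recursion and its promise preservation) and in \cref{thm:mu-dsr} (the stack-trace-to-\prob{Sink-of-Verifiable-Line} construction). The only subtleties worth a sentence in the write-up are (a) confirming that solutions are genuinely polynomial-size and poly-time checkable over the chosen representation of rationals, so that the $\cc{PromiseF\Sigma_1^P}$ membership and the $\cc{UEOPL}$ reduction are honestly polynomial time, and (b) that uniqueness---not merely ``essential'' uniqueness---holds, which \cref{lem: P-matrix} delivers outright, so we may use the unique-solutions clause of \cref{thm:mu-dsr} rather than the machinery of \cref{sec:mu_dsr} on essentially unique solutions.
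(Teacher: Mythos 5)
Your proposal is correct and follows essentially the same route as the paper: establish membership in $\cc{PromiseF\Sigma_1^P}$, invoke \cref{lem: P-matrix} for uniqueness, invoke \cref{lem: p-lcp-dsr} for $\mu$-d.s.r with polynomially bounded $\mu$, and then apply \cref{thm:mu-dsr} (specialized to $i=1$, where the $\cc{\Sigma_0^P}$ oracle is trivial). The extra remarks you add---on bit-complexity of rational solutions, promise preservation under principal-submatrix deletion, and outright uniqueness versus essential uniqueness---are all accurate and fill in details the paper leaves implicit.
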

\begin{proof}
    \prob{Promise-P-LCP} is clearly a \cc{Promise\Sigma_1^P} problem. \cref{lem: P-matrix} shows that it has unique solutions and \cref{lem: p-lcp-dsr} shows that it is $\mu$-d.s.r for polynomially bounded $\mu$. We can therefore apply \cref{thm:mu-dsr} to show that \prob{Promise-P-LCP} reduces to \cc{UEPOL}.
\end{proof}

\subsection{Graph Games}
\label{subsubsec: graph-games}

We begin by defining graph games \cite{beckmann2008complexity}.
\begin{definition}[Graph Games, \cite{beckmann2008complexity}]
\label{def: graph_games}
$G = (V_0, V_1, E)$ \text{ is a graph game of size } n \text{ if}
\begin{enumerate}
    \item $V_i$ are the positions of player $P_i$, for $i = 0, 1$. They have to satisfy $V_0 \cap V_1 = \emptyset$, and $V_0 \cup V_1 \subseteq \{1, \ldots, n\}$. $V := V_0 \cup V_1$ is the set of all positions.
    \item $E \subseteq V \times V$ is the set of possible moves.
    \item In graph-theoretic terms, $V$ is the set of nodes, and $E$ the set of edges of graph $G$. They have to satisfy in addition that at least one edge is leaving each node.
\end{enumerate}
If $v \in V_i$, we say that player $P_i$ owns $v$.
\end{definition}

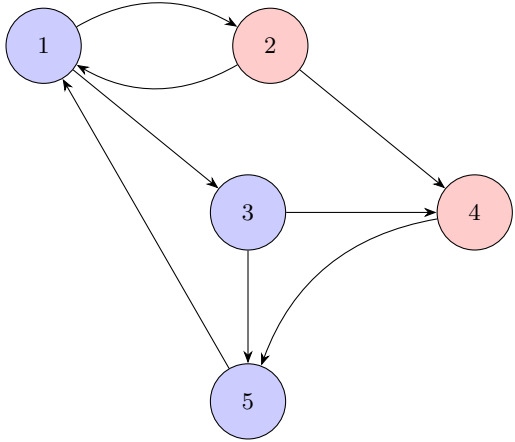
\begin{figure}
    \centering
\usetikzlibrary{arrows.meta, positioning, calc}
\begin{tikzpicture}[
    node distance = 1.5cm and 2cm,
    game node/.style = {circle, draw, minimum size=1cm, font=\footnotesize},
    even node/.style = {game node, fill=blue!20},
    odd node/.style = {game node, fill=red!20},
    >=Stealth
]

\node[even node] (1) {1};
\node[odd node] (2) [right=of 1] {2};
\node[even node] (3) [below right=of 1] {3};
\node[odd node] (4) [below right=of 2] {4};
\node[even node] (5) [below=of 3] {5};

\draw[->] (1) to[bend left] (2);
\draw[->] (2) to[bend left] (1);
\draw[->] (1) -- (3);
\draw[->] (2) -- (4);
\draw[->] (3) -- (4);
\draw[->] (3) -- (5);
\draw[->] (4) to[bend right] (5);
\draw[->] (5) -- (1);

\end{tikzpicture}
    \caption{An example graph game}
    \label{fig: graph-game}
\end{figure}

As an example, in \cref{fig: graph-game}, $V_0 = \{ 1, 3, 5\}, V_1 = \{ 2, 4\},$ and $ E = \{ (1,2), (2,1), (1,3), (5, 1), (2,4), (3,4), (3, 5), (4, 5)\}$.

\begin{definition}[Playing and Winning, \cite{beckmann2008complexity}]
\label{def: graph_game_play}
A play from a node $v \in V$ is an infinite path $v = v_1 \rightarrow v_2 \rightarrow v_3 \rightarrow \ldots$ in $G$ with each edge $v_i \rightarrow v_{i+1} \in E$ chosen by the player owning $v_i$. The winner of a play is the player owning the least node which is visited infinitely often in the play.
\end{definition}

As an example, in \cref{fig: graph-game}, a play may be $1 \rightarrow 2 \rightarrow 1 \rightarrow 3 \rightarrow 4 \rightarrow 5 \rightarrow \cdots$. In this example, the player 1 (the blue player), made moves $1, 3, 4$ and player 2 (the red player), made moves $2$ and $5$.

\begin{definition}
    Let $G = (V_0, V_1, E)$ and $V = V_0 \cup V_1$. We say $G$ has a memoryless deterministic strategy for player $i$ if there exists a function $\sigma: V_i \rightarrow V$ such that for all $v \in V_i$, $(v, \sigma(v)) \in E$, and $\sigma$ is a dominant strategy for player $i$. In particular, even if the other player uses a non-memoryless strategy, player $i$ still wins by playing $\sigma$.
\end{definition}

\begin{lemma}[\cite{beckmann2008complexity}]
    \label{lem: graph_games_NP}
    For every simple graph game $G$, a memoryless deterministic strategy exists, and can be verified in polynomial time.
\end{lemma}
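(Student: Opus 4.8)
\textbf{The plan} is to prove the two assertions of the lemma separately: first that a memoryless deterministic winning strategy always exists (totality), and then that a candidate such strategy can be verified in polynomial time. The key observation for the first part is that a simple graph game is a parity game in disguise: assign to node $i$ the priority $i$, so that by \cref{def: graph_game_play} the winner of a play is exactly the owner of the least priority occurring infinitely often. Hence I would derive existence from the positional (memoryless) determinacy of parity games, in the form: $V$ partitions as $V = W_0 \sqcup W_1$ and there are memoryless strategies $\sigma_0 \colon V_0 \cap W_0 \to W_0$, $\sigma_1 \colon V_1 \cap W_1 \to W_1$ such that, for each $i$, every play starting in $W_i$ in which $P_i$ follows $\sigma_i$ is won by $P_i$, regardless of the opponent (so that $W_i$ is in particular a trap the opponent cannot leave). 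This immediately yields a memoryless dominant strategy for the appropriate player, and in particular for the player who wins from any designated starting position.

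For the determinacy claim I would induct on $|V|$. In the inductive step let $m = \min V$ and let $p$ be its owner. Let $A \subseteq V$ be the $p$-attractor of $\{m\}$ --- the set of nodes from which $P_p$ can force a visit to $m$ --- together with the canonical memoryless attractor strategy on $A$, and recurse on the strictly smaller subgame $G' = G \setminus A$. If $P_{\bar p}$ wins nowhere in $G'$, then $P_p$ wins all of $G$: from $A$ it forces a visit to $m$ and then replays, so either $m$ is visited infinitely often ($P_p$ wins, $m$ being minimal) or the play stays in $G'$ from some point on, where $P_p$'s subgame strategy wins. Otherwise, let $B$ be the $\bar p$-attractor of $W_{\bar p}(G')$ in $G$; since $B \neq \emptyset$ the subgame $G'' = G \setminus B$ is strictly smaller, and one shows $W_{\bar p}(G) = B \cup W_{\bar p}(G'')$ and $W_p(G) = W_p(G'')$, using the standard fact that a memoryless winning strategy in a subgame obtained by deleting an attractor of the other player stays winning in the full game (the complement of the attractor is a trap the opponent cannot leave). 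This is the Zielonka/McNaughton argument, and it produces exactly the memoryless strategies claimed.

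For verification, a certificate consists of the partition $V = W_0 \sqcup W_1$ and the two strategies $\sigma_0, \sigma_1$; legality of the edges used by $\sigma_i$ and of the partition is checked directly. To certify that $\sigma_i$ is winning throughout $W_i$, fix $\sigma_i$: within $W_i$ this collapses the arena to a one-player game in which only $P_{\bar i}$ still has choices. In a one-player parity arena $P_{\bar i}$ can win from a node $v$ if and only if $v$ can reach a cycle whose least node is owned by $P_{\bar i}$, which is decidable in polynomial time, e.g.\ by computing the strongly connected components of the subgraph reachable from $v$ and checking, for each nontrivial one, whether its least node belongs to $P_{\bar i}$. The verifier accepts iff for both $i$ no node of $W_i$ is winning for $P_{\bar i}$ under $\sigma_i$; by the totality argument such a certificate always exists, so the associated search problem is total.

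The routine parts here are the one-player cycle analysis and the attractor bookkeeping. The technical heart --- and the main obstacle --- is the inductive step of positional determinacy: one must verify that deleting an attractor leaves an honest subgame, that memoryless winning strategies pull back from such subgames to the full game (because the opponent is confined to the complement of its attractor), and that an attractor strategy composes with a subgame strategy without introducing any memory. Getting the case analysis exactly right --- especially the case in which $P_p$ wins all of $G'$ and must force either infinitely many returns to $m$ or eventual permanent residence in $G'$ --- is where essentially all the effort goes; the verification half is comparatively mechanical.
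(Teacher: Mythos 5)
The paper does not prove this lemma --- it is stated as a citation to \cite{beckmann2008complexity}, which the authors elsewhere describe as proving \cc{PLS}-membership of graph games ``via bounded arithmetic.'' Your proof is a from-scratch, elementary argument, so there is nothing in the paper to compare it against directly; the relevant question is whether it is correct.

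The existence half is fine. Observing that a simple graph game is a min-parity game with priority $i$ on node $i$, and that the condition in \cref{def: graph_game_play} is exactly ``least priority seen infinitely often determines the winner by ownership,'' is the right translation, and your inductive Zielonka/McNaughton argument (attractor of the minimum, recurse on the complement, then on the complement of the $\bar p$-attractor of $W_{\bar p}(G')$) is the standard positional-determinacy proof and goes through here. The fact that ownership rather than parity decides priorities does not affect the argument, since the only property used is that the minimum node $m$ has a definite owner.

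The verification half has a genuine, if small, bug in the suggested implementation. You correctly reduce verification to a one-player min-parity reachability question and correctly state the criterion ``$P_{\bar i}$ wins from $v$ iff $v$ can reach a cycle whose least node is owned by $P_{\bar i}$.'' But the proposed test --- compute the SCCs of the subgraph reachable from $v$ and check whether any nontrivial SCC has its least node owned by $P_{\bar i}$ --- is not sound for a verifier. A nontrivial SCC whose overall minimum is owned by $P_i$ can still contain a sub-cycle that avoids that minimum and whose own minimum is owned by $P_{\bar i}$; in that case $P_{\bar i}$ does win, but the one-shot SCC test reports otherwise, so the verifier would wrongly accept a bad strategy. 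The standard fix is to iterate over candidate minima: for each node $m$ owned by $P_{\bar i}$, check that $m$ is reachable from $v$ and that $m$ lies on a cycle inside the subgraph induced by nodes $\geq m$ (equivalently, $m$ is in a nontrivial SCC of that restricted graph). This is still polynomial time and makes the criterion exactly the one you stated. With that correction, the proof is complete.
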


We are now ready to define the \cc{TFNP} problem corresponding to simple graph games, and thus to parity games.

\begin{mdframed}
    \textbf{Problem: \prob{Memdet}} 
    \label{def: memdet}

\paragraph{Input:} A canonical representation of a graph game $G = (V_0, V_1, E)$
\paragraph{Output:} A winner $i \in \{0, 1\}$ and a memoryless deterministic strategy $\sigma: V_i \rightarrow V$ for player $i$.
\end{mdframed}

\cref{lem: graph_games_NP} shows that \prob{Memdet} is in \cc{TFNP}. Now, we will use downward self-reducibility to show that \prob{Memdet} is in \cc{PLS}. 

\begin{theorem}
    \prob{Memdet} is $\mu$-downward self-reducible for a polynomially bounded $\mu$. 
\end{theorem}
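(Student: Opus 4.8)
The plan is to define the measure $\mu(G) = |V|$, the number of positions in the graph game, which is clearly polynomially bounded by $|G|$. I would then give a downward self-reduction that, on an instance $G = (V_0, V_1, E)$ on vertex set $V$, picks the smallest vertex $m \in V$ (the one whose owner wins plays that visit it infinitely often), and recurses on games obtained by deleting $m$ and rerouting. The natural recursive idea for parity/graph games is: the least vertex $m$ is ``attractive'' in the following sense — either its owner $P_i$ can force the play to return to $m$ infinitely often from every vertex in some region (in which case $P_i$ wins from that whole region and we can fix $\sigma(m)$ to be any legal successor), or $P_{1-i}$ can force the play to permanently avoid $m$ from the complementary region, in which case the game restricted to $V \setminus \{m\}$ (after adding ``escape'' moves into that complementary region where needed) is an equivalent smaller game. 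So the reduction computes (using polynomial-time computation, and no oracle is needed since $i=1$) an attractor-style partition of $V$ with respect to $m$, makes one or more oracle queries to \prob{Memdet} on games of size $\le |V|-1$, receives memoryless strategies and a winner label, and stitches these together into a memoryless strategy on $G$.

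Concretely, the key steps in order would be: (1) fix $\mu(G) = |V|$ and note $\mu(G) \le |G|$; (2) define $m = \min V$ and its owner $P_i$; (3) compute, in polynomial time, the set $A$ of positions from which $P_i$ can force reaching $m$ in one or more steps (the ``$P_i$-attractor of $m$'') together with the corresponding local strategy on $A \cap V_i$ — this is a standard least-fixed-point computation; (4) form the subgame $G' = G[V \setminus A]$, checking that every node still has an outgoing edge (nodes of $P_i$ outside $A$ that pointed only into $A$ would contradict $A$ being an attractor for $P_i$, and nodes of $P_{1-i}$ pointing into $A$ can simply be kept inside $V \setminus A$ by... wait, one must argue they have a move staying outside $A$: indeed a $P_{1-i}$-node not in $A$ by definition has some successor not in $A$, so the subgame is well-defined); (5) recursively solve $G'$, which is a \prob{Memdet} instance with $\mu(G') \le |V| - 1$ (assuming $A \ne \emptyset$, which holds since $m \in A$); (6) combine: if $P_i$ is recursively the winner of $G'$, then $P_i$ wins $G$ from all of $V$ by playing the attractor strategy on $A$ and the recursive strategy on $V\setminus A$, and symmetrically if $P_{1-i}$ wins $G'$ then $P_{1-i}$ wins $G$ from $V \setminus A$ and we recurse once more on $G[A]$ to resolve the remaining region. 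In the latter case we may need a second oracle query on $G[A]$, which again has $\mu \le |V|-1$; finitely many ($\le |V|$) such queries suffice, all of strictly smaller $\mu$, and the combined strategy is memoryless and verifiable in polynomial time by \cref{lem: graph_games_NP}. Finally invoke \cref{thm:mu-dsr} (with $i=1$) to conclude \prob{Memdet} $\in \cc{PLS}$.

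The main obstacle I expect is step (6): correctly arguing that the memoryless strategies returned on the two (or more) subgames glue into a single \emph{memoryless} dominant strategy on $G$ — in particular, that a play following the combined strategy and visiting $m$ only finitely often eventually stays in $V \setminus A$ (so the recursive guarantee applies), and that a play visiting $m$ infinitely often is won by $P_i$ because $m$ is the global minimum. Handling the well-definedness of the subgame $G'$ (every node retains an outgoing edge) and ensuring the number of recursive calls stays polynomial (so that totality of the stack-trace \prob{Sink-of-DAG} instance in \cref{thm:mu-dsr} goes through) are the remaining technical points, but these follow the standard attractor-decomposition argument for parity games and should be routine once set up carefully.
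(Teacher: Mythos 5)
Your proposal takes a genuinely different route from the paper. The paper does not use an attractor/Zielonka-style decomposition at all; it uses the measure $\mu(G) = |E| - |V_0| - |V_1|$ and removes a single \emph{edge} at a time. The base case is $\mu(G)=0$, where each vertex has exactly one outgoing edge and the game is deterministic. For the recursive step, the reduction queries the oracle on every $G_i$ obtained by deleting a single edge $i$ (keeping only those $G_i$ that remain valid graph games), and then simply checks which returned strategy $(p, \sigma_i)$ is also a valid memoryless winning strategy for the original $G$. Correctness is a one-paragraph observation: if $\sigma^*$ is a winning strategy for player $p$ in $G$ and $p$ is not forced, delete an unused edge of $p$; then $\sigma^*$ still wins in the smaller game, and since the opponent's options are unchanged, any winning strategy the oracle returns for that smaller game is also winning in $G$. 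This is shorter and more elementary than your attractor decomposition: no attractor computation, no non-trivial gluing argument, and no need to reason about plays staying inside or leaving a region.

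Beyond being more complicated, your route has a concrete gap that you should address before it can be considered a proof: the \prob{Memdet} problem as defined returns a \emph{single} winner and a strategy for that winner, whereas Zielonka's recursion needs the full winning-region partition of each subgame to decide how to recombine. To make your approach go through you would first have to generalize the problem (analogously to how the paper generalizes \prob{Tarski} to \prob{Tarski+}) to return winning regions and strategies for both players, and then argue that the generalized problem is still in \cc{TFNP} with efficiently verifiable solutions. You also need to replace your proposed second recursion on $G[A]$ with the correct step of Zielonka's algorithm: $G[A]$ is in general not a valid graph game (a $P_{1-i}$-vertex in $A$ may have all successors outside $A$), so the second call should be on $G$ minus the $P_{1-i}$-attractor of the opponent's winning set from the first call. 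With those corrections and the gluing argument you already flagged as delicate, your approach would work, but it is substantially more machinery than the paper's edge-removal argument.
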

\begin{proof}
    For a \prob{Memdet} instance $G = (V_0, V_1, E)$, we define $\mu(G) = |E| - |V_0| - |V_1|$. Notice that when $\mu(G) = 0$, each vertex has exactly one edge coming out of it. Therefore, each player has exactly one possible memoryless deterministic strategy. We can therefore determine which strategy wins in polynomial time. In this case, \prob{Memdet} is solvable in polynomial time. It should also be clear that $\mu(G) \leq |G|$.

    We now show \prob{Memdet} is $\mu$-d.s.r. Assume $\mu(G) > 0$. We say that player $i$ has a forced strategy if each vertex in $V_i$ has exactly one edge coming out of it. The reduction first checks if there is a forced strategy which also a winning strategy for either player. If so, it outputs the solution it finds. Otherwise, let $G_i$ denote $G$ with edge $i$ removed. Notice however that $G_i$ is not necessarily a graph game as defined in \cref{def: graph_games} since it may be the case that $G_i$ does not have at least one edge leaving each vertex. So define $\mathcal{G} = \{ G_i : i \in E, \text{$G_i$ is a graph game}\}$. The self reduction constructs $\mathcal{G}$ and calls the \prob{Memdet} oracle on every graph in $\mathcal{G}$ to get back solutions of the form $(p, \sigma_i)$ where $\sigma_i$ is a memoryless deterministic strategy for player $p$ to win in $G_i$. The reduction then checks for every $(p, \sigma_i)$ it receives from the oracle if $\sigma_i$ is a memoryless, deterministic strategy for $G$, and outputs the first $(p, \sigma_i)$ that is. If no such strategy is found (which we will show never happens), the reduction outputs $0$.

    We first show that the reduction is valid. The construction of $\mathcal{G}$ clearly runs in polynomial time. Checking whether $\sigma_i$ is a winning strategy for $G$ can be done in polynomial time and therefore so is checking all at most $|G|$ of them. Therefore, the reduction runs in polynomial time. Furthermore, we see that $\mu(G_i) = \mu(G)-1$ for every $i \in E$, so the reduction is indeed a \emph{downward} self-reduction.

     We now show correctness. We only show the reduction succeeds when $\mu(G) \geq 1$, since otherwise, the \prob{Memdet} is solvable in polynomial time. We must show that one of the $(p, \sigma_i)$ is a solution to \prob{Memdet} on $G$. Let $\sigma^*$ be a memoryless, deterministic strategy for $G$. Notice that since $\mu(G) \geq 1$, there must exist an edge $i^*$ we can eliminate to create $G_{i^*}$ such that $(p, \sigma^*)$ is a solution to \prob{Memdet} on $G_{i^*}$. Here we consider two cases. 
    \begin{enumerate}
        \item 
        Say $p$ has a forced strategy in for $G$. The reduction clearly succeeds.
        \item 
        Say player $p$ does not have a forced strategy for $G$. Let us consider what happens when $i^*$ is any edge belonging to $p$ which is not chosen in $\sigma^*$. Then player $p$ wins in $G_{i^*}$ since $\sigma^*$ is clearly a winning strategy for $G_{i^*}$. Therefore, the solution to $G_{i^*}$ is $(p, \sigma')$ for some $\sigma'$. Notice that $(p, \sigma')$ is a solution for $G$ since the player opposing $p$ has the same options in $G_{i^*}$ as they do in $G$. Therefore, any solution to $G_{i^*}$ is a solution to $G$.
    \end{enumerate}
\end{proof}

\begin{corollary}\label{cor:memdet}
    \prob{Memdet} is in \cc{PLS}.
\end{corollary}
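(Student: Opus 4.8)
The plan is simply to feed the preceding theorem into the general machinery of \cref{sec:mu_dsr}. First I would observe that \prob{Memdet} is in \cc{TFNP}: by \cref{lem: graph_games_NP} every graph game has a memoryless deterministic strategy, so the relation is total, and such a strategy is verifiable in polynomial time, so the relation is in \cc{FNP}. Hence \prob{Memdet} is a \cc{TF\Sigma_1^P} problem, i.e. \cc{PromiseF\Sigma_1^P} with the trivial promise.

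Next I would invoke the preceding theorem, which exhibits a $\mu$-downward self-reduction for \prob{Memdet} with the measure $\mu(G) = |E| - |V_0| - |V_1|$. This $\mu$ is polynomially (indeed linearly) bounded, since $\mu(G) \le |G|$, and because \prob{Memdet} is total the self-reduction is automatically promise-preserving (every input already satisfies the trivial promise). Thus all hypotheses of \cref{cor: mu-d.s.r} are met in the case $i = 1$, where that corollary reads: every \cc{TFNP} problem that is $\mu$-d.s.r for polynomially bounded $\mu$ lies in \cc{PLS}. Applying it gives \prob{Memdet} $\in \cc{PLS}$. (Equivalently, one could cite \cref{thm:mu-dsr} directly with $i = 1$ in its deterministic, non-unique case.)

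There is no real obstacle to overcome here: the substantive work — constructing the $\mu$-d.s.r, and beneath it building the \prob{Sink-of-DAG} instance whose vertices encode stack traces of the natural recursive algorithm — has already been done in the preceding theorem and in the proof of \cref{thm:mu-dsr}. The only points worth a second glance are that $\mu$ is polynomially bounded (immediate from $\mu(G) \le |G|$) and that the self-reduction never queries a non-instance; the latter is handled in the preceding proof by passing only to those edge-deleted subgraphs $G_i$ that remain valid graph games, i.e. to the set $\mathcal{G}$.
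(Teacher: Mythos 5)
Your proof is correct and takes exactly the route the paper intends: the corollary is an immediate consequence of the preceding theorem (that \prob{Memdet} is $\mu$-d.s.r with $\mu(G) = |E| - |V_0| - |V_1|$ polynomially bounded), together with \cref{lem: graph_games_NP} (to place \prob{Memdet} in \cc{TFNP}) and \cref{cor: mu-d.s.r} applied at $i=1$. You also correctly flag the one subtlety — that the d.s.r only queries valid graph games (the set $\mathcal{G}$) — which keeps the reduction promise-preserving.
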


We note that the problem of finding memoryless deterministic strategies for parity games reduces to \prob{Memdet} \cite{beckmann2008complexity}. Therefore, the problem of finding memoryless deterministic strategies for parity games is also in \cc{PLS}.

\subsection{Tarski} \label{sec:tarski}

Consider any integers $L, d \geq 1$. For an integer lattice $[L]^d$, we define a partial order $\preceq$ such that $x \preceq y$ if for all $i \in [d]$, $x_i \leq y_i$. We now define the \cc{TFNP} problem of finding a Tarski fixed point \cite{tarski1955lattice, etessami2020tarski}.

\begin{mdframed}
    \textbf{Problem: \prob{Tarski}} 
    \label{def: tarski}

\paragraph{Input:} $N = 2^n, n, d \geq 1$. Given $C: [N]^d \rightarrow [N]^d$ with $\mathrm{size}(C) \leq \poly(d, n)$. 
\paragraph{Output:}
\begin{enumerate}
    \item $x \in [N]^d$ such that $C(x) = x$. 
    \item  distinct $x, y \in [N]^d$ such that $x \preceq y$ but $C(x) \npreceq C(y)$. 
\end{enumerate} 
\end{mdframed}

Although it is unclear how this problem is itself downward self-reducible, we can define a new more general problem, which we call \prob{Tarski+}, which is indeed downward self-reducible. For $\ell, h \in [N]^d$, let $L(\ell, h) = \{ x \in [N]^d : \ell \preceq x \preceq h \}$.


\begin{mdframed}
    \textbf{Problem: \prob{Tarski+}} 
    \label{def: tarski+}

\paragraph{Input:} Let $N = 2^n, n, d \geq 1, t \in [2, d+1]$. Given $C: [N]^d \rightarrow [N]^d$ with $\mathrm{size}(C) \leq \poly(d, n)$ \\
$m_t, \dots, m_d \in [N]$, $\ell, h \in [N]^{t-1}$. \\ 
Define $f:L(\ell, h) \rightarrow [N]^{t-1}$ as $f(m_1, \dots, m_{t-1}) = C(m_1, \dots, m_{t-1}, m_t, \dots, m_d)_{[1, t-1]}$.
\paragraph{Output:}
\begin{enumerate}
    \item $x \in [N]^d$ such that $C(x) = x$. 
    \item  distinct $x, y \in [N]^d$ such that $x \preceq y$ but $C(x) \npreceq C(y)$. 
    \item $x$ such that $f(x) \notin L(\ell, h)$.
\end{enumerate} 
\end{mdframed}

In the above definition, we imagine that $m_i$ is some $n$-bit string encoding $\bot$ if $i \in [t, d]$. \prob{Tarski+} simply lets consider \prob{Tarski} problem on a sublattice $L(\ell, h)$ where $\ell, h \in [N]^{t-1}$ and fixed $m_t, \dots, m_d$. Our downward self-reduction for \prob{Tarski+} is now simply based on the classic divide-and-conquer algorithm for finding Tarski fixed points. See \cite{dang2011computational} for a description of this algorithm, or \cite{etessami2020tarski} for a compressed description.

\begin{lemma}
    \label{lem: tarski+_dsr}
    \prob{Tarski+} is $\mu$-d.s.r for polynomially bounded $\mu$.
\end{lemma}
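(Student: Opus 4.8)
The plan is to show that the classic divide-and-conquer algorithm for finding Tarski fixed points naturally yields a $\mu$-d.s.r., where the measure $\mu$ captures the \emph{size of the sublattice} on which we are currently searching. Concretely, for a \prob{Tarski+} instance with parameters $t, m_t, \dots, m_d, \ell, h$, the search region is $L(\ell,h) \subseteq [N]^{t-1}$, and I would define $\mu$ to be something like $\mu = (t-1)\cdot n + \sum_{i=1}^{t-1}\lceil \log_2(h_i - \ell_i + 1)\rceil$ — a quantity that is polynomial in $n$ and $d$ (hence polynomially bounded), that is $0$ exactly when the lattice is a single point (i.e.\ $\ell = h$ and $t = 2$... or more carefully, when $L(\ell,h)$ is a singleton, in which case we can check directly whether $f(\ell) = \ell$ and return it as a fixed-point component check, otherwise output the violation witness), and that strictly decreases on every recursive call. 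The base case: when $L(\ell,h)$ is a singleton $\{x\}$, evaluate $C(x)$; if $C(x)=x$ output it (solution type 1), and otherwise $x$ together with the monotonicity structure gives a type-2 witness, or $f(x)\notin L(\ell,h)$ gives a type-3 witness. This is solvable in $\poly$ time with no oracle calls, as required when $\mu = 0$.

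For the recursive step, I would follow the standard algorithm: pick the "widest" coordinate $j \in [t-1]$ of the current box $L(\ell,h)$, let $c = \lfloor (\ell_j + h_j)/2 \rfloor$ be its midpoint, evaluate $f$ restricted to the slice $x_j = c$ — this is itself a \prob{Tarski+}-type subproblem on a box in $[N]^{t-2}$ (obtained by fixing $m_j := c$, bumping $t$ down appropriately, and projecting), whose lattice is strictly smaller, so it is a valid downward query. From the solution to that subproblem we either directly get a global fixed point / monotonicity violation (which we pass up unchanged), or we get a fixed point $\hat x$ of the sliced map, and then compare $C(\hat x)$ to $\hat x$ in coordinate $j$: if $C(\hat x)_j > c$ we recurse on the upper half-box (coordinate $j$ restricted to $[c+1, h_j]$), if $C(\hat x)_j < c$ we recurse on the lower half-box $[\ell_j, c-1]$, and if $C(\hat x)_j = c$ then $\hat x$ is already a global fixed point. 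Each of these recursive calls is on a box with $\mu$ strictly smaller (one coordinate range roughly halved, or the dimension dropped by one), and the instance description size grows by only $\poly(\mu)$, so \cref{def: mu-dsr} is satisfied. The post-processing that assembles the final answer from the $O(1)$ (or $O(d)$, if we split the slice-call and the recursive-call) oracle answers is just a constant number of circuit evaluations and comparisons, hence polynomial time, and no $\Sigma_{i-1}^P$ oracle is needed since \prob{Tarski+} is in \cc{PromiseF\Sigma_1^P}.

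The main obstacle I anticipate is not the algorithm itself — which is textbook — but rather \emph{correctness of the reduction as a $\mu$-d.s.r.\ rather than as a standalone algorithm}, i.e.\ carefully verifying two things. First, promise-preservation and well-definedness: every subproblem we query must itself be a legal \prob{Tarski+} instance with its $f$ well-defined (the slice map must land in the correct sublattice, which uses the monotonicity of $C$ and the fact that on the boundary slice Tarski's theorem still guarantees a fixed point of the restricted map — or else we have already extracted a type-2 or type-3 witness to report); this is exactly the subtle point where \prob{Tarski+} was defined with the extra output clause "(3) $x$ such that $f(x)\notin L(\ell,h)$", and I would lean on that clause to absorb the bad cases. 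Second, I must make sure the measure $\mu$ I pick really is \emph{strictly} decreasing even when a half-box has odd width or collapses — e.g.\ handling $h_j - \ell_j = 1$ so the midpoint split still reduces $\mu$, and handling the dimension-drop cleanly so that fixing a coordinate genuinely lowers $\mu$. Once these bookkeeping points are pinned down, \cref{thm:mu-dsr} immediately gives \cref{cor: tarski}, namely that \prob{Tarski} (which trivially reduces to \prob{Tarski+} by taking $t = d+1$, $\ell = (1,\dots,1)$, $h = (N,\dots,N)$) is in \cc{PLS}.
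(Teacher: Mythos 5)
Your high-level plan is the same as the paper's: run the divide-and-conquer Tarski algorithm, view a slice call plus a half-box call as the downward oracle queries, take $\mu$ to be (essentially) the log of the lattice volume plus the number of free coordinates, and rely on \prob{Tarski+}'s type-3 output to keep the reduction promise-preserving. The paper's measure is $\mu(T) = t + \lceil\log_2|L(\ell,h)|\rceil$, which is cleaner than your per-coordinate sum but serves the same purpose. Your coordinate-choice variant (splitting the widest coordinate rather than always the last free one) also introduces a small wrinkle you don't address: \prob{Tarski+} only lets you fix a \emph{suffix} $m_t,\dots,m_d$, so fixing an arbitrary coordinate $j$ requires composing $C$ with a permutation before the recursive call. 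The paper sidesteps this by always fixing the $(t-1)$-th coordinate.

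There is, however, a genuine gap in your half-box recursion. After obtaining a slice fixed point $\hat x$ with $\hat x_j = c$ and $C(\hat x)_j > c$, you recurse on the box where \emph{only coordinate $j$} is restricted to $[c+1, h_j]$, leaving $\ell_i$ unchanged for $i\neq j$. But $f$ is not guaranteed to map that box into itself, nor is that box guaranteed to contain a fixed point: the standard argument uses $\hat x \preceq f(\hat x)$ together with monotonicity to show $f$ maps $L(f(\hat x), h)$ (the box whose lower corner is the \emph{entire vector} $f(\hat x)$, not just $\ell$ with one coordinate bumped) into itself. Correspondingly, the paper recurses on $(C, m_t,\dots,m_d, f(x^*,m), h)$. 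This matters precisely when the recursive call hands back a type-3 witness $z$ with $f(z) \in L(\ell,h)$ but $f(z)_j \leq c$: in the paper's setup one has the chain $(x^*,m) \preceq f(x^*,m) \preceq z$, so $f(x^*,m) \npreceq f(z)$ immediately yields a type-2 (monotonicity) witness, and otherwise $f(z) \npreceq h$ gives a type-3 witness for the original instance. With your box $\ell' = (\ell_1,\dots,\ell_{j-1}, c+1, \ell_{j+1},\dots)$ there is no comparable point to play the role of $(x^*,m)$ — $z$ need not dominate $\hat x$ — so you cannot convert such a type-3 answer into a valid output for the parent instance, and "leaning on clause (3)" does not by itself close this case. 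To repair the argument you should recurse on $L(f(\hat x), h)$ (resp.\ $L(\ell, f(\hat x))$ in the lower case), exactly as the paper does, and then the type-3 recovery goes through.
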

\znote{Updated $\mu$, please double check.} \snote{Looks good.}
\begin{proof}
    Consider a \prob{Tarski+} instance $T = (C, m_t, \dots, m_d, \ell, h)$. 
    Let $\mu(T) = t + \lceil \log_2(|L(\ell, h)|)\rceil$. Note that $\mu$ is clearly polynomially bounded in the instance size. We now show a $\mu$-d.s.r for $T$. By definition, $f: L(\ell, h) \rightarrow [N]^{t-1}$ is such that $f(m_1, \dots, m_{t-1}) = C(m_1, \dots, m_{t-1}, m_t, \dots, m_d)_{[1, t-1]}$. If $t=2$ or $|L(\ell, h)| \leq 100$, the reduction just solves the instance (in the case $t=2$ via a binary search, and in the case $|L(\ell, h)| \leq 100$ by brute force).

    Otherwise, let $t' = t - 1$ and $m = m_{t'} = \lceil (\ell_{t'} + h_{t'})/2 \rceil$. The reduction calls its oracle on $C, m_{t'}, m_t, \dots, m_d, \ell_{[1, t'-1]}, h_{[1, t'-1]}$ and receives an answer back. Depending on the answer, we can have a number of cases. First, we will show how to find a solution for the current instance in each case and argue its correctness. 

    First, let's handle the simpler case where the oracle returns a solution of type 2 or 3: 

    \begin{enumerate}
        \item If it receives a type 2 solution $x, y \in [N]^{t'-1}$, it outputs a type 2 solution $(x, m), (y, m)$. Let $f'(x) = f(x, m)_{[1, t'-1]}$ be the implicit function in the oracle call. Clearly if $x \preceq y$ but $f'(x) \npreceq f'(y)$, we also have that $(x, m) \preceq (y, m)$ but $f(x, m) = (f'(x) , a) \npreceq (f'(y), b) = f(y, m)$, for some $a, b \in [N]$. This is because $\npreceq$ is preserved under concatenating any element on either side. 
        \item If it receives a type 3 solution $x \in [N]^{t'-1}$ such that $f'(x) \notin L(\ell_{[1, t'-1]}, h_{[1, t'-1]})$. The reduction outputs a type 3 solution $(x, m)$. Note $f(x, m)_{[1, t'-1]} \notin L(\ell_{[1, t'-1]}, h_{[1, t'-1]})$. Notice that this property is persevered by adding coordinates, so $f(x, m) \notin L(\ell, h)$, and $(x, m)$ forms a type 3 solution to our instance. 
    \end{enumerate}
    
    Otherwise, the reduction receives a type 1 solution $x^* \in [N]^{t'-1}$ such that $f'(x^*) = x^*$. Here the output $x^*$ of the first oracle call has the property $f(x^*, m)_{[1, t']} = x^*$ and for all $i \in [1, t'-1]$, $\ell_i \leq x^*_i \leq h_i$. We can have four cases: 
    \begin{enumerate}
        \item
        If $f(x^*, m)_{t'} \notin [\ell_{t'}, h_{t'}]$, then output $(x^*, m)$ as a type 3 solution.
        Note $f(x^*, m)_{t'} \notin [\ell_{t'}, h_{t'}]$, then $f(x^*, m) \notin L(\ell, h)$. But clearly $(x^*, m) \in L(\ell, h)$ since for all $i \in [1, t'-1]$, $\ell_i \leq x^*_i \leq h_i$ and $\ell_{t'} \leq m \leq h_{t'}$. Therefore $(x^*, m)$ is a type 3 solution.
        \item
        If $f(x^*, m)_{t'} = m$, the reduction outputs $(x^*, m)$ as a type 1 solution since $f(x^*, m) = (x^*, m)$. 
        
        \item 
        If $f(x^*, m)_{t'} > m$, the the reduction calls its oracle on the input $C, m_t, \dots, m_d, f(x^*, m), h$. If the oracle returns a type 1 solution, $z$, the reduction outputs $z$. If the reduction receives a type 2 solution, $z_1, z_2$, the reduction outputs $z_1, z_2$.  These are both valid since $L(f(x^*, m), h) \subseteq L(\ell, h)$. 
        
        If the reduction receives a type 3 solution $z \in [N]^{t'}$, this implies $z \in L(f(x^*, m), h)$ but $f(z) \notin L(f(x^*, m), h)$. We can have two cases: 
        
        \begin{itemize}
            \item If $f(x^*, m) \npreceq f(z)$, output $(x^*, m), z$ as a type 2 solution. Note that $(x^*, m) \preceq f(x^*, m) \preceq z$ where the second inequality follows from the fact that $z \in L(f(x^*, m), h)$ and the first inequality follows from the fact that $f'(x^*) = x^*$ and $f(x^*, m)_{t'} > m$. Therefore, $(x^*, m) \preceq z$ but $f(x^*, m) \npreceq f(z)$, so $(x^*, m), z$ forms a type 2 solution to our instance.
            \item Otherwise $f(z) \npreceq h$, then output $z$ as a type 3 solution. Since $f(z) \notin L(\ell, h)$, $z$ is a type 3 solution to the instance.
        \end{itemize}
        
        \item 
        If $f(x^*, m)_{t'} < m$, the reduction calls its oracle on the input $C, m_t, \dots, m_d, \ell, f(x^*, m)$. If the oracle returns a type 1 solution, $z$, the reduction outputs $z$. If the reduction receives a type 2 solution, $z_1, z_2$, the reduction outputs $z_1, z_2$. If the reduction receives at type 3 solution $z$ and $f(z) \npreceq f(x^*, m)$, output $z, f(x^*, m)$ as a type 2 solution. Otherwise, output $z$ as a type 3 solution. Correctness follows by a similar argument as in the previous case. 
    \end{enumerate}

    We have proved the correctness of our reduction. Finally, we show that the potential $\mu$ decreases by at least one at each step. Clearly in the first oracle call in the $\mu$-d.s.r, the number of points in $L(\ell, h)$ does not go up, but $t$ goes down by at least 1, so $\mu$ goes down by at least one. In the other oracle call, $t$ stays the same, but by our choice of $m$, $|L(\ell, h)|$ goes down by at least a factor 2. In particular, the lattice the oracle is queried on is a sub-lattice of $L(\ell, h)$ where the last coordinate is only allowed to range over half the values it was originally allowed to range over. Therefore, $\mu$ decreases by at least 1.
\end{proof}

\begin{corollary}
    \label{cor: tarski}
    \prob{Tarski} is in \cc{PLS}.
\end{corollary}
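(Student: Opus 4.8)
The plan is to obtain the corollary from \cref{lem: tarski+_dsr} by a routine many-one reduction of \prob{Tarski} to \prob{Tarski+}, together with the $\mu$-d.s.r.\ machinery already developed. First I would check that \prob{Tarski+} is a \cc{TFNP} problem (equivalently, a \cc{PromiseF\Sigma_1^P} problem with the trivial promise): solutions have length polynomial in the instance size, and each of the three solution types is verifiable in deterministic polynomial time given only the instance and the candidate, since evaluating $C$, testing the coordinate-wise order $\preceq$, and testing membership in the box $L(\ell,h)$ are all polynomial-time tasks. Totality holds because a type-1 or type-2 solution of \prob{Tarski+} asks only for a \emph{global} fixed point of $C$ on $[N]^d$ or a \emph{global} monotonicity violation of $C$ on $[N]^d$ (with no sublattice constraint), so if $C$ is monotone it has a fixed point by Tarski's theorem, and otherwise it exhibits a violating pair; hence $\prob{Tarski+}\in\cc{TFNP}$.

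Second, I would invoke \cref{lem: tarski+_dsr}, which shows \prob{Tarski+} is $\mu$-d.s.r.\ for a measure $\mu(T)=t+\lceil\log_2|L(\ell,h)|\rceil$ that is polynomially bounded in the instance size (using $t\le d+1$ and $|L(\ell,h)|\le N^d$). Since \prob{Tarski+} is total, this $\mu$-d.s.r.\ is automatically promise-preserving, so \cref{cor: mu-d.s.r} (instantiated at $i=1$, where the $\cc{\Sigma_0^P}$ oracle is trivial) gives a many-one reduction from \prob{Tarski+} to \cc{PLS}.

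Third, I would give the reduction from \prob{Tarski} to \prob{Tarski+}: on a \prob{Tarski} instance $(N,n,d,C)$ output the \prob{Tarski+} instance with $t=d+1$, $\ell=(1,\dots,1)$, $h=(N,\dots,N)$, and no fixed coordinates $m_t,\dots,m_d$. Then $L(\ell,h)=[N]^d$ and the induced map is $f=C$, so a type-3 solution cannot occur and the type-1 and type-2 conditions of \prob{Tarski+} are literally those of \prob{Tarski}; thus a \prob{Tarski+} solution of the image is already a \prob{Tarski} solution of the input, and the back-translation $g$ is the identity. Composing the two reductions yields $\prob{Tarski}\in\cc{PLS}$. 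I do not expect a real obstacle here: the whole content is carried by \cref{lem: tarski+_dsr}, and the only points worth a sentence of care are verifying that $\mu$ is polynomially bounded (so that \cref{cor: mu-d.s.r} applies) and checking that \prob{Tarski} is precisely the $t=d+1$, full-box special case of \prob{Tarski+}, which the description of \prob{Tarski+} was designed to make immediate.
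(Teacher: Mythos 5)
Your proposal is correct and follows the same route as the paper: reduce \prob{Tarski} to \prob{Tarski+} by taking $t=d+1$ and the full box, then apply \cref{lem: tarski+_dsr} and \cref{cor: mu-d.s.r}. The only (cosmetic) divergence is that you take $\ell=(1,\dots,1)$ whereas the paper writes $\ell=(0,\dots,0)$; your choice is the internally consistent one given $h=(N,\dots,N)$ and $L(\ell,h)=[N]^d$, so this looks like a small typo in the paper rather than a substantive difference.
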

\begin{proof}
    \prob{Tarski} reduces to \prob{Tarski+} by stetting $t = d+1$ and $\ell =(0, \dots, 0), h=(N, \dots, N)$. \prob{Tarski+} is in \cc{PLS} by \cref{lem: tarski+_dsr} and \cref{cor: mu-d.s.r}. Therefore, \prob{Tarski} is in \cc{PLS}.
\end{proof}

\subsection{\prob{S-Arrival}}

\prob{Arrival} is a zero-player variant of graph games (\cref{def: graph_games}). Here, instead of players making the decision of where to move, we imagine a train which is making the decision all by itself. We imagine a graph on $n$ vertices where each vertex $i$ has exactly two directed edges coming out of it $e_i^0, e_i^1$. The $j^{\text{th}}$ time the train arrives at vertex $i$, it takes the edge $e_i^{(\text{$j \mod 2$})}$. The \prob{Arrival} problem asks us to prove that the train will or will not eventually make it to its destination $d$. We now define the \prob{Arrival} problem more formally.

\begin{definition}[Switch Graph, \cite{gartner2018arrival}]
A switch graph is a tuple $G = (V, E, s_0, s_1)$ where $s_0, s_1 : V \to V$ and $E = \{(v, s_i(v)) \mid \forall v \in V, i \in \{0, 1\} \}$.
\end{definition}

\begin{algorithm}
\label{alg: run}
\caption{\prob{Run} \cite{gartner2018arrival}}
\begin{algorithmic}
\Require A switch graph $G = (V, E, s_0, s_1)$ and two vertices $o, d \in V$.
\Ensure For each edge $e \in E$, the number of times the train traversed $e$.
\State $v \gets o$ \Comment{position of the train}
\State $\forall u \in V$, set $s_{curr}[u] \gets s_0(u)$ and $s_{next}[u] \gets s_1(u)$
\State $\forall e \in E$, set $r[e] \gets 0$ \Comment{initialize the run profile}
\State $step \gets 0$
\While{$v \neq d$}
\State $w \gets s_{curr}[v]$ \Comment{compute the next vertex}
\State $r[s_{curr}[v]]++$ \Comment{update the run profile}
\State swap$(s_{curr}[v], s_{next}[v])$
\State $v \gets w$ \Comment{move the train}
\State $step \gets step + 1$
\EndWhile
\State \Return $r$
\end{algorithmic}
\end{algorithm}

\begin{definition}[\prob{Arrival}, \cite{dohrau2017arrival}]
Given a switch graph $G = (V, E, s_0, s_1)$ and two vertices $o, d \in V$, the \prob{Arrival} problem is to decide whether the algorithm \prob{Run} (\cref{alg: run}) terminates, i.e., whether the train reaches the destination $d$ starting from the origin $o$.
\end{definition}

For an instance $G$ of \prob{Arrival}, we first define some notation. A run is simply an ordered tuple of edges $(e_1, \dots, e_t)$. When starting at $o$ on a graph $G$, if the train traverses $e_1, \dots, e_t$ in order for some consecutive portion, then $(e_1, \dots, e_t)$ is called a run for $G, o$. The run profile of a run $m = (e_1, \dots, e_t)$ is a vector $v$ indexed by the edges in $E$ such that $v_{e}$ is the number of times $e$ appears in $m$. We say that a run profile $r$ for $G, o$ is valid if there exists a run $m = (e_1, \dots, e_t)$ such that $m$ is a run for $G$ and $r$ is the run profile of $m$. Given two runs $m$ and $m'$ such that $(m, m')$ is also a valid run, it is not hard to see that $r_m + r_{m'}$ is a valid run profile for $(m, m')$. We will often use this correspondence between concatenating two runs and easily being able to compute their combined run profile from their individual run profiles. Finally, we refer to the set of edges specified by the map $s_{\text{curr}}$ as the active edges.

The above problem is a decision problem, but to define the search problem \prob{S-Arrival}, we need the validity of run profiles to be efficiently verifiable. For a destination $d$, let $V_{good}(d)$ denote the set of all vertices $v$ for which there exists a directed path in $(V, E)$ from $v$ to $d$ and let $V_{bad}(d)$ denote $V \setminus V_{good}$.

\begin{lemma}[\cite{dohrau2017arrival}]
    \label{lem: arrival_tfnp}
    For an arrival instance $G = (V, E, s_0, s_1)$, $o, d$, the train eventually reaches $d$ from $o$ or it eventually reaches a vertex in $V_{bad}(d)$ from $o$. Furthermore, it always reaches one of these nodes in at most $n 2^n$ steps.
\end{lemma}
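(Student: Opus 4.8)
The plan is to model the evolution of \prob{Run} as a deterministic dynamical system on \emph{configurations}. A configuration is a pair $(v,\sigma)$ where $v\in V$ is the current position of the train and $\sigma\colon V\to\{0,1\}$ records, for each vertex, which of its two outgoing edges it will use on its next departure (i.e.\ $\sigma$ encodes the map $s_{curr}$). One step of \prob{Run} sends $(v,\sigma)$ to $(s_{\sigma(v)}(v),\sigma')$, where $\sigma'$ agrees with $\sigma$ except that $\sigma'(v)=1-\sigma(v)$; this transition is deterministic, and the run halts exactly when $v=d$. Since there are at most $|V|\le n$ choices for $v$ and $2^{|V|}\le 2^n$ choices for $\sigma$, there are at most $n2^n$ configurations in all.

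For the first claim I would argue by contradiction: assume the train, started at $o$, never reaches $d$ and never reaches any vertex of $V_{bad}(d)$. Then \prob{Run} never halts, so it produces an infinite sequence of configurations, all with position in $V_{good}(d)\setminus\{d\}$. As there are finitely many configurations and the transition is deterministic, this sequence is eventually periodic. Let $S\subseteq V_{good}(d)\setminus\{d\}$ be the (nonempty) set of vertices appearing infinitely often, equivalently the vertices appearing in the periodic part. Each $v\in S$ is entered infinitely often and, since the train never halts, departed from infinitely often; because the departure edge at $v$ strictly alternates between $(v,s_0(v))$ and $(v,s_1(v))$, both of these edges are traversed infinitely often, so $s_0(v),s_1(v)\in S$. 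Hence $S$ is closed under both $s_0$ and $s_1$: no edge of $(V,E)$ leaves $S$. But every $v\in S$ lies in $V_{good}(d)$, so there is a directed path from $v$ to $d$; by the closure property each vertex on this path lies in $S$, forcing $d\in S$, which contradicts $S\subseteq V\setminus\{d\}$. This establishes that the train reaches $d$ or a vertex of $V_{bad}(d)$.

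For the quantitative bound I would run \prob{Run} and examine the configurations at times $0,1,\dots,n2^n$, a list of $n2^n+1$ configurations. Suppose the train visits neither $d$ nor any vertex of $V_{bad}(d)$ during these $n2^n$ steps. Since there are at most $n2^n$ distinct configurations, two of the listed ones coincide, say at times $t_1<t_2$; by determinism the run is periodic from time $t_1$ with period $t_2-t_1$, so forever afterwards the train stays on the finitely many vertices it visited during $[t_1,t_2)$ — none of which is $d$ or lies in $V_{bad}(d)$. That contradicts the first part of the lemma, so in fact the train reaches $d$ or some vertex of $V_{bad}(d)$ within $n2^n$ steps.

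The main obstacle is the ``closed recurrent set'' step in the first part: one must carefully observe (a) that a vertex visited infinitely often is also exited infinitely often — true because \prob{Run} only halts at $d$, which is excluded — and (b) that the strict alternation of the two out-edges then forces \emph{both} successors into the recurrent set, yielding closure. Everything else (counting configurations, eventual periodicity, the pigeonhole for the step bound) is routine once the state space is set up correctly. A minor point worth stating explicitly is that $V_{bad}(d)$ is itself closed under edges, so reaching $V_{bad}(d)$ is an absorbing event; this is not needed for the argument but clarifies why the two alternatives in the lemma are consistent with the deterministic dynamics.
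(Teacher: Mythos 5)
Your proof is correct. The paper does not prove this lemma itself (it cites it from Dohrau et al.), but the argument you give — modeling \prob{Run} as a deterministic map on the $\le n2^n$ configurations $(v,\sigma)$, deducing eventual periodicity, observing that any vertex visited infinitely often is exited infinitely often via \emph{both} out-edges by strict alternation so the recurrent set is closed under $s_0,s_1$, and then using that $V_{good}(d)$-membership of a recurrent vertex would force $d$ into the closed recurrent set — is exactly the standard state-space argument from the cited reference, and the pigeonhole step for the $n2^n$ bound is the natural quantitative corollary.
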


Notice that if the train reaches $d$ from $o$, the run ends. If the train reached $V_{\text{bad}}(d)$ from $o$, then it has no hope of reaching $d$, since there is no path possible from any vertex in $V_{\text{bad}}(d)$ to $d$. Since there are atmost $n2^n$ steps in either case, the entries in the run profile sum up to $n2^n$. This means the run profile can be represented by $O(n)$ bits. 

\begin{lemma}[\cite{gartner2018arrival}]
    \label{lem: arrival_NP}
    For any run profile of length $O(n)$ bits, there exists an $\poly(n)$ time algorithm to check if the run profile is valid. Furthermore, for a valid run profile, the final vertex of the run can be determined in polynomial time.
\end{lemma}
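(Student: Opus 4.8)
The plan is to verify validity of a run profile by a small number of polynomial-time checks on the vector $r$ itself, never simulating \prob{Run} (which may take $\Theta(n2^n)$ steps). Write $e_v^0 = (v, s_0(v))$ and $e_v^1 = (v, s_1(v))$, and for a candidate $r : E \to \mathbb{Z}_{\ge 0}$ with $\|r\|_1 \le n2^n$ set $\mathrm{out}(v) = r(e_v^0) + r(e_v^1)$, $\mathrm{in}(v) = \sum_{(u,i)\,:\, s_i(u) = v} r(e_u^i)$, and $\mathrm{ex}(v) = \mathrm{out}(v) - \mathrm{in}(v)$. First I would check the \emph{switching constraint} $r(e_v^1) \le r(e_v^0) \le r(e_v^1)+1$ at every $v$ — this encodes that on the $j$-th visit to $v$ the train leaves along $e_v^0$ for odd $j$ and along $e_v^1$ for even $j$ — and the \emph{conservation constraint}: either $\mathrm{ex} \equiv 0$ (a run that ends back at $o$, final vertex $o$), or $\mathrm{ex}(o) = 1$, some unique $t \neq o$ has $\mathrm{ex}(t) = -1$, and $\mathrm{ex}(v) = 0$ for all other $v$. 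Both are elementary arithmetic on $O(n)$-bit integers, and the vertex $t$ (or $o$) identified here is exactly the final vertex of the run, which settles the second assertion of the lemma once validity has been confirmed.

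These local conditions do not suffice on their own, since adding to a genuine run profile a disjoint directed cycle in which every vertex $v$ uses $e_v^0$ once and $e_v^1$ zero times again satisfies them; the vectors satisfying them are precisely the \emph{switching flows} of \cite{dohrau2017arrival}. The structural fact I would invoke is that the true run profile is the unique pointwise-minimal switching flow (it is itself a switching flow and lies pointwise below every switching flow), so $r$ is valid iff it is a switching flow that is pointwise minimal. To test minimality I would build an auxiliary digraph $D_r$ on $V$ containing the arc $v \to s_c(v)$ exactly when $r(e_v^c) \ge 1$ and decrementing $r(e_v^c)$ keeps the switching constraint at $v$ — concretely $c=0$ with $r(e_v^0) = r(e_v^1)+1$, or $c=1$ with $r(e_v^0) = r(e_v^1)$ and $r(e_v^1) \ge 1$. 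Subtracting any directed cycle of $D_r$ from $r$ produces a strictly smaller switching flow, and conversely, following the cycle-decomposition argument of \cite{dohrau2017arrival}, if $r$ is not minimal then $r$ minus a minimal switching flow decomposes into directed cycles all lying in $D_r$; hence $r$ is minimal iff $D_r$ is acyclic, a single depth-first search. For the run profiles relevant to \prob{S-Arrival} (the full run, truncated at the first vertex of $V_{bad}(d)$) I would additionally check, by reachability computations, that $t \in \{d\} \cup V_{bad}(d)$ and that no vertex of $V_{bad}(d)$ other than $t$ carries positive flow.

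I expect the main obstacle to be proving correctness of the acyclicity test, i.e. that \emph{$D_r$ acyclic} is equivalent to \emph{$r$ is the minimal switching flow (= run profile)}: the ``$\Leftarrow$'' direction (a genuine run profile admits no reducible cycle, because a directed cycle in $D_r$ would let one excise a closed sub-walk of the deterministic trajectory while preserving its endpoints and the alternation discipline, contradicting determinism) and the reduction of minimality to cycle-freeness both rely on the structural theory of switching flows from \cite{dohrau2017arrival} and \cite{gartner2018arrival}. The remaining ingredients — the switching and conservation checks, the $V_{bad}(d)$-reachability checks, and reading off the final vertex as the identified sink — are routine polynomial-time computations.
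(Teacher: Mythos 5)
The paper does not prove this lemma; it is cited from \cite{gartner2018arrival}, so there is no proof in the paper to compare against directly. Your proposal — reformulating validity via the switching-flow characterization of \cite{dohrau2017arrival} (the switching constraint $r(e_v^1)\le r(e_v^0)\le r(e_v^1)+1$ plus conservation) and then testing pointwise minimality via acyclicity of the decrementable-edge digraph $D_r$ — is the standard route in the ARRIVAL literature, and it does yield a polynomial-time verifier together with the final vertex read off from the conservation excess. So the overall plan is sound.

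Two caveats. First, the specific justification you sketch — ``$r$ minus a minimal switching flow decomposes into directed cycles all lying in $D_r$'' — is false as stated: the difference $\delta = r - r^*$ can place positive flow on both out-edges of a vertex, while $D_r$ has out-degree at most one there, so the literal cycle decomposition need not lie in $D_r$. The correct argument is a one-sided parity comparison: at any vertex $v$ with $\delta(e_v^0)+\delta(e_v^1)>0$, the two switching constraints (for $r$ and for $r^*$) force the unique $D_r$-out-edge at $v$ to carry positive $\delta$; iterating from such a $v$ and using that $\delta$ has zero excess produces a cycle inside $D_r$. You flagged this as the likely obstacle, which is the right instinct, but the fix above is what makes $\textsf{acyclic} \Leftrightarrow \textsf{minimal}$ go through. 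Second, the minimality characterization certifies only the \emph{first} time the trajectory visits the identified sink $t$; for a generic prefix that ends at a vertex $t$ the train revisits (e.g.\ the step-$4$ prefix of an $a\to b\to a\to b\to a$ loop), your test would wrongly reject a genuinely valid run profile. So as written, your verifier certifies exactly those run profiles whose endpoint is an absorbing vertex ($d$ or $\bar d$ in $G'$, visited once). That is precisely what \prob{S-Arrival} needs, but you should state the restriction: the lemma is being applied here only to sink-terminated run profiles, not arbitrary prefixes, which the paper's phrasing ``any run profile'' glosses over.
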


\cref{lem: arrival_tfnp} and \cref{lem: arrival_NP} together tell us that there exists a witness that train reaches $d$ or not. If the train reaches $d$, the run profile that ends at $d$ is a valid and efficiently checkable witness of this fact. If the train does not reach $d$, the run profile that ends at $V_{\text{bad}}$ (without going through an outgoing edge of $d$) is an efficiently checkable witness of this fact.

We are now ready to define the \prob{S-Arrival} problem. We use the original definition from \cite{karthik2017arrival}:

\begin{mdframed}
    \textbf{Problem: \prob{S-Arrival}} 
    \label{def: ramsey}

\paragraph{Input:} A switch graph $G = (V, E, s_0, s_1)$ and two vertices $o, d \in V$.
First, we define a graph $G'$ as follows: 

\begin{itemize}
    \item Add a new vertex $\bar{d}$. 
    \item For each vertex $v \in V_{bad}$, set $s_0(v) = \bar{d}$ and $s_1(v) = \bar{d}$. 
    \item Edges $s_0(d), s_1(d), s_0(\bar{d}), s_1(\bar{d})$ are self-loops. 
\end{itemize}
\paragraph{Output:}
The task is to find a run profile $r$ starting at $o$ and ending at either $d$ or $\bar{d}$ (such that either is visited exactly once)
\end{mdframed}

We now confirm for ourselves that \prob{S-Arrival} is in \cc{TFUP}. In polynomial time, we can find the vertices in $V_{bad}$. This will allow us to construct the graph $G'$. By \cref{lem: arrival_tfnp}, we know that the run profile can be represented by at most $O(n)$ bits. By \cref{lem: arrival_NP}, we can verify if the run profile is valid in polynomial time. From the run profile, we can also verify that the final vertex in the run is either $d$ or $\bar{d}$ and that they only appear once. This proves that \prob{S-Arrival} is in \cc{TFUP}. 
We now show that \prob{S-Arrival} is d.s.r.
\begin{theorem}
    \label{thm: arrival_dsr}
    \prob{S-Arrival} is $\mu$-d.s.r. for some polynomially bounded $\mu$. 
\end{theorem}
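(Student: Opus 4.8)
The plan is to exhibit a polynomially bounded measure $\mu$ and a $\mu$-d.s.r for \prob{S-Arrival} whose recursion implements a ``halve the remaining budget'' strategy on the run of the train. By \cref{lem: arrival_tfnp}, starting from $o$ the train reaches $d$ or $\bar d$ within $N_0 := n2^n$ steps, so it fires the switch at any fixed vertex at most $N_0$ times. We maintain a \emph{focus} vertex $u$ (an ordinary vertex that is not the origin and from which $d$ is reachable; if no such vertex exists the instance is solved in a couple of steps), a \emph{budget} $b$ with the invariant that in the remaining run $u$ fires at most $b$ times (initially $b = N_0$), and an encoding of the current switch configuration. The measure will be, roughly, $\mu = n_{\mathrm{nc}}\cdot(n_{\mathrm{nc}}^2 + O(1)) + \lceil \log_2 b\rceil$, where $n_{\mathrm{nc}}$ counts the ``ordinary'' vertices and we ignore a syntactically marked gadget that records $u$ and $b$ (with $b := n_{\mathrm{nc}}2^{n_{\mathrm{nc}}}$ by default when no marker is present). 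This $\mu$ is $O(|x|^3)$, hence polynomially bounded, and each recursive call will strictly decrease it.

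The basic step, when $b\ge 2$, is this: temporarily replace $u$ by a gadget of $O(\log b)$ toggle vertices that counts arrivals at $u$, realizes $u$'s own switch through the parity bit of the counter, and on (essentially) the $b/2$-th firing diverts the train to a fresh dead-end vertex $\mathtt{HALT}$ instead of letting $u$ fire; then query the \prob{S-Arrival} oracle on this (legitimate, $\mu$-smaller) instance. From the returned run profile we either see that the train reached $d$ or $\bar d$ --- project the profile onto the original edges via \cref{lem: arrival_NP} and return --- or that it reached $\mathtt{HALT}$, in which case $u$ fired about $b/2$ times and, by the invariant, at most about $b/2$ firings remain. In the latter case we read off the parity of every ordinary vertex's firing count, re-encode that configuration on the original graph by swapping the two out-edges of each oddly-firing vertex, set the origin to $u$, halve the recorded budget, make a second ($\mu$-smaller) oracle query on the resulting instance, and splice the two profiles. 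When $b = 1$ we eliminate $u$ instead: the invariant says the train fires $u$ at most once more, so after that single firing it never re-enters $u$ and the in-edges of $u$ are thereafter unused; we may therefore redirect all in-edges of $u$ to $d$ and delete $u$, which genuinely lowers $n_{\mathrm{nc}}$. On the resulting $(n_{\mathrm{nc}}-1)$-vertex graph two oracle queries finish the job: one from the current origin (if its profile reaches $d$ through a redirected edge it pinpoints the unique arrival at $u$; if it never does, that profile already solves the instance) and one from the $s_0$-image of $u$ (to continue past the single firing, which now lives entirely in the graph without $u$). The base cases are $n_{\mathrm{nc}} = O(1)$ (brute-force the $\le N_0$ steps) and switch graphs with no switching vertex (follow the forced walk, which ends in $\le n_{\mathrm{nc}}$ steps once dead-ends are routed to $\bar d$).

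Correctness of the splicing rests on two facts: since $u$ is reachable-to-$d$, redirecting its provably unused in-edges to $d$ alters neither the train's trajectory nor the set $V_{bad}$; and encoding an accumulated configuration by swapping out-edges faithfully restarts the run from an arbitrary configuration (swapping changes which out-edge fires first but not the edge set, hence not reachability). The $\mu$-d.s.r bookkeeping is then routine: every query is a total \prob{S-Arrival} instance, has size $|x| + \poly(\mu(x))$ since the gadget and marker contribute $O(\log N_0) = O(n)$ vertices, and has $\mu$ strictly below $\mu(x)$ (the budget drops, or $n_{\mathrm{nc}}$ drops and the budget resets to something $O(n) \ll n_{\mathrm{nc}}^2$). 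Feeding this into \cref{thm:mu-dsr}, and using that \prob{S-Arrival} $\in \cc{TFUP}$ (its solution is unique), yields a reduction from \prob{S-Arrival} to $\cc{UEOPL}$.

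I expect the main obstacle to be the combination of two finicky points. First, the switch-graph counter gadget: on the many non-overflow arrivals it must cleanly return control to exactly the vertex the train was heading for while simultaneously implementing $u$'s switch --- the natural ``global step counter'' fails here for lack of a return address, which is why the counter is attached locally to a single focus vertex. Second, the interaction with \prob{S-Arrival}'s preprocessing, which reroutes \emph{all} dead-ends to $\bar d$: one must arrange $\mathtt{HALT}$, the gadget vertices, the marker, and the redirected in-edges so that ``the train reached $u$'' and ``the train ran out of budget'' stay distinguishable in the output profile from ``the train genuinely failed'' (i.e., reached an honest bad vertex). A minor but necessary point is defining $\mu$ to disregard the marked and instrumentation vertices and to read the current budget off the marker, so that instrumenting the focus never inflates $\mu$.
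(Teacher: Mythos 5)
Your plan is a genuinely different route from the paper's. The paper takes the most myopic possible decomposition: it uses $\mu(G) = |V|$, fixes a vertex $v$ with $(v,d) \in E$, \emph{removes the destination $d$} and redirects every in-edge of $d$ to $v$, and queries the oracle on this $(|V|-1)$-vertex instance with destination $v$. Because $d$ is a terminal sink, this is a purely local surgery: either the returned profile already uses a redirected edge (so the original train reaches $d$ there), or it reaches $v$ directly (then one appends $(v,d)$ if it is the currently active out-edge, or makes a second $(|V|-1)$-vertex query from $s_0(v)$ with the switch state advanced by the profile), or it reaches $\bar d$. No gadget, no budget, no counter, and $\mu$ drops by exactly one per call.

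Your scheme instead binary-searches the firing count of an arbitrary good non-origin focus $u$, via a counter gadget that halts at roughly $b/2$ arrivals, with a composite $\mu \approx n_{\mathrm{nc}}^3 + \lceil\log_2 b\rceil$. I believe the skeleton can be made to work --- counter gadgets of this kind are standard for switch graphs, your potential function is indeed polynomially bounded and strictly decreases in both the halving and the elimination steps, and the ``$b=1$, delete $u$, redirect its in-edges to $d$'' move is sound once you have the invariant that $u$ fires at most once more. But the write-up leaves several load-bearing claims as sketches you yourself flag as finicky: (i) the exact counter-gadget wiring --- in particular ensuring that every non-overflow exit returns to a single alternator vertex that faithfully reproduces $u$'s current switch state, and that the overflow threshold is a power of two $\ge b/2$ so that $\lceil\log_2 b\rceil$ genuinely drops; (ii) disentangling $\mathtt{HALT}$-induced $\bar d$ from a genuine $\bar d$ after the \prob{S-Arrival} preprocessing reroutes all of $V_{bad}$, including the gadget's dead-end, to $\bar d$; and (iii) that $V_{bad}$ is unchanged apart from the gadget's own dead-end when $u$ is replaced by the gadget, and again when $u$ is later deleted. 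None of these looks fatal, but all of them require care, whereas the paper's removal-of-$d$ surgery avoids every one of them at the cost of being non-obvious to think of. It is worth noticing that the key insight you are reconstructing is precisely that one should peel off a vertex per recursion rather than a step per recursion; the paper's observation is that peeling off $d$ itself makes the splicing trivial, so the whole budget/counter machinery becomes unnecessary.
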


\begin{proof}[Proof Sketch]
    
    We define our measure $\mu(G) = |V|$, the number of vertices in $G$. In this proof, we will use run profiles interchangeably with any runs they might represent. The only modification to runs we will do is concatenating two runs together, thereby also ensuring that the new run profile can also be easily computed.

    We will downward self-reduce an \prob{S-Arrival} instance $I$ where $\mu(I) = n$. $I$ consists of $G = (V, E, s_0, s_1)$ and $o, d \in V$. Let $V_{good}$ be the set of vertices which have an outgoing edge to $d$ and $E_{good}$ be the corresponding edges. If $V_{good}$ was empty, we could output the run $(o, \bar{d})$. Now, consider any $v \in V_{good}$ such that $(v, d) \in E_{good}$. Modify the graph $G$ to produce $G^{(1)} = (V^{(1)} = V \;\backslash\; \{d\}, E^{(1)}, s_0^{(1)}, s_1^{(1)})$ such that each edge that leads to $d$ ($(u, d) \in E$) leads to $v$ instead ($(u, v) \in E^{(1)}$). First, observe that $V_{bad} = V_{bad}^{(1)}$. 
    
    Let $m^{(1)}$ be a run that is a solution to \prob{S-Arrival} on $G^{(1)}, o^{(1)} = o, d^{(1)} = v$. Now, we can have a few cases: 
    \begin{enumerate}
        \item $m^{(1)}$ ends at node $\bar{d}$: we return $m^{(1)}$ since it is also a valid run in $G'$ that ends at $\bar{d}$ since no such run can pass through a node in $V_{good}$.
        \item $m^{(1)}$ ends at node $v$ via an edge $(u, d) \in E_{good}$: we can easily modify the run to include $(u, d)$ instead of $(u, v)$ to produce a valid run $m$ in $G'$ that ends at $d$. 
        \item $m^{(1)}$ ends at node $v$ via an edge $(u, v) \not \in E_{good}$ and $(v, d) \in s_0$: we can modify $m^{(1)}$ by adding an edge $(v, d)$ to produce a valid run $m$ in $G'$ ending at $d$. 
        \item If $m^{(1)}$ ends at node $v$ via an edge $(u, d) \not \in E_{good}$ and $(v, d) \in s_1$: In this case, since we have already reached $v$ once, we must reach $v$ again in order to use $(v, d)$ to reach $d$. Let $o^{(2)} = s_0(v)$. Note $(v, o^{(2)})$ is the next edge visited by a run in $G'$. Modify $m^{(1)}$ to concatenate the edge $(v, s_0(v))$. Next, we modify $G^{(1)}$ to reflect the fact that we have already traversed the run $m^{(1)}$. For every edge $e \in s_i^{(1)}$, $e \in s_i^{(2)}$ iff $e$ is visited an even number of times in the run $m^{(1)}$. Note that this can be computed by the run profile of $m^{(1)}$. Now, we use the oracle to receive a solution $m^{(2)}$ to \prob{S-Arrival} on $G^{(2)}, o^{(2)}, v$. Now, we can have three cases reflecting the previous three cases: 
        \begin{enumerate}
            \item $m^{(2)}$ ends at node $\bar{d}$: we return $m^{(1)}$ concatenated with $m^{(2)}$ since it is also a valid run in $G'$ that ends at $\bar{d}$.
            \item $m^{(2)}$ ends at node $v$ via an edge $(u, d) \in E_{good}$: we can easily modify the run, changing $(u, v)$ to $(u, d)$ instead. Now, $m^{(1)}$ concatenated with $m^{(2)}$ is a valid run in $G'$ ending at $d$. 
            \item $m^{(2)}$ ends at node $v$ via an edge $(u, d) \not \in E_{good}$: We return the run $m$ which is $m^{(1)}$ concatenated with $m^{(2)}$ and $(v, d)$. We know that the run $m^{(1)}$ concatenated with $m^{(2)}$ ends at $v$ and visits $v$ exactly twice. Furthermore, $(v, d) \in s_1$, therefore the next edge taken by the train will be $(v, d)$ which makes $m$ a valid run in $G'$ ending in $d$. 
        \end{enumerate}
        
    \end{enumerate}

    


    We have showed the correctness of our reduction. This reduction clearly runs in polynomial time as each graph modification and run (profile) modification can be carried out in polynomial time as noted above. Finally, the reduction has the downward property. It only calls the oracle on smaller instances as $\mu(G^{(1)}) = \mu(G^{(2)}) = \mu(G) - 1$.

\end{proof}

\begin{corollary}\label{cor:s_arrival}
    \prob{S-Arrival} is in \cc{UEOPL}.
\end{corollary}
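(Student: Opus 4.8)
The plan is to derive \cref{cor:s_arrival} immediately from \cref{thm: arrival_dsr} together with the observation that \prob{S-Arrival} is in \cc{TFUP}, by invoking the general machinery of \cref{thm:mu-dsr} (equivalently \cref{cor: mu-d.s.r}) at the first level of the hierarchy, where the $\cc{\Sigma_{i-1}^P}$ oracle is trivial.

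First I would assemble the three ingredients needed to place \prob{S-Arrival} in $\cc{TFUP} = \cc{TFU\Sigma_1^P}$. (i) Totality is exactly \cref{lem: arrival_tfnp}: the train, run for at most $n2^n$ steps, reaches either $d$ or a vertex of $V_{bad}(d)$, and after the standard modification to $G'$ this run corresponds to a run profile ending at $d$ or $\bar d$ exactly once. (ii) Efficient verifiability: by \cref{lem: arrival_tfnp} a solution run profile has entries summing to at most $n2^n$, hence is describable in $O(n)$ bits, and by \cref{lem: arrival_NP} validity of a run profile and the identity of its terminal vertex are checkable in polynomial time; so the predicate ``$r$ is a valid run profile from $o$ visiting $d$ or $\bar d$ exactly once'' is in \cc{P}. (iii) Uniqueness: the dynamics of \prob{Run} are deterministic, so there is exactly one maximal trajectory from $o$, and exactly one initial segment of it that visits $d$ or $\bar d$ for the first time; its run profile is the unique solution.

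Next I would invoke \cref{thm: arrival_dsr}, which gives a $\mu$-d.s.r.\ for \prob{S-Arrival} with $\mu(G) = |V|$, clearly polynomially bounded. Since \prob{S-Arrival} lies in \cc{TFU\Sigma_1^P} and a $\cc{\Sigma_0^P}$ oracle is a $\cc{P}$ oracle (hence no oracle at all), \cref{cor: mu-d.s.r} applies directly: its proof routes through \cref{thm:mu-dsr}, which for problems with unique solutions reduces to \prob{Sink-of-Verifiable-Line} and thence to \prob{Unique-End-of-Potential-Line} by \cref{lem: sovl_to_ueopl}. This yields $\prob{S-Arrival} \in \cc{UEOPL}^{\cc{\Sigma_0^P}} = \cc{UEOPL}$, which is the claim.

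The argument carries essentially no obstacle beyond what is already packed into \cref{thm: arrival_dsr}; the only point deserving explicit care is that the downward self-reduction of \cref{thm: arrival_dsr} is promise-preserving and stays within \cc{TFUP} at every recursive call---that each subinstance $G^{(1)}, G^{(2)}$ is again a legal \prob{S-Arrival} instance with a unique solution. This is immediate, since each is obtained from a switch graph by redirecting/relabelling edges and so is itself a switch graph with deterministic dynamics. One could additionally remark that, because \cref{thm:mu-dsr} is already stated for \cc{PromiseF\Sigma_i^P}, the totality in (i) is not strictly required for the reduction to go through---it merely makes the hypothesis of \cref{cor: mu-d.s.r} literally applicable and certifies that \prob{S-Arrival} is a genuine \cc{TFUP} problem rather than only a promise one.
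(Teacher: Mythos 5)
Your proposal is correct and follows exactly the same route as the paper: combine membership of \prob{S-Arrival} in \cc{TFUP}, the $\mu$-d.s.r.\ from \cref{thm: arrival_dsr}, and \cref{cor: mu-d.s.r} with $i=1$ (so the oracle is trivial) to land in \cc{UEOPL}. Your version simply unpacks the ingredients more explicitly; there is no substantive difference in the argument.
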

\begin{proof}
    The fact that \prob{S-Arrival} is in \cc{TFUP} and d.s.r (\cref{thm: arrival_dsr}) implies that it is in \cc{UEOPL} by \cref{cor: mu-d.s.r}.
\end{proof}

\section{Limitations of self-reducibility as a framework in \cc{TFNP}}
\subsection{Not all \cc{PLS}-complete problems are traditionally d.s.r}
\cite{harsha2022downward} asked if (traditional) downward self-reducibility is a complete framework for \cc{PLS}-complete problems. Recall that traditional downward self-reducibility means that the oracle can only answer queries of input length strictly less than $n$, rather than smaller in some size function $\mu$. This would be a rather surprising characterization of \cc{PLS}-completeness. Here, we give a negative answer to this question under the assumption that $\cc{PLS} \neq \cc{FP}$. To do so, we will exhibit a problem which is \cc{PLS}-complete but not traditionally d.s.r.

\begin{mdframed}
    \textbf{Problem: \prob{RestrictedSizeSoD}} 
    \label{def: restricted-size iter}

\paragraph{Input:} A bitstring $x$ of length $2^{100 \cdot 2^{k}}$ for some integer $k$.

\paragraph{Output:}
If $|x| \neq 2^{100 \cdot 2^{k}}$ for some integer $k$, $0$ is the only valid answer. Otherwise, let $x = 0^t \circ 1 \circ x'$ for some integer $t$ and string $x'$. $y$ is a solution if and only if $y$ is a solution to the \prob{Sink-of-DAG} on input $x'$.
\end{mdframed}

In \prob{RestrictedSizeSoD}, one should think of only certain input sizes being valid. In particular, input sizes of length $2^{2^k}$ for some integer $k$ (where we allow padding) are valid. To ensure totality though, in cases when the input is not of length $2^{2^k}$ for some integer $k$, we say exactly the trivial solution $0$ is a solution.

\begin{lemma}
    \label{lem: res-in-pls}
    \prob{RestrictedSizeSoD} is in \cc{PLS}.
\end{lemma}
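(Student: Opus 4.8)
The plan is to show that \prob{RestrictedSizeSoD} reduces to \prob{Sink-of-DAG}, which immediately puts it in \cc{PLS}. The key observation is that \prob{RestrictedSizeSoD} is essentially just a padded version of \prob{Sink-of-DAG}: when the input $x$ has valid length $2^{100 \cdot 2^k}$, the relevant \prob{Sink-of-DAG} instance $x'$ is recovered by stripping off the leading $0^t \circ 1$ prefix, and when the input has invalid length, the trivial answer $0$ works.

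\textbf{First I would} handle the two cases. Given an input $x$, the reduction first checks whether $|x| = 2^{100 \cdot 2^k}$ for some integer $k$; this is a simple arithmetic check computable in time polynomial in $|x|$ (iterate $k$, compute $2^{100 \cdot 2^k}$, compare). If the length is invalid, the reduction does not even need to query the oracle: it can directly output $0$, which is by definition the unique valid solution in this case. (Formally, to fit the many-to-one reduction template of \cref{def: reduction}, one maps to any fixed trivial \prob{Sink-of-DAG} instance and has the output function ignore the oracle's answer and return $0$.) If the length is valid, the reduction writes $x = 0^t \circ 1 \circ x'$, extracts $x'$, parses $x'$ as a \prob{Sink-of-DAG} instance (a pair of circuits $S, V$), and forwards it to the \prob{Sink-of-DAG} oracle. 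Whatever solution $y$ the oracle returns is, by the definition of \prob{RestrictedSizeSoD}, exactly a valid solution for the original input $x$, so the output function is just the identity.

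\textbf{The main thing to verify} is that this is a legitimate polynomial-time many-to-one reduction: the functions $f$ (input transformation) and $g$ (output transformation) both run in time polynomial in $|x|$. Extracting the prefix $0^t \circ 1$ and the suffix $x'$ is linear in $|x|$; parsing $x'$ as circuits $S, V$ is whatever fixed polynomial-time decoding \prob{Sink-of-DAG} uses; and $g$ is the identity (or constant $0$ in the degenerate case). Correctness follows directly from the definition of \prob{RestrictedSizeSoD}: for valid inputs, $(x, y)$ is in the relation iff $(x', y)$ is a \prob{Sink-of-DAG} solution, which is exactly what the oracle guarantees; for invalid inputs, $0$ is stipulated to be the solution. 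Since \prob{Sink-of-DAG} is the canonical \cc{PLS}-complete problem, this shows $\prob{RestrictedSizeSoD} \in \cc{PLS}$.

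\textbf{I do not expect any real obstacle here} — this lemma is the easy direction of the padding argument, and its only purpose is to set up the subsequent (harder) claim that \prob{RestrictedSizeSoD} is \cc{PLS}-complete yet not traditionally d.s.r (which is where the sparsity of valid input lengths $2^{100 \cdot 2^k}$, growing doubly exponentially, does the real work). The one minor subtlety worth stating explicitly is the degenerate-input case: one should make sure the reduction's output function returns $0$ there rather than blindly echoing the oracle, since an arbitrary \prob{Sink-of-DAG} instance's solution need not be the string $0$.
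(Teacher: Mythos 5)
Your proof is correct and takes essentially the same approach as the paper: check whether $|x|$ has the required form, output $0$ if not, and otherwise strip the $0^t \circ 1$ prefix and forward $x'$ to the \prob{Sink-of-DAG} oracle. The extra remark about how to phrase the degenerate case as a formal many-to-one reduction is a reasonable elaboration but does not change the argument.
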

\begin{proof}
    Given an input $x \in \{ 0, 1\}^{n}$ to \prob{RestrictedSizeSOD}, we can solve it using a \cc{PLS} oracle as follows. If $n \neq 2^{2^k}$ for any integer $k$, we output $0$. Otherwise, we let $x = 0^t \circ 1 \circ x'$ for some integer $t$ and string $x'$, use our \cc{PLS} oracle to solve $\prob{Sink-of-DAG}$ on $x'$, and output that solution. The fact that the reduction runs in polynomial time and is correct is immediate.
\end{proof}

\begin{lemma}
    \label{lem: res-pls-hard}
    \prob{RestrictedSizeSoD} is \cc{PLS}-hard.
\end{lemma}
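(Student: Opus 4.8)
The plan is to reduce the canonical \cc{PLS}-complete problem \prob{Sink-of-DAG} to \prob{RestrictedSizeSoD} by a straightforward padding argument. Given an instance $x'$ of \prob{Sink-of-DAG} of length $m$, I would first compute the smallest integer $k$ such that $2^{100 \cdot 2^{k}} \geq m + 1$; note that $k = O(\log\log m)$, so this can be found in polynomial time, and moreover the target length $N := 2^{100 \cdot 2^{k}}$ satisfies $N \leq (m+1)^{100}$ or so (since doubling $k$ at most squares the exponent-of-exponent, the first valid length above $m+1$ is polynomially bounded in $m$). I would then form $f(x') := 0^{t} \circ 1 \circ x'$ where $t = N - 1 - m$, so that $|f(x')| = N = 2^{100 \cdot 2^{k}}$ is a valid input length for \prob{RestrictedSizeSoD}. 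The decoding map $g$ is the identity: by definition of \prob{RestrictedSizeSoD}, on input $f(x')$ (which has a valid length and decodes as $0^{t}\circ 1 \circ x'$), the solutions are exactly the solutions to \prob{Sink-of-DAG} on $x'$, so any $y$ returned is already a valid answer for $x'$.

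The key steps, in order: (1) verify that the first valid length $N$ above $m+1$ is polynomially bounded in $m$ and computable in $\poly(m)$ time — this is where one must be slightly careful, since valid lengths are very sparse ($2^{100}, 2^{200}, 2^{400}, 2^{800}, \dots$), but consecutive valid lengths are within a polynomial (in fact a squaring) of each other, so padding is never too expensive; (2) check that $f$ is polynomial-time computable (it just writes down $t$ zeros, a one, and copies $x'$); (3) check that $f(x')$ indeed falls into the nontrivial branch of \prob{RestrictedSizeSoD}'s definition, i.e. its length is of the form $2^{100\cdot 2^{k}}$ and it parses as $0^{t}\circ 1\circ x'$ with the \emph{first} one-bit occurring at the correct position so that the decoded string is precisely $x'$; (4) conclude via \cref{def: reduction} that $(f,g)$ with $g = \mathrm{id}$ is a valid many-to-one reduction from \prob{Sink-of-DAG} to \prob{RestrictedSizeSoD}, hence \prob{RestrictedSizeSoD} is \cc{PLS}-hard. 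Combined with \cref{lem: res-in-pls}, this gives \cc{PLS}-completeness.

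The main obstacle — really the only non-routine point — is step (1): confirming that the gap between consecutive valid input lengths is only polynomial, so that the padding blowup keeps the reduction polynomial-time. Since a valid length is $2^{100\cdot 2^{k}}$ and the next is $2^{100\cdot 2^{k+1}} = \bigl(2^{100\cdot 2^{k}}\bigr)^{2}$, the smallest valid length $\geq m+1$ is at most $(m+1)^2$ (or a small polynomial thereof if $m+1$ sits just above a valid length). Everything else is bookkeeping about bitstring concatenation. I would also note in passing that this same lemma's proof shows the reduction is a standard (non-size-reducing) many-one reduction, which is all that \cc{PLS}-hardness requires; the interesting content of the surrounding theorem is that this particular \cc{PLS}-complete problem is \emph{not} traditionally d.s.r., which follows from the sparsity of valid lengths rather than from anything in this lemma.
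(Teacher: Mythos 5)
Your proposal is correct and follows essentially the same padding argument as the paper: pick the smallest valid length $N$ at least $m+1$, observe that consecutive valid lengths differ by squaring so $N$ is polynomial in $m$, pad $x'$ with $0^{N-m-1}\circ 1$ in front, and decode with the identity. (Minor note: the paper's own proof writes the valid lengths as $2^{2^k}$ rather than the $2^{100\cdot 2^k}$ in the problem statement — an internal inconsistency on their side — but the squaring-gap argument is identical either way, and your constant-tracking is a bit looser in the first paragraph before you tighten it correctly to $(m+1)^2$ at the end.)
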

\begin{proof}
    Say $x' \in \{ 0, 1\}^n$ is some input to \prob{Sink-of-DAG}. Let $k = \lceil \log_2(\log_2 n) \rceil + 1$ and $n' = 2^{2^k}$. We can reduce \prob{Sink-of-DAG} on $x'$ to \prob{RestrictedSizeSoD} as follows. Let $x = 0^{n'-n-1} \circ 1 \circ x'$. We now feed $x$ to our \prob{RestrictedSizeSoD} oracle to get a solution $y$, which we then output.

    The reduction clearly runs in polynomial time since $n' \leq 2^{2^{\log_2(\log_2 n)+2}} = 2^{4 \cdot \log_2(n)} = n^4$. Correctness follows from the definition of \prob{RestrictedSizeSoD}.
\end{proof}

\begin{lemma}
    \label{lem: poly-pls}
    If \prob{RestrictedSizeSoD} is traditionally d.s.r, then \prob{RestrictedSizeSoD} has a polynomial time algorithm.
\end{lemma}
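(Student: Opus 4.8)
The plan is to take the hypothesized traditional downward self-reduction and turn it into an honest recursive polynomial-time algorithm, exploiting the fact that the \emph{valid} input lengths of \prob{RestrictedSizeSoD} are doubly-exponentially sparse. Concretely, let $\mathcal{A}$ be a polynomial-time oracle algorithm for \prob{RestrictedSizeSoD}, say running in time $n^{c}$ on inputs of length $n$, that only queries its oracle on instances of length \emph{strictly} smaller than the input. Write $N_k := 2^{100\cdot 2^{k}}$ for the $k$-th valid length and observe the key relation $N_k = (N_{k-1})^2$, with $N_0 = 2^{100}$ a constant: each valid length is the square of the preceding one, so there is \emph{no} valid length strictly between $\sqrt{n}$ and $n$ whenever $n = N_k$.

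First I would argue that on a valid instance $x$ of length $n = N_k$, every oracle query of $\mathcal{A}$ can be reduced to a query of length at most $\sqrt{n}$: a query whose length is not a valid length has, by definition of \prob{RestrictedSizeSoD}, the unique solution $0$, so $\mathcal{A}$ can answer it itself; and a query whose length \emph{is} valid must have length at most $N_{k-1} = \sqrt{n}$. Then I would define the recursive algorithm $\mathcal{B}$: on input $x$, output $0$ if $|x|$ is not a valid length, and otherwise simulate $\mathcal{A}(x)$, answering each oracle query $x'$ by $0$ when $|x'|$ is invalid and by a recursive call $\mathcal{B}(x')$ when $|x'|$ is valid. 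The recursion bottoms out because an instance of length $N_0 = 2^{100}$ can only query strictly shorter instances, all of which have invalid lengths and are answered directly; correctness of $\mathcal{B}$ is immediate from the correctness of $\mathcal{A}$ and the fact that invalid-length instances genuinely have $0$ as their only solution.

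Next I would bound the running time. Letting $S(N_k)$ be the time $\mathcal{B}$ spends on a valid instance of length $N_k$, the above gives $S(N_k) \le N_k^{c} + N_k^{c}\, S(N_{k-1})$ for $k \ge 1$ and $S(N_0) = O(1)$, since $\mathcal{A}$ runs in time $N_k^{c}$, makes at most $N_k^{c}$ queries, and each valid-length query recurses on an instance of length at most $N_{k-1}$. Unrolling this gives
\[
S(N_k) \;\le\; \sum_{i=0}^{k}\Big(\prod_{j=i}^{k} N_j^{c}\Big)\cdot O(1),
\]
and since $N_j = N_k^{1/2^{\,k-j}}$, each inner product is at most $N_k^{\,c(1+1/2+1/4+\cdots)} \le N_k^{2c}$, while the sum has only $k+1 = O(\log\log N_k)$ terms; hence $S(N_k) = O(\log\log N_k)\cdot N_k^{2c} = \poly(N_k)$, and $\mathcal{B}$ runs in linear time on invalid-length inputs, so $\mathcal{B}$ is a polynomial-time algorithm for \prob{RestrictedSizeSoD}.

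The main thing to be careful about — and the crux of the whole argument — is the interaction between the \emph{traditional} flavor of d.s.r. (queries only to strictly shorter inputs) and the engineered sparsity of valid lengths: the analysis works precisely because dropping even one bit of length forces a valid instance down to length at most $\sqrt{n}$, collapsing the recursion depth to $O(\log\log n)$ while each level blows up only polynomially, and the resulting exponents form a convergent geometric series. This is exactly why the padding in the definition of \prob{RestrictedSizeSoD} is chosen as it is, and it also pinpoints why the argument would not go through if $\mathcal{A}$ were merely $\mu$-d.s.r. rather than traditionally d.s.r.
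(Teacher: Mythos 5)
Your proof is correct and follows essentially the same approach as the paper: both exploit the squaring relation $N_k = N_{k-1}^2$ to bound the recursion depth by $O(\log\log n)$ and observe that the per-level polynomial blowup yields a convergent geometric series in the exponent. The only cosmetic difference is that you unroll the recurrence and sum the geometric series directly, whereas the paper proves the bound $T(k)\le (N_k)^d$ by induction on $k$; the two calculations are interchangeable.
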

\begin{proof}
    Say that \prob{RestrictedSizeSoD} has a traditional d.s.r. We assume without loss of generality that this d.s.r only queries its oracle on inputs of size $2^{2^k}$ for some integer $k$ since otherwise, we could simulate the answers to those queries as being $0$ ourselves. Furthermore, we assume that the number of queries made by the d.s.r on an input of size $2^{2^k}$ is exactly $(2^{2^k})^c$. Furthermore, all other operations in the d.s.r (including the ones that move the tape-head and prepare it for the d.s.r call) require time exactly $(2^{2^k})^{c'}$ for some constant $c$.

    Let us now consider running the recursive algorithm defined by the recursive algorithm for \prob{RestrictedSizeSoD}. Let $T(k)$ denote the time the recursive algorithm takes on an input of size $2^{2^k}$. Notice that $T(0) = O(1)$.
    \[ T(k) = (2^{2^k})^c \cdot T(k-1) + (2^{2^k})^{c'}\]

    Since $T(k-1) \geq 1$ and $c' < c$, we can say the following.
    \[ T(k) = 2 \cdot (2^{2^k})^c \cdot T(k-1)\]

    Let $d = \max(2c, c')+1$. We will show by induction that $T(k) \leq (2^{2^k})^d$. Notice that this clearly holds for $k = 0$.

    Now for the inductive case.
    \begin{align*}
        T(k) &= (2^{2^k})^c \cdot T(k-1) + (2^{2^k})^{c'}\\
        &\leq (2^{2^k})^c \cdot (2^{2^{k-1}})^d + (2^{2^k})^{c'}\\
        &= (2^{2^k})^c \cdot (2^{2^{k}})^{\frac{d}{2}} + (2^{2^k})^{c'}\\
        &= (2^{2^k})^{c + \frac{d}{2}}+ (2^{2^k})^{c'}
    \end{align*}

    Note that by our choice of $d$, $c+d/2 \leq (d-1)/2 + d/2 = d-1/2$ and $c' \leq d-1$. Therefore, the following holds.
    \begin{align*}
        T(k) &\leq (2^{2^k})^{c + \frac{d}{2}}+ (2^{2^k})^{c'}\\
        &\leq (2^{2^k})^{d-\frac{1}{2}}+ (2^{2^k})^{d-1}\\
        &= (2^{2^k})^{d-1} \left[ 2^{2^{k-1}} + 1\right]\\
        &\leq (2^{2^k})^{d-1} \left[ 2^{2^{k}} \right]\\
        &= (2^{2^k})^d
    \end{align*}
    The second to last inequality holds since $k \geq 1$.
    Therefore, the runtime of our recursive algorithm for \prob{RestrictedSizeSoD} on an input of length $2^{2^k}$ is at most $(2^{2^k})^d$ for some constant $d$. Therefore, our algorithm's running time is polynomial in its input length, as desired.
\end{proof}

\begin{theorem}
    If every \cc{PLS}-complete problem is traditionally d.s.r, then $\cc{PLS} = \cc{FP}$.
\end{theorem}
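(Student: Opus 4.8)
The plan is to simply chain together the four lemmas already established for \prob{RestrictedSizeSoD}. By \cref{lem: res-in-pls} and \cref{lem: res-pls-hard}, \prob{RestrictedSizeSoD} is \cc{PLS}-complete. Hence, under the hypothesis that every \cc{PLS}-complete problem is traditionally d.s.r, \prob{RestrictedSizeSoD} in particular is traditionally d.s.r. \cref{lem: poly-pls} then yields a polynomial-time algorithm for \prob{RestrictedSizeSoD}, and since \prob{RestrictedSizeSoD} is \cc{PLS}-hard (\cref{lem: res-pls-hard}), this polynomial-time algorithm can be used to solve \prob{Sink-of-DAG} in polynomial time, giving $\cc{PLS} = \cc{FP}$.

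The only content here is the interplay already captured by \cref{lem: poly-pls}: the padding in \prob{RestrictedSizeSoD} forces the valid input lengths to be of the form $2^{2^k}$, so that a recursive call on a strictly shorter \emph{valid} instance must drop from size $2^{2^k}$ to size at most $2^{2^{k-1}} = (2^{2^k})^{1/2}$. Running the natural recursive algorithm implied by the traditional d.s.r, the recurrence $T(k) \le 2 (2^{2^k})^c T(k-1)$ then solves to $T(k) \le (2^{2^k})^d$ for a constant $d$, i.e. polynomial in the input length — this is exactly what \cref{lem: poly-pls} records, so no further work is needed.

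Since all the components are in place, the theorem follows immediately, and there is no real obstacle remaining; the conceptual heart was isolating the right padded problem so that ``strictly smaller input length'' forces a geometric (indeed doubly-exponential) shrinkage rather than merely decrementing the size by one. Finally, note the converse direction ($\cc{PLS} = \cc{FP}$ trivially makes every \cc{PLS}-complete problem d.s.r, since one can solve the whole instance outright and never query the oracle) completes the ``if and only if'' of \cref{thm:main_not_all_dsr}.
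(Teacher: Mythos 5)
Your proposal is correct and takes exactly the same route as the paper: establish that \prob{RestrictedSizeSoD} is \cc{PLS}-complete via \cref{lem: res-in-pls} and \cref{lem: res-pls-hard}, apply \cref{lem: poly-pls} to get a polynomial-time algorithm from the assumed traditional d.s.r, and conclude $\cc{PLS} = \cc{FP}$ from hardness. Your recap of the doubly-exponential padding mechanism and the trivial converse direction are accurate but not required for the theorem as stated.
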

\begin{proof}
    If every \cc{PLS}-complete problem is traditionally d.s.r, then \prob{RestrictedSizeSoD} is traditionally d.s.r (since \prob{RestrictedSizeSoD} is \cc{PLS}-complete by \cref{lem: res-in-pls} and \cref{lem: res-pls-hard}). This implies \prob{RestrictedSizeSoD} can be solved in polynomial time by \cref{lem: poly-pls}. However, since \prob{RestrictedSizeSoD} is \cc{PLS}-complete, this implies $\cc{PLS} = \cc{FP}$.
\end{proof}

We have therefore shown that it is unlikely that every \cc{PLS}-complete problem is traditionally d.s.r. We leave open the question of whether every \cc{PLS}-complete problem is $\mu$-d.s.r.
\begin{observation}
    \label{obs: np_not_dsr}
    The exact same technique of defining a padded problem shows that not every \cc{NP}-complete problem is traditionally d.s.r unless $\cc{P} = \cc{NP}$ and not every \cc{PSPACE}-complete problem is traditionally d.s.r unless $\cc{P} = \cc{PSPACE}$.
\end{observation}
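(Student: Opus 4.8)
The plan is to replay the padding construction of \prob{RestrictedSizeSoD} verbatim, replacing \prob{Sink-of-DAG} by \prob{SAT} for the \cc{NP} statement and by \prob{TQBF} for the \cc{PSPACE} statement. Define \prob{RestrictedSizeSAT} to have as valid inputs exactly the bitstrings of length $2^{100\cdot 2^{k}}$ for some integer $k$; on such an input $x$, write $x = 0^{t}\circ 1\circ x'$ and put $x\in\prob{RestrictedSizeSAT}$ iff $x'\in\prob{SAT}$; on every other input length, declare $x\notin\prob{RestrictedSizeSAT}$. Define \prob{RestrictedSizeTQBF} identically, with \prob{SAT} replaced by \prob{TQBF}. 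Fixing a trivial answer on the invalid lengths is exactly what will let us conclude polynomial-time decidability once the problem is known to be d.s.r.

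First I would record the two routine claims, which are the analogues of \cref{lem: res-in-pls} and \cref{lem: res-pls-hard}. For membership: on input $x$, check in polynomial time whether $|x| = 2^{100\cdot 2^{k}}$ for some $k$, and if so strip the padding and make a single query to a \prob{SAT} (resp.\ \prob{TQBF}) oracle; hence $\prob{RestrictedSizeSAT}\in\cc{NP}$ and $\prob{RestrictedSizeTQBF}\in\cc{PSPACE}$. For hardness: given a \prob{SAT} instance $x'$ of length $n$, set $k = \lceil\log_2\log_2 n\rceil + O(1)$ so that $n':=2^{100\cdot 2^{k}}\geq n+1$ and $n'\leq n^{O(1)}$, and output $x = 0^{n'-n-1}\circ 1\circ x'$; this is a polynomial-time many-one reduction, so \prob{RestrictedSizeSAT} is \cc{NP}-complete and \prob{RestrictedSizeTQBF} is \cc{PSPACE}-complete.

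It then remains to redo the recursion bound of \cref{lem: poly-pls}. Suppose \prob{RestrictedSizeSAT} is traditionally d.s.r; as there, we may assume the reduction only queries valid input lengths, makes exactly $N_k^{c}$ oracle queries on an input of length $N_k := 2^{100\cdot 2^{k}}$, and spends $N_k^{c'}$ additional time. Since consecutive valid lengths satisfy $N_{k-1} = N_k^{1/2}$, the fully recursive (oracle-free) algorithm has running time $T(k) = N_k^{c}\cdot T(k-1) + N_k^{c'}$ with $T(0) = O(1)$, and the identical induction with $d = \max(2c,c')+1$ yields $T(k)\leq N_k^{\,d}$, which is polynomial in the input length. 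Hence $\prob{RestrictedSizeSAT}\in\cc{P}$, and being \cc{NP}-complete it forces $\cc{P}=\cc{NP}$; the same argument with \prob{TQBF} forces $\cc{P}=\cc{PSPACE}$. Equivalently, unless $\cc{P}=\cc{NP}$ (resp.\ $\cc{P}=\cc{PSPACE}$) these are \cc{NP}-complete (resp.\ \cc{PSPACE}-complete) problems that are not traditionally d.s.r.

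The only delicate point — and the thing I would treat as the real obstacle — is that the doubly-exponential spacing of the valid lengths is precisely what makes the recursion collapse: with naive padding to lengths $2^{k}$ one would have $N_{k-1} = N_k/2$, a recursion depth $\Omega(\log N)$ each level of which costs $\mathrm{poly}(N_k)$ queries, and no polynomial bound would follow. Taking $N_{k-1} = \sqrt{N_k}$ makes the geometric series of exponents $c(1 + \tfrac12 + \tfrac14 + \cdots) = 2c$ converge, closing the induction in \cref{lem: poly-pls} exactly as written; nothing in that argument is specific to \cc{PLS}, so it transfers to \cc{NP} and \cc{PSPACE} unchanged.
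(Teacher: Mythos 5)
Your proposal is correct and is exactly the construction the paper intends: replay Lemmas \ref{lem: res-in-pls}--\ref{lem: poly-pls} with \prob{SAT} (resp.\ \prob{TQBF}) replacing \prob{Sink-of-DAG}, declaring invalid lengths to be ``no'' instances, and the identical recursion bound gives $\cc{P}=\cc{NP}$ (resp.\ $\cc{P}=\cc{PSPACE}$). Your closing remark that the doubly-exponential gap $N_{k-1}=\sqrt{N_k}$ is what makes the geometric series of exponents converge is a nice, correct observation the paper leaves implicit.
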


\subsection{A note on random self-reducibility}
\label{subsec: rsr}
We now turn our attention to a different question asked in \cite{harsha2022downward}. We say a problem is random self-reducible (r.s.r) if it has a worst-case to average-case reduction. The notion of average-case here is with respect to some efficiently samplable distribution. \cite{harsha2022downward} ask if every r.s.r \cc{TFNP} problem is in some syntactic \cc{TFNP} sublcass (e.g. \cc{PPP}). 

An explicit construction problem is one where the input is a string $1^n$ and we are asked to generate some combinatorial object of length $\poly(n)$ satisfying some property. A classic example is the problem of generating an $n$-bit prime larger than $2^n$ given as input $1^n$.

We simply note here that all \cc{TFNP} explicit construction problems can be reduced to an r.s.r problem. Consider a \cc{TFNP} explicit construction $A$ with verifier $V(1^n, \cdot)$. Consider the problem $B$ where as input we are given a string from $\{ 0, 1\}^n$ and are asked to output a string which $y$ such that $V(1^n, y) = 1$. Note that $A$ trivially reduces to $B$. Furthermore, $B$ is trivially uniformly self-reducible since only its input length matters.

Therefore, if all \cc{TFNP} r.s.r problems belonged to some \cc{TFNP} subclass $\cc{C}$, then all \cc{TFNP} explicit construction problems would belong to \cc{C}. This would be quite a surprising theorem and we view this as a weak barrier to a \cref{thm:mu-dsr} type theorem for random self-reducibility.

\section{Acknowledgements}

The authors would like to thank Alexander Golovnev, and Noah Stephens-Davidowitz
for many helpful discussions and feedback on an earlier draft of this manuscript. 
The authors would also like to thank the anonymous referees for useful comments.

\bibliographystyle{alpha}
\bibliography{refs}

\newcommand{\etalchar}[1]{$^{#1}$}
\begin{thebibliography}{KKMP21}

\bibitem[Ald83]{aldous1983minimization}
David Aldous.
\newblock Minimization algorithms and random walk on the $ d $-cube.
\newblock {\em The Annals of Probability}, 11(2):403--413, 1983.

\bibitem[All10]{allender2010new}
Eric Allender.
\newblock New surprises from self-reducibility.
\newblock In {\em Programs, Proofs, Processes, Conference on Computability in Europe}, pages 1--5. Citeseer, 2010.

\bibitem[BCH{\etalchar{+}}22]{bitansky2022ppad}
Nir Bitansky, Arka~Rai Choudhuri, Justin Holmgren, Chethan Kamath, Alex Lombardi, Omer Paneth, and Ron~D Rothblum.
\newblock Ppad is as hard as lwe and iterated squaring.
\newblock In {\em Theory of Cryptography Conference}, pages 593--622. Springer, 2022.

\bibitem[BGP00]{BGP00NPsampler}
Mihir Bellare, Oded Goldreich, and Erez Petrank.
\newblock Uniform generation of np-witnesses using an np-oracle.
\newblock {\em Inf. Comput.}, 163(2):510–526, December 2000.

\bibitem[BM08]{beckmann2008complexity}
Arnold Beckmann and Faron Moller.
\newblock On the complexity of parity games.
\newblock In {\em Visions of Computer Science-BCS International Academic Conference}. BCS Learning \& Development, 2008.

\bibitem[Cai07]{cai2007s2p}
Jin-Yi Cai.
\newblock \( \mathsf{S2P} \subseteq \mathsf{ZPP^{NP}} \).
\newblock {\em Journal of Computer and System Sciences}, 73(1):25--35, 2007.

\bibitem[Can96]{canetti1996more}
Ran Canetti.
\newblock More on bpp and the polynomial-time hierarchy.
\newblock {\em Information Processing Letters}, 57(5):237--241, 1996.

\bibitem[CHLR23]{chen2023range}
Yeyuan Chen, Yizhi Huang, Jiatu Li, and Hanlin Ren.
\newblock Range avoidance, remote point, and hard partial truth table via satisfying-pairs algorithms.
\newblock In {\em Proceedings of the 55th Annual ACM Symposium on Theory of Computing}, pages 1058--1066, 2023.

\bibitem[CHR24]{chen2024symmetric}
Lijie Chen, Shuichi Hirahara, and Hanlin Ren.
\newblock Symmetric exponential time requires near-maximum circuit size.
\newblock In {\em Proceedings of the 56th Annual ACM Symposium on Theory of Computing}, pages 1990--1999, 2024.

\bibitem[CL24]{chen2024hardness}
Yilei Chen and Jiatu Li.
\newblock Hardness of range avoidance and remote point for restricted circuits via cryptography.
\newblock In {\em Proceedings of the 56th Annual ACM Symposium on Theory of Computing}, pages 620--629, 2024.

\bibitem[DGK{\etalchar{+}}17]{dohrau2017arrival}
J{\'e}r{\^o}me Dohrau, Bernd G{\"a}rtner, Manuel Kohler, Ji{\v{r}}{\'\i} Matou{\v{s}}ek, and Emo Welzl.
\newblock {ARRIVAL: A zero-player graph game in NP $\cap$ coNP}.
\newblock {\em A Journey Through Discrete Mathematics: A Tribute to Ji{\v{r}}{\'\i} Matou{\v{s}}ek}, pages 367--374, 2017.

\bibitem[DQY11]{dang2011computational}
Chuangyin Dang, Qi~Qi, and Yinyu Ye.
\newblock Computational models and complexities of tarski’s fixed points.
\newblock Technical report, Technical report, 2011.

\bibitem[EPRY20]{etessami2020tarski}
Kousha Etessami, Christos Papadimitriou, Aviad Rubinstein, and Mihalis Yannakakis.
\newblock Tarski’s theorem, supermodular games, and the complexity of equilibria.
\newblock In {\em 11th Innovations in Theoretical Computer Science Conference (ITCS 2020)}, pages 18--1. Schloss Dagstuhl--Leibniz-Zentrum f{\"u}r Informatik, 2020.

\bibitem[FGMS20]{fearnley2020unique}
John Fearnley, Spencer Gordon, Ruta Mehta, and Rahul Savani.
\newblock Unique end of potential line.
\newblock {\em Journal of Computer and System Sciences}, 114:1--35, 2020.

\bibitem[GGNS23]{gajulapalli2023range}
Karthik Gajulapalli, Alexander Golovnev, Satyajeet Nagargoje, and Sidhant Saraogi.
\newblock Range avoidance for constant depth circuits: Hardness and algorithms.
\newblock {\em Approximation, Randomization, and Combinatorial Optimization. Algorithms and Techniques}, 2023.

\bibitem[GHH{\etalchar{+}}18]{gartner2018arrival}
Bernd G{\"a}rtner, Thomas~Dueholm Hansen, Pavel Hub{\'a}cek, Karel Kr{\'a}l, Hagar Mosaad, and Veronika Sl{\'\i}vov{\'a}.
\newblock Arrival: Next stop in cls.
\newblock In {\em 45th International Colloquium on Automata, Languages, and Programming (ICALP 2018)}. Schloss-Dagstuhl-Leibniz Zentrum f{\"u}r Informatik, 2018.

\bibitem[GLV24]{gajulapalli2024oblivious}
Karthik Gajulapalli, Zeyong Li, and Ilya Volkovich.
\newblock Oblivious complexity classes revisited: Lower bounds and hierarchies.
\newblock In {\em 44th IARCS Annual Conference on Foundations of Software Technology and Theoretical Computer Science}, page~25, 2024.

\bibitem[GLW22]{guruswami2022range}
Venkatesan Guruswami, Xin Lyu, and Xiuhan Wang.
\newblock Range avoidance for low-depth circuits and connections to pseudorandomness.
\newblock {\em ACM Transactions on Computation Theory}, 2022.

\bibitem[HMR22]{harsha2022downward}
Prahladh Harsha, Daniel Mitropolsky, and Alon Rosen.
\newblock Downward self-reducibility in tfnp.
\newblock {\em arXiv preprint arXiv:2209.10509}, 2022.

\bibitem[HV25]{hirsch2025upper}
Edward~A Hirsch and Ilya Volkovich.
\newblock Upper and lower bounds for the linear ordering principle.
\newblock {\em arXiv preprint arXiv:2503.19188}, 2025.

\bibitem[ILW23]{ilango2023indistinguishability}
Rahul Ilango, Jiatu Li, and R~Ryan Williams.
\newblock Indistinguishability obfuscation, range avoidance, and bounded arithmetic.
\newblock In {\em Proceedings of the 55th Annual ACM Symposium on Theory of Computing}, pages 1076--1089, 2023.

\bibitem[JW24]{jin2024faster}
Ce~Jin and Hongxun Wu.
\newblock A faster algorithm for pigeonhole equal sums.
\newblock {\em arXiv preprint arXiv:2403.19117}, 2024.

\bibitem[{Kar}17]{karthik2017arrival}
{Karthik C. S.}
\newblock Did the train reach its destination: The complexity of finding a witness.
\newblock {\em Information Processing Letters}, 121:17--21, 2017.

\bibitem[KKMP21]{kleinberg2021total}
Robert Kleinberg, Oliver Korten, Daniel Mitropolsky, and Christos Papadimitriou.
\newblock Total functions in the polynomial hierarchy.
\newblock In {\em 12th Innovations in Theoretical Computer Science Conference (ITCS 2021)}. Schloss Dagstuhl-Leibniz-Zentrum f{\"u}r Informatik, 2021.

\bibitem[Kor22a]{korten2022derandomization}
Oliver Korten.
\newblock Derandomization from time-space tradeoffs.
\newblock In {\em 37th Computational Complexity Conference (CCC 2022)}. Schloss-Dagstuhl-Leibniz Zentrum f{\"u}r Informatik, 2022.

\bibitem[Kor22b]{korten2022hardest}
Oliver Korten.
\newblock The hardest explicit construction.
\newblock In {\em 2021 IEEE 62nd Annual Symposium on Foundations of Computer Science (FOCS)}, pages 433--444. IEEE, 2022.

\bibitem[KP24]{korten2024LOP}
Oliver Korten and Toniann Pitassi.
\newblock Strong vs. weak range avoidance and the linear ordering principle.
\newblock In {\em 2024 IEEE 65th Annual Symposium on Foundations of Computer Science (FOCS)}, pages 1388--1407, 2024.

\bibitem[KVM99]{klivans1999graph}
Adam~R Klivans and Dieter Van~Melkebeek.
\newblock Graph nonisomorphism has subexponential size proofs unless the polynomial-time hierarchy collapses.
\newblock In {\em Proceedings of the Thirty-First Annual ACM Symposium on Theory of Computing}, pages 659--667, 1999.

\bibitem[Lan53]{landau1953dominance}
Hyman~Garshin Landau.
\newblock On dominance relations and the structure of animal societies: Iii the condition for a score structure.
\newblock {\em The bulletin of mathematical biophysics}, 15:143--148, 1953.

\bibitem[Li24]{li2024symmetric}
Zeyong Li.
\newblock Symmetric exponential time requires near-maximum circuit size: Simplified, truly uniform.
\newblock In {\em Proceedings of the 56th Annual ACM Symposium on Theory of Computing}, pages 2000--2007, 2024.

\bibitem[LY22]{li20221}
Jiatu Li and Tianqi Yang.
\newblock $3.1n-o(n)$ circuit lower bounds for explicit functions.
\newblock In {\em Proceedings of the 54th Annual ACM SIGACT Symposium on Theory of Computing}, pages 1180--1193, 2022.

\bibitem[MPS23]{mande2023randomized}
Nikhil~S Mande, Manaswi Paraashar, and Nitin Saurabh.
\newblock Randomized and quantum query complexities of finding a king in a tournament.
\newblock {\em arXiv preprint arXiv:2308.02472}, 2023.

\bibitem[PPY23]{pasarkarshortchoice2023}
Amol Pasarkar, Christos Papadimitriou, and Mihalis Yannakakis.
\newblock {Extremal Combinatorics, Iterated Pigeonhole Arguments and Generalizations of PPP}.
\newblock In Yael Tauman~Kalai, editor, {\em 14th Innovations in Theoretical Computer Science Conference (ITCS 2023)}, volume 251 of {\em Leibniz International Proceedings in Informatics (LIPIcs)}, pages 88:1--88:20, Dagstuhl, Germany, 2023. Schloss Dagstuhl -- Leibniz-Zentrum f{\"u}r Informatik.

\bibitem[RS98]{russell1998symmetric}
Alexander Russell and Ravi Sundaram.
\newblock Symmetric alternation captures bpp.
\newblock {\em computational complexity}, 7(2):152--162, 1998.

\bibitem[RSW22]{ren2022range}
Hanlin Ren, Rahul Santhanam, and Zhikun Wang.
\newblock On the range avoidance problem for circuits.
\newblock In {\em 2022 IEEE 63rd Annual Symposium on Foundations of Computer Science (FOCS)}, pages 640--650. IEEE, 2022.

\bibitem[SSW03]{shen2003searching}
Jian Shen, Li~Sheng, and Jie Wu.
\newblock Searching for sorted sequences of kings in tournaments.
\newblock {\em SIAM Journal on Computing}, 32(5):1201--1209, 2003.

\bibitem[STW58]{samelson1958partition}
Hans Samelson, Robert~M Thrall, and Oscar Wesler.
\newblock A partition theorem for euclidean n-space.
\newblock {\em Proceedings of the American Mathematical Society}, 9(5):805--807, 1958.

\bibitem[Tar55]{tarski1955lattice}
Alfred Tarski.
\newblock A lattice-theoretical fixpoint theorem and its applications.
\newblock 1955.

\bibitem[Zha24]{zhang2024pigeonhole}
Stan Zhang.
\newblock Pigeonhole equal subset sum in \({O}^*(2^{n/3}) \).
\newblock Master's thesis, Massachusetts Institute of Technology, 2024.

\end{thebibliography}

\end{document}